\documentclass{article}
\usepackage[utf8]{inputenc}

\usepackage[german,french,english]{babel}
\usepackage[T1]{fontenc}

\usepackage{fullpage}

\usepackage{amsmath}
\usepackage{amssymb}
\usepackage{amsthm}
\usepackage{stmaryrd}

\newtheoremstyle{itstyle}
  {1.5\topsep}{1.5\topsep}                                          
  {\itshape}{}{\bfseries}                               
  {.}{.5em}                                  
  {\thmname{#1} \thmnumber{#2}\thmnote{ (#3)}}  
\newtheoremstyle{notitstyle}
  {1.5\topsep}{1.5\topsep}                                          
  {}{}{\bfseries}                               
  {.}{.5em}                                  
  {\thmname{#1} \thmnumber{#2}\thmnote{ (#3)}}  

\theoremstyle{itstyle}
\newtheorem{theorem}{Theorem}[section]
\newtheorem{lemma}[theorem]{Lemma}
\newtheorem{proposition}[theorem]{Proposition}
\newtheorem{corollary}[theorem]{Corollary}

\theoremstyle{notitstyle}
\newtheorem{definition}[theorem]{Definition}

\newtheorem{assumption}[theorem]{Assumption}

\newtheorem{example}[theorem]{Example}

\usepackage{graphicx}

\usepackage{tikz}
\usetikzlibrary{decorations.pathmorphing,positioning,arrows,calc,chains,fit,shapes,automata,decorations.pathreplacing,calligraphy}

\usepackage{tikz-cd}
\tikzset{
    symbol/.style={
        draw=none,
        every to/.append style={
            edge node={node [sloped, allow upside down, auto=false]{$#1$}}}
    }
}
\tikzset{
modal/.style={
shorten >=1pt,
shorten <=1pt,
auto,
node distance=1.5cm,
semithick
},
world/.style={circle,draw,minimum size=0.5cm,fill=gray!15},
circ/.style={circle,draw,minimum size=0.5cm},
point/.style={circle,draw,inner sep=0.5mm,fill=black},
reflexive above/.style={->,loop,looseness=7,in=120,out=60},
reflexive below/.style={->,loop,looseness=7,in=240,out=300},
reflexive left/.style={->,loop,looseness=7,in=150,out=210},
reflexive right/.style={->,loop,looseness=7,in=30,out=330},
epi/.style={->>},
mono/.style={{Hooks[right]}->},
mono2/.style={{Hooks[left]}->},
rightmorph/.style={{Arc Barb}[reversed]-{Arc Barb}},
leftmorph/.style={-{Arc Barb}{Arc Barb}}
}

\newcommand\leftmorph[1][1.4em]{
\tikz[baseline=-0.7ex,shorten <=2pt, shorten >=2pt]
\draw[leftmorph] (0,0) -- (#1,0);
}
\newcommand\rightmorph[1][1.4em]{
\tikz[baseline=-0.7ex,shorten <=2pt, shorten >=2pt]
\draw[rightmorph] (0,0) -- (#1,0);
}

\usepackage{hyperref}
\hypersetup{
    colorlinks=true,
    linkcolor=blue,
    filecolor=magenta,      
    urlcolor=cyan,
}

\let\epsilon\varepsilon
\let\phi\varphi

\newcommand\slominski{S\l{}omi\'n{}ski}
\newcommand\adamek{Ad\'{a}mek}

\DeclareMathOperator*{\dom}{dom}

\title{Languages and Recognition in a Category with Factorisation}
\author{Harsh Beohar, Mike Cruchten and Georg Struth}
\date{\today}

\begin{document}

\maketitle

\begin{abstract}
  Language recognition by homomorphisms is a central construction of
  algebraic language theory. Initially studied for monoids and
  semigroups, it has subsequently been expanded to other algebraic
  structures.  Our new categorical account is based on fibrations,
  which have already seen other applications in automata theory.
  Languages and surjective homomorphisms give indeed rise to two
  fibrations, and the notion of language recognition is stable under
  reindexing.  We develop this framework in a category with a
  factorisation system and address two main technical questions in the
  fibrational setting. First, we provide sufficient conditions under
  which languages have syntactic quotients (which is a generalisation
  of syntactic congruences) and we show how such quotients can be
  described in some concrete cases using a result by
  \slominski{}. Second, we introduce sufficient conditions under which
  (regular) languages are closed under certain $\mathcal{J}$-limits
  and $\mathcal{J}$-colimits.
\end{abstract}

\section{Introduction}

Regular languages, their acceptance by finite automata and their
recognition by finite monoids are fundamental to computer
science. They have been studied from the point of view of
combinatorics, universal algebra, logic and increasingly category
theory, the language in which much of modern mathematics is written.
Categories allow describing automata and languages uniformly using
abstract concepts and deriving their properties from general
constructions.  An early use of categories in automata theory can be
found in Goguen's work~\cite{Goguen72:MachinesInCategory}. 
Automata internal to a category
were introduced by Arbib and Manes~\cite{arbib:1980:machines, arbib:1975:adjointMachines}
and further developed by Ad{\'a}mek and
Trnkov{\'a}~\cite{adamek:1990:automataAndAlgebrasInCategories}.
More recently, Rutten's coalgebraic perspective on automata 
\cite{rutten:2000:universalCoalgebra}  
has inspired extensive research (cf.\thinspace \cite{jacobs:2017:introductionToCoalgebra}).
Algebraic language theory, more specifically, studies the recognition
of languages by monoid homomorphisms.  Categorifications of this
approach include Boja\'nczyk's work on recognition by
monads~\cite{bojanczyk:2015:recognisableLanguagesOverMonads} and that
of Ad{\'a}mek, Milius and Urbat on recognition by monoid objects in symmetric
monoidal closed categories \cite{adamek:2018:catApproach}, for a few
paradigmatic examples.

Our motivation for introducing categorical concepts, and in particular
fibrations, to algebraic language theory is slightly different.  It is
standard to consider languages on the free monoid $A^\ast$ on an
alphabet $A$. It is then natural to consider the set of all languages
on $A^\ast$ as a fibre above $A^\ast$. Further, languages are usually
recognised by monoid homomorphisms from $A^\ast$, and such maps can
again be modelled as fibres above $A^\ast$. These two fibres can then
be combined into a single fibre above $A^\ast$, which is formed by
pairs of languages and monoid homomorphisms recognising the
former. Finally, this view can be extended to relate notions of
language and recognition with different alphabets.  It is, for
instance, known that regular languages are closed under inverse
images: if $h\colon B^\ast\to A^\ast$ is a monoid homomorphism and
$L\subseteq A^\ast$ is regular, then so is
$h^{-1}(L)\subseteq B^\ast$.  The proof of this result reveals that,
while the inverse image of a language is obviously again a language,
recognising maps can be reindexed along $h$ as well. From a
categorical point of view, it seems therefore natural to model this
situation in terms of a fibration.

In light of these observations, our main conceptual contribution is a
new categorical account of language recognition in terms of
Grothendieck fibrations.

Before describing our approach in more detail, we wish to point out
that he general idea of using fibrations in automata theory is not
new: Melliès and coworkers have used them in language
theory~\cite{mellies:2023:parsingAsLifting,mellies:2026:ccfibrationOfHigherOrderRegLang};
Barr and Wells describe their connection with wreath products in the
context of Krohn-Rhodes theory~\cite{barr:1990:categoryTheory}. Our
particular use of fibrations for defining algebraic language
recognition is, to our knowledge, novel.

Inspired by classical algebraic language theory, we replace monoids by
$T$-algebras for some monad $T$ on a category $\mathbb C$ and work
with a factorisation structure on $\mathbb C$. More specifically, we
use the left morphisms from the factorisation structure that are also
$T$-morphisms (in the Eilenberg-Moore category $\mathbb C^T$), which
we call \emph{quotients}, as recognisers. We construct a category
$\mathbb Q(T)$ of quotients (Definition~\ref{defn:QuotientCategory})
that can be arranged as a fibration
(Proposition~\ref{prop:qFibration}) whenever we assume that $T$
preserves left morphisms.

Furthermore, we define languages over an algebra simply as
characteristic functions into a fixed object $\Omega$ in $\mathbb
C$. This view gives rise to a `language' fibration $\mathbb L(T)$
(Definition~\ref{defn:LanguageFibration}) over the Eilenberg-Moore
category $\mathbb C^T$. Finally, we define a `recognition' fibration
$\mathbb R(T)$ (Definition~\ref{defn:RecFibration}) with respect to
the language and quotient fibrations, such that its fibres over any
algebra encode the recognition of a language by a quotient
(Definition~\ref{defn:languageRecognition}). This setup can be
summarised as in Figure \ref{eq:story}, in which each arrow in the diagram
corresponds to a fibration.

\begin{figure}[t]
  \centering
  \begin{tikzpicture}[modal]
    \node[] at (0,2) (a) [label=above:{}] {$\mathbb{R}(T)$};
    \node[] at (0,0) (b) [label=above:{}] {$\mathbb{Q}(T)$};
    \node[] at (3,2) (c) [label=above:{}] {$\mathbb{L}(T)$};
    \node[] at (3,0) (d) [label=above:{}] {$\mathbb{C}^T$};
    \path[->] (a) edge (b);
    \path[->] (a) edge (c);
    \path[->] (a) edge (d);
    \path[->] (b) edge (d);
    \path[->] (c) edge (d);
  \end{tikzpicture}
  \caption{The quotient, language and recognition fibrations.}
  \label{eq:story}
\end{figure}

Apart from these conceptual contributions, we study three technical
questions to which our fibrational framework seems most relevant.
\begin{itemize}
\item We introduce sufficient conditions for the existence of
  syntactic quotients (generalisations of syntactic congruences) in the
  presence of a factorisation system (Theorem
  \ref{thm:weakSyntacticObj}). We show that, when the category is equipped with
  a proper factorisation system, then these conditions follow
  from previous results by
  \adamek{}~\cite{adamek:1979:cogenerationOfAlgebras}.

\item We provide an explicit description of
  syntactic quotients via an interior operator (due to
  \slominski{}~\cite{slominski:1974:greatestCongruence}) for all
  concrete cases under consideration including algebras and ordered algebras
  over a finitary signature. We define the interior operator using the internal
  language of fibrations and \slominski{}'s elementary translations
  (Proposition~\ref{prop:slominski}).

\item We study regular languages relative to a distinguished class of
  finite objects.  We identify sufficient conditions under which
  regular languages form a fibration as well as conditions under which
  they are closed under certain limits and colimits (Theorem
  \ref{thm:limitsColimitsRegular}).
\end{itemize}

Our fibrational definition of recognition
(Definition \ref{defn:languageRecognition}) generalises the standard definition
of language recognition by monoid homomorphisms, with the exception that
we replace monoid homomorphisms by left $T$-morphisms for a given orthogonal
factorisation system and monad $T$. By instantiating our framework with
a trivial factorisation system we recapture the recognition by monoid
homomorphisms, while an instantiation to surjections and injections yields
the recognition by surjective monoid homomorphisms. Most of our examples
are worked out by considering the standard epi-mono factorisation over
the concrete categories under consideration.

We end this introduction with a discussion of related work.  The
existence of syntactic objects (e.g. syntactic monoids, minimal
automata and so forth) has been studied previously in different contexts.
Colcombet and Petri\c{s}an define automata as functors whose domain
category describes the automaton
structure~\cite{colcombet:2020:automataMinimization}. They discuss the
existence of minimal automata in the presence of initial and final
objects as well as a factorisation system, and they apply their techniques to
monoids in $\mathbf{Set}$ to describe the syntactic monoid for a
language on $A^\ast$.

\adamek{}, Milius and Urbat develop a categorical framework for
syntactic monoids in varieties of (ordered)
algebras~\cite{adamek:2018:catApproach}. These correspond to
Eilenberg-Moore categories of commutative, finitary monads over
$\mathbf{Set}$ or strongly finitary monads over $\mathbf{Pos}$ (which
are known to be symmetric monoidal closed), which leads the authors to
work abstractly in symmetric monoidal categories. Our approach, in
contrast, allows weaker assumptions, in particular non-commutative
monads. In addition, the existence of syntactic monoids in
\cite{adamek:2018:catApproach} relies on a stability condition and the
closure of epimorphisms under wide pushouts and tensor products. 

Bojanczyk's work on recognisable languages over monads features a
theorem on the existence of syntactic morphisms for a monad in the
sorted
setting~\cite{bojanczyk:2015:recognisableLanguagesOverMonads}. This
work uses polynomials over an algebra and is implicitly related to
(elementary) translations, as used in \slominski{}'s construction.
Translations have been widely used elsewhere for defining syntactic
congruences in algebraic language theory, see
\cite{almeida:1990:pseudovarieties,wilke:1993:algebraic} for early
references.  A non-categorical description of syntactic congruences in
the multisorted setting can be found in
\cite{vidal:2020:congruenceBasedProofs}.

Our work on the existence of syntactic quotients (Section
\ref{sec:syntactic}) is related to
\cite{adamek:2018:catApproach}. However, while \adamek{} et al.\
establish a result (\cite[Theorem 3.9]{adamek:2018:catApproach}) akin
to Proposition \ref{prop:quotJoins}, they do not obtain a right
adjoint to the forgetful functor, which we established in Corollary
\ref{cor:uHasRightAdj}. Our syntactic quotient result follows from
this right adjoint in the presence of the factorisation system (see
Theorem \ref{thm:weakSyntacticObj}). Bojanczyk's results in
\cite{bojanczyk:2015:recognisableLanguagesOverMonads} are closely
related to those in our section on \slominski{}'s construction
(Section \ref{sec:syntactic}).
While Bojanczyk's polynomials are more categorical than our treatment of
translations, it only applies to the multisorted setting,
whereas our work also covers the ordered setting.

\section{Preliminaries}

In this section we fix some notation, introduce factorisation systems
\cite{adamek:1990:joyOfCats} and recall some basic concepts related to
fibred categories \cite{jacobs:1999:categoricalLogic}.

We assume familiarity with basic category
theory~\cite{adamek:1990:joyOfCats,awodey:2010:categoryTheory,maclane98}.
Throughout this article we work with locally small categories. We
write $\mathbb{C}^T$ for the Eilenberg-Moore category of a monad
$(T,\eta,\mu)$ on the category $\mathbb{C}$, $\text{Alg}(G)$ for the
category of $G$-algebras of the functor $G:\mathbb{C}\to \mathbb{C}$,
and $\mathcal{A}$ for the $T$-algebra $(A,f)$ with carrier $A$ and
algebra map $f\colon TA\to A$. We write $\mathbb{C}\simeq \mathbb{D}$
if the categories $\mathbb{C}$ and $\mathbb{D}$ are (naturally)
equivalent.

\begin{definition}
  Let $\mathbb{C}$ be a category.
\begin{enumerate}
\item An \emph{(orthogonal) factorisation system} on $\mathbb{C}$ is a
  pair $(\mathcal{L},\mathcal{R})$ of a class of left
  morphisms $\mathcal{L}$ and a class of right morphisms $\mathcal{R}$
  such that
\begin{enumerate}
 \item $\mathcal{L}$ and $\mathcal{R}$ contain all isomorphisms and are closed under composition,
 \item  every morphism $f$ in $\mathbb{C}$ factorises as $f=m\circ e$ with $e\in \mathcal{L}$ and
      $m\in \mathcal{R}$,
    \item for each commuting square
 \begin{center}
    \begin{tikzpicture}[modal,scale=0.9]
      \node[] at (0,0) (a) [label=above:{}] {$A$};
      \node[] at (2,0) (b) [label=above:{}] {$B$};
      \node[] at (0,2) (c) [label=above:{}] {$C$};
      \node[] at (2,2) (d) [label=above:{}] {$D$};
      \path[rightmorph] (a) edge node[below]{$m$} (b);
      \path[->] (c) edge node[left]{$u$} (a);
      \path[leftmorph] (c) edge node[]{$e$} (d);
      \path[->] (d) edge node[]{$v$} (b);
      \path[->,dashed] (d) edge node[]{$w$} (a);
    \end{tikzpicture}
  \end{center}
with $e\in \mathcal{L}$ and $m\in \mathcal{R}$,
  there is a unique filler morphism $w$ for which both
  triangles commute.
\end{enumerate}
\item The factorisation system $(\mathcal{L},\mathcal{R})$ is \emph{proper} if
  $\mathcal{L} \subseteq \mathcal{E}$ and
  $\mathcal{R} \subseteq \mathcal{M}$, where $\mathcal{E}$ denotes the
  class of epimorphisms and $\mathcal{M}$ the class of monomorphisms
  in $\mathbb{C}$.
\end{enumerate}
\end{definition}

We indicate left and right morphisms using arrows $\leftmorph $ and
$\rightmorph $, respectively while we reserve $\twoheadrightarrow $ and
$\hookrightarrow $ for epimorphisms and monomorphisms, respectively.
We use $e$ (with indices) exclusively for left morphism, and $m,n$
(with indices) exclusively for right morphisms.

\begin{example}
  Denote by $\text{Mor}(\mathbb{C})$ and $\text{Iso}(\mathbb{C})$ the
  collection of all morphisms and isomorphism in a category
  $\mathbb{C}$ respectively. The pair
  $(\text{Mor}(\mathbb{C}),\text{Iso}(\mathbb{C}))$ is the
  \emph{trivial factorisation system on $\mathbb{C}$}. It is obvious that
  $\text{Mor}(\mathbb{C})$ and $\text{Iso}(\mathbb{C})$ are closed
  under composition and contain all isomorphisms. Moreover every
  morphism $f\colon A\to B$ can trivially be factored as
  $f = \text{id}_B\circ f$. For the filler property, let
  $v\circ e = m\circ u$ with $e$ a morphism and $m$ and isomorphism
  with inverse $m^{-1}$. Then the filler morphism
  is $m^{-1}\circ v$ for
 $m^{-1}\circ v \circ e = m^{-1}\circ m\circ u = u$ and $m\circ m^{-1}\circ v = v$.
  This factorisation system is in general not proper as the left morphisms
  may not be epimorphisms.
\end{example}

\begin{definition}
  Let $p\colon \mathbb{E}\to \mathbb{C}$ be a functor.
  \begin{enumerate}
  \item A morphism $f\colon X\to Y$ in $\mathbb{E}$ is
    \emph{Cartesian} over $u\colon I\to J$ in $\mathbb{C}$ if $p(f)=u$
    and, for any $g\colon Z\to Y$ and $w\colon p(Z)\to J$ such that
    $u\circ w = p(g)$, there exists a unique $h\colon Z\to X$ in
    $\mathbb{E}$ such that $w=p(h)$ and $f\circ h = g$. The morphism $f$ is
    \emph{Cartesian} if it is Cartesian over $p(f)$.

\item A functor $p$
is a \emph{(Grothendieck) fibration} if for every $Y$ in $\mathbb{E}$ and
$u\colon I\to p(Y)$ in $\mathbb{C}$ there is a Cartesian morphism
$f\colon X\to Y$ above $u$.

\item A \emph{cleavage} is a choice of Cartesian morphism
$\overline{u}(Y)\colon u^\ast(Y)\to Y$ above $u$, for every $Y$ and $u$.
A fibration is \emph{cloven} if it has a cleavage.
\end{enumerate}
\end{definition}

For any fibration $p\colon \mathbb{E}\to \mathbb{C}$, the \emph{fibre}
above an object $I$ of $\mathbb{C}$ is the category $\mathbb{E}_I$,
whose objects are those objects in $\mathbb{E}$ that map to $I$ under
$p$, and whose morphisms are those morphisms in $\mathbb{E}$ that map
to $\text{id}_I$ under $p$. If $p$ is
cloven and $u\colon I\to J$ in $\mathbb{C}$, then the \emph{reindexing
  map} $u^\ast\colon \mathbb{E}_J\to \mathbb{E}_I$ is a functor.

Each cloven fibration $p$ gives rise to a contravariant pseudofunctor
$\Phi\colon\mathbb{C}^{\text{op}}\to \mathbf{Cat}$ called
\emph{indexed category}, a `functor' into the bicategory
$\mathbf{Cat}$ for which associativity and identity laws only hold up
to natural isomorphisms. It sends $I$ to $\mathbb{E}_I$ and
$u\colon I\to J$ to $u^\ast$. 
We assume no additional knowledge of bicategories and
refer to \cite{johnson:2021:2DimensionalCategories} for more details.
Conversely, each indexed category $\Phi$
induces a fibration via the Grothendieck construction.  For
this, one forms the category $\int\Phi$, which has pairs $(I,X)$ in
$\mathbb{C} \times \Phi(I)$ as objects, while morphisms
$(I,X)\to (J,Y)$ are pairs $(u,f)$ such that $u\colon I\to J$ in
$\mathbb{C}$ and $f\colon X\to \Phi(u)(Y)$ in $\Phi(I)$.  The
projection $\int\Phi\to \mathbb{C}$ is then a cloven fibration.

A \emph{morphism of fibrations} from $p\colon \mathbb{E}\to \mathbb{C}$ to
$q\colon \mathbb{D}\to \mathbb{B}$ is a pair of functors $(K,H)$ with
$K\colon \mathbb{C}\to \mathbb{B}$ and $H\colon \mathbb{E}\to \mathbb{D}$
such that $K\circ p = q\circ H$ and such that $H$ maps Cartesian morphism in
$\mathbb{E}$ to Cartesian morphisms in $\mathbb{D}$. The functor $H$
is \emph{fibred}: it restricts to a functor $H_I\colon \mathbb{E}_I\to \mathbb{D}_{K(I)}$
for each object $I$ of $\mathbb{C}$.

Finally, a fibration $p\colon \mathbb{E}\to \mathbb{C}$ has
\emph{fibred $\mathcal{J}$-limits ($\mathcal{J}$-colimits)} if each fibre has $\mathcal{J}$-limits
($\mathcal{J}$-colimits) and they are preserved under reindexing.

\begin{example}
  Let $\mathbb{C}$ be any category. Recall that the \emph{arrow category}
  $\mathbb{C}^{\to}$ has morphisms $f\colon A\to B$ in
  $\mathbb{C}$ as objects and that a morphism from $f\colon A\to B$ to
  $g\colon C\to D$ in $\mathbb{C}^{\to}$ is a pair of morphisms $(h,i)$ (both in
  $\mathbb{C}$) such that $g\circ h = i\circ f$. The composition in
  $\mathbb{C}^{\to}$ is pointwise. The domain functor
  $\dom\colon \mathbb{C}^\to \to \mathbb{C}$ sends morphisms to their domains
  and a pair of morphisms to their first component. Let $f\colon J\to X$
  be an object in $\mathbb{C}^\to$ and $u\colon I\to J$ a morphism in $\mathbb{C}$.
  We show that $(u,\text{id}_X)\colon f\circ u \to f$ is a Cartesian
  morphism above $u$. It is clear that $(u,\text{id}_X)$ is a well-defined
  morphism in $\mathbb{C}^\to$ as $f\circ u = \text{id}_X\circ f\circ u$
  and it is above $u$ as $\dom(u,\text{id}_X)=u$. Next let $(v,i)\colon g\to f$
  in $\mathbb{C}^\to$ and $w\colon K\to I$ such that $u\circ w = v$ as in
  the diagram below. We have to show that there is a unique morphism
  $Y\to X$ (shown as a dashed arrow in the diagram), which makes
  everything commute.
  \begin{center}
  \begin{tikzpicture}[modal]
  \node[] at (0,0) (a) [label=above:{}] {$K$};
  \node[] at (0,2) (b) [label=above:{}] {$Y$};
  \node[] at (3,0) (c) [label=above:{}] {$I$};
  \node[] at (3,2) (d) [label=above:{}] {$X$};
  \node[] at (6,0) (e) [label=above:{}] {$J$};
  \node[] at (6,2) (f) [label=above:{}] {$X$};
  \path[->] (a) edge node[]{$g$}(b);
  \path[->] (a) edge node[]{$w$}(c);
  \path[->] (a) edge[bend right] node[below]{$v$} (e);
  \path[->,dashed] (b) edge node[]{}(d);
  \path[->] (b) edge[bend left] node[]{$i$} (f);
  \path[->] (c) edge node[]{$u$}(e);
  \path[->] (c) edge node[]{$f\circ u$}(d);
  \path[->] (d) edge node[below]{$\text{id}_X$}(f);
  \path[->] (e) edge node[right]{$f$}(f);
  \end{tikzpicture}
  \end{center}
  By choosing $i\colon Y\to X$ for the dashed arrow, everything commutes,
  hence $(u,\text{id}_X)\colon f\circ u\to f$ is Cartesian over $u$.
  Therefore, $\dom$ is a fibration, which we call \emph{domain fibration}.
  The fibre $\dom_J$ above $J$ is (isomorphic to) the coslice category
  $J/\mathbb{C}$. For each $f\colon J\to X$ and $u\colon I\to J$, we
  have a choice of Cartesian lifting with $\overline{u}(f)=(u,\text{id}_X)$
  and $u^\ast(f)=u\circ f$. This makes $\dom$ a cloven fibration.
\end{example}

\section{Quotients}\label{sec:quotients}

We now introduce quotients for a monad $(T,\eta,\mu)$ over a category
$\mathbb{C}$ with a factorisation system. We construct a category of
quotients equipped with a domain functor to the Eilenberg-Moore
category $\mathbb{C}^T$, which forms a fibration whenever $T$
preserves left morphisms. We instantiate this quotient fibration to
varieties of (ordered) algebras and to semiautomata in $\mathbf{Set}$.
Our results for monads apply equally to endofunctors $G$. The
Eilenberg-Moore category is then replaced by the category of
$G$-algebras.

\subsection{The Quotient Fibration}

For the remainder of this article we fix a category $\mathbb{C}$ with
a factorisation system and a monad $(T,\eta,\mu)$ on $\mathbb{C}$. Following
\cite{adamek:1979:cogenerationOfAlgebras}, a \emph{quotient} of a
$T$-algebra $\mathcal{A}$ is a left $T$-morphism
$e\colon \mathcal{A}\leftmorph \mathcal{B}$.

Quotients thus give us a notion of generalised epimorphism, which is
instrumental for language recognition below.  First we assemble quotients into a
category.

\begin{definition}\label{defn:QuotientCategory}
  Let $T$ be a monad. The category $\mathbb{Q}(T)$ of \emph{quotients
    over $\mathbb{C}^T$} has quotients (left $T$-morphisms)
  $e\colon \mathcal{A}\leftmorph \mathcal{B}$ as objects and pairs
  $(f,g)\colon e\to e'$ as morphisms, where $f$ in $\mathbb{C}^T$, $g$
  in $\mathbb{C}$ and $e'\circ f = g\circ e$.  Composition in
  $\mathbb{Q}(T)$ is defined component-wise; the identity morphism on
  $e:\mathcal{A}\leftmorph \mathcal{B}$ is the pair
  $(\text{id}_A,\text{id}_B)$.
\end{definition}

We write $\mathbb{Q}$ for $\mathbb{Q}(T)$ when $T$ is
clear from the context.  The  obvious \emph{domain functor}
$\dom_\mathbb{Q}\colon \mathbb{Q}\to \mathbb{C}^T$ maps a
quotient $e\colon \mathcal{A}\leftmorph \mathcal{B}$ to $\mathcal{A}$
and a morphism $(f,g)$ to $f$. The factorisation system on
$\mathbb{C}$ allows reindexing
$e\colon \mathcal{B} \leftmorph \mathcal{C}$ along
$h\colon \mathcal{A}\to \mathcal{B}$ by factoring $e\circ h$ as
$m\circ e'$ provided that $e'$ is again a
$T$-morphism. This is shown in the following diagram:
\begin{center}
  \begin{tikzpicture}[modal,scale=0.9]
    \node[] at (0,0) (a) [label=above:{}] {$D$};
    \node[] at (2,0) (b) [label=above:{}] {$C$};
    \node[] at (0,2) (c) [label=above:{}] {$A$};
    \node[] at (2,2) (d) [label=above:{}] {$B$};
    \path[rightmorph] (a) edge node[below]{$m$} (b);
    \path[leftmorph] (c) edge node[left]{$e'$} (a);
    \path[->] (c) edge node[]{$h$} (d);
    \path[leftmorph] (d) edge node[]{$e$} (b);
  \end{tikzpicture}
\end{center}

\begin{proposition}\label{prop:qFibration}
  The domain functor $\dom_\mathbb{Q}$ is a fibration whenever $T$
  preserves left morphisms:
  $T\mathcal{L}\subseteq \mathcal{L}$.
\end{proposition}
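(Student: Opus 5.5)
Given a quotient $e\colon \mathcal{B}\leftmorph \mathcal{C}$ with $\mathcal{C}=(C,c)$ and a $T$-morphism $h\colon \mathcal{A}\to\mathcal{B}$ with $\mathcal{A}=(A,a)$, I will factor $e\circ h = m\circ e'$ in $\mathbb{C}$ with $e'\colon A\leftmorph D$ and $m\colon D\rightmorph C$. The plan is to equip $D$ with a $T$-algebra structure making $e'$ (and $m$) $T$-morphisms. Then I will show that $(h,m)\colon e'\to e$ is Cartesian over $h$.

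\emph{Step 1 (algebra structure on $D$).} Consider the square with left side $Te'\colon TA\to TD$, which lies in $\mathcal{L}$ by the hypothesis $T\mathcal{L}\subseteq\mathcal{L}$. Its right side is $m\colon D\rightmorph C$, its top is $e'\circ a\colon TA\to D$, and its bottom is $c\circ Tm\colon TD\to C$. The square commutes:
\[
m\circ e'\circ a = e\circ h\circ a = e\circ b\circ Th = c\circ Te\circ Th = c\circ Tm\circ Te',
\]
using that $e$ and $h$ are $T$-morphisms. The diagonal fill-in gives a unique $d\colon TD\to D$ with $d\circ Te' = e'\circ a$ and $m\circ d = c\circ Tm$. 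So $e'$ and $m$ are $T$-algebra homomorphisms, provided $(D,d)$ is an Eilenberg--Moore algebra.

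\emph{Step 2 (the algebra laws).} This is the main obstacle, since $e'$ need not be epi and the laws must come from uniqueness of fillers. For the unit law, both $d\circ\eta_D$ and $\mathrm{id}_D$ fill the square with left side $e'$ and right side $m$. Indeed, $d\circ\eta_D\circ e' = d\circ Te'\circ\eta_A = e'\circ a\circ\eta_A = e'$, and $m\circ d\circ\eta_D = c\circ Tm\circ\eta_D = c\circ\eta_C\circ m = m$. For associativity, both $d\circ Td$ and $d\circ\mu_D$ fill the square with left side $TTe'$ (again in $\mathcal{L}$) and right side $m$. One checks, using naturality and the corresponding laws for $a$ and $c$, that both composites precomposed with $TTe'$ equal $e'\circ a\circ Ta$, and postcomposed with $m$ equal $c\circ Tc\circ TTm$. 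Uniqueness of fillers then yields the equality.

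\emph{Step 3 (Cartesianness).} Let $(g,k)\colon e''\to e$ with $e''\colon \mathcal{X}\leftmorph Y$, and let $w\colon\mathcal{X}\to\mathcal{A}$ be a $T$-morphism with $h\circ w = g$. We need a unique $j\colon Y\to D$ with $j\circ e'' = e'\circ w$ and $m\circ j = k$. The square with left side $e''$, right side $m$, top $e'\circ w$ and bottom $k$ commutes, because $m\circ e'\circ w = e\circ g = k\circ e''$. Orthogonality therefore provides exactly such a unique $j$, and uniqueness of $(w,j)$ follows because its first component is forced to be $w$. Hence $\dom_{\mathbb{Q}}$ is a fibration. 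A cleavage is obtained by choosing factorisations, which makes $\dom_{\mathbb{Q}}$ cloven.
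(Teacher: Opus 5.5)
Your proof is correct and follows the paper's argument essentially step for step. You equip the middle object with the structure map obtained as the diagonal filler of the square along $Te'$ and $m$, derive the unit and multiplication laws from uniqueness of fillers for the squares along $e'$ and $TTe'$ (so no epi assumption is needed), and obtain Cartesianness from one further diagonal fill-in.
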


\begin{proof}
  Let $f\colon \mathcal{A}\to \mathcal{A}'$ and
  $e'\colon\mathcal{A}'\leftmorph \mathcal{B}'$. We need to show that $f$ has
  a Cartesian morphism with codomain $e'$ above it. We first take the
  factorisation of $e\circ f$ in $\mathbb{C}$, which leads to the following diagram
  (that means $e,m$ are morphisms in $\mathbb{C}$).
  \begin{center}
    \begin{tikzpicture}[modal]
      \node[] at (0,0) (a) [label=above:{}] {$B$};
      \node[] at (2,0) (b) [label=above:{}] {$\mathcal{B}'$};
      \node[] at (0,2) (c) [label=above:{}] {$\mathcal{A}$};
      \node[] at (2,2) (d) [label=above:{}] {$\mathcal{A}'$};
      \path[rightmorph] (a) edge node[below]{$m$} (b);
      \path[leftmorph] (c) edge node[left]{$e$} (a);
      \path[->] (c) edge node[]{$f$} (d);
      \path[leftmorph] (d) edge node[]{$e'$} (b);
    \end{tikzpicture}
  \end{center}
  It remains to show that $(f,m)$ is a Cartesian morphism above $f$. We
  proceed in two steps: first we show that $(f,m)$ is a well-defined
  morphism in $\mathbb{Q}(T)$ and second that it is Cartesian above $f$.
  
  First we show that $e$ is a $T$-morphism, so that $(f,m)$ is a morphism
  in $\mathbb{Q}(T)$. Let $a\colon TA\to A$ and
  $b\colon TB'\to B'$ be the algebra maps of $\mathcal{A}$ and
  $\mathcal{B}'$, respectively. By viewing the above square in the
  underlying category $\mathbb{C}$, we obtain the
  diagram on the left below. Each face in this partial cube commutes, and
  we need to show that there is a structure map $c\colon TB\to B$ for which
 the whole cube commutes. It is constructed using the filler
  property of the square on the right, using the fact that $Te$
  is a left morphism. From this square it also follows that $e$ and
  $m$ behave like $T$-morphisms in the sense that they are $T$-morphism if
  the filler map $c$ is a $T$-algebra.
  \begin{center}
    \begin{tikzpicture}[modal]
      \node[] at (0,0) (a) [label=above:{}] {$A$};
      \node[] at (2,0) (b) [label=above:{}] {$B$};
      \node[] at (1,1) (c) [label=above:{}] {$A'$};
      \node[] at (3,1) (d) [label=above:{}] {$B'$};
      \node[] at (0,2) (e) [label=above:{}] {$TA$};
      \node[] at (2,2) (f) [label=above:{}] {$TB$};
      \node[] at (1,3) (g) [label=above:{}] {$TA'$};
      \node[] at (3,3) (h) [label=above:{}] {$TB'$};
      \path[leftmorph] (a) edge node[below]{$e$} (b);
      \path[->] (a) edge node[]{$f$} (c);
      \path[rightmorph] (b) edge node[below right]{$m$} (d);
      \path[leftmorph] (c) edge node[]{$e'$} (d);
      \path[->] (e) edge node[left]{$a$} (a);
      \path[leftmorph] (e) edge node[]{} (f);
      \path[->] (e) edge node[]{} (g);
      \path[->] (f) edge node[]{} (h);
      \path[->] (g) edge node[]{} (c);
      \path[leftmorph] (g) edge node[]{} (h);
      \path[->] (h) edge node[]{$b$} (d);
      \node[] at (6,0.5) (a1) [label=above:{}] {$B$};
      \node[] at (8,0.5) (b1) [label=above:{}] {$B'$};
      \node[] at (6,2.5) (e1) [label=above:{}] {$TA$};
      \node[] at (8,2.5) (f1) [label=above:{}] {$TB$};
      \path[rightmorph] (a1) edge node[below]{$m$} (b1);
      \path[->] (e1) edge node[left]{$e\circ a$} (a1);
      \path[leftmorph] (e1) edge node[]{$Te$} (f1);
      \path[->] (f1) edge node[]{$b\circ T(m)$} (b1);
      \path[->,dashed] (f1) edge node[]{$c$} (a1);
    \end{tikzpicture}
  \end{center}
  As a structure map of a $T$ morphism, $c$ must interact properly
  with the unit and the multiplication of the monad. For the unit
  equation, we have to check that $c\circ \eta_{B}=\text{id}_{B}$, for
  which we consider the commuting square:
  \begin{center}
    \begin{tikzpicture}[modal]
      \node[] at (0,0) (a1) [label=above:{}] {$B$};
      \node[] at (2,0) (b1) [label=above:{}] {$B'$};
      \node[] at (0,2) (e1) [label=above:{}] {$A$};
      \node[] at (2,2) (f1) [label=above:{}] {$B$};
      \path[rightmorph] (a1) edge node[below]{$m$} (b1);
      \path[->] (e1) edge node[left]{$e$} (a1);
      \path[leftmorph] (e1) edge node[]{$e$} (f1);
      \path[->] (f1) edge node[]{$m$} (b1);
      \path[->,dashed] (f1) edge node[]{} (a1);
    \end{tikzpicture}
  \end{center}
  Clearly the unique filler is the identity map $\text{id}_{B}$. We claim that
  $c\circ \eta_{B}$ also makes both triangles commute. Using the fact that
  $m$ and $e$ behave like $T$-morphisms, the naturality of $\eta$ and the fact that
  $a$ and $b$ are $T$-algebras, 
  \begin{align*}
    c \circ \eta_{B}\circ e &= c \circ T(e)\circ \eta_A = e\circ a\circ \eta_A = e, \\
    m\circ c\circ\eta_{B} &= b\circ T(m)\circ \eta_{B}=b\circ \eta_{B'}\circ m = m.
  \end{align*}
  Hence $c\circ\eta_{B}=\text{id}_{B}$ follows from  uniqueness of the filler.

  For the multiplication, we must show that $c\circ T(c) = c \circ \mu_{B}$. We consider
  the commuting square 
  \begin{center}
    \begin{tikzpicture}[modal]
      \node[] at (0,0) (a1) [label=above:{}] {$B$};
      \node[] at (2,0) (b1) [label=above:{}] {$B'$};
      \node[] at (0,2) (e1) [label=above:{}] {$TTA$};
      \node[] at (2,2) (f1) [label=above:{}] {$TTB$};
      \path[rightmorph] (a1) edge node[below]{$m$} (b1);
      \path[->] (e1) edge node[left]{$e\circ a \circ T(a)$} (a1);
      \path[leftmorph] (e1) edge node[]{$TT(e)$} (f1);
      \path[->] (f1) edge node[]{$b \circ T(b) \circ TT(m)$} (b1);
      \path[->,dashed] (f1) edge node[]{} (a1);
    \end{tikzpicture}
  \end{center}
  and need to show that the two triangles commute.  Using the fact
  that all the squares in the cube diagram above commute,
  the naturality of $\mu$, and the fact that $a$ and $b$ are $T$-algebras we get
 \begin{align*}
        (c \circ\mu_{B})\circ TT(e) &= c\circ T(e)\circ \mu_A \\
        &= e\circ a\circ \mu_A \\
        &= e\circ a\circ T(a) \\
        &= c\circ T(e)\circ T(a) \\
        &= c\circ T(e\circ a) \\
        &= c \circ T(c \circ T(e)) \\
        &= (c\circ T(c)) \circ TT(e).
      \end{align*}
  and
      \begin{align*}
        m \circ (c\circ \mu_{B}) &= b\circ T(m)\circ \mu_{B} \\
        &= b\circ \mu_{B'}\circ TT(m) \\
        &= b \circ T(b)\circ TT(m) \\
        &= b\circ T(b \circ T(m)) \\
        &= b\circ T(m\circ c) \\
        &= b\circ T(m)\circ T(c) \\
        &= m\circ (c\circ T(c)).
      \end{align*}
  Hence $c\circ \mu^{B}=c\circ T(c)$, and $c$ is a $T$-algebra.

  Second, we need to check that $(f,m)$ is Cartesian. Let
  $e''\colon \mathcal{A}''\leftmorph \mathcal{B}''$, $(g,h):e''\to e'$ and
  $j:\mathcal{A}''\to \mathcal{A}$. We must show that there is a unique
  $k:\mathcal{B}''\to \mathcal{B}$ such that $(f,m)\circ (j,k)=(g,h)$ as
  in the diagram on the left below. This unique map is obtained via
  the filler property as shown in the diagram on the right.
  \begin{center}
    \begin{tikzpicture}[modal]
      \node[] at (0,2.5) (a) [label=above:{}] {$\mathcal{A}''$};
      \node[] at (0,0.5) (b) [label=above:{}] {$\mathcal{B}''$};
      \node[] at (1.5,2) (c) [label=above:{}] {$\mathcal{A}$};
      \node[] at (1.5,0) (d) [label=above:{}] {$\mathcal{B}$};
      \node[] at (3.5,2) (e) [label=above:{}] {$\mathcal{A}'$};
      \node[] at (3.5,0) (f) [label=above:{}] {$\mathcal{B}'$};
      \path[leftmorph] (a) edge node[left]{$e''$} (b);
      \path[->] (a) edge node[]{$j$} (c);
      \path[->] (a) edge[bend left] node[]{$g$} (e);
      \path[->,dashed] (b) edge node[below left]{$k$} (d);
      \path[->] (b) edge[bend left] node[above right]{$h$} (f);
      \path[leftmorph] (c) edge node[above left]{$e$} (d);
      \path[->] (c) edge node[]{$f$} (e);
      \path[rightmorph] (d) edge node[below]{$m$} (f);
      \path[leftmorph] (e) edge node[]{$e'$} (f);
      \node[] at (7,0.25) (a1) [label=above:{}] {$\mathcal{B}$};
      \node[] at (9,0.25) (b1) [label=above:{}] {$\mathcal{B}'$};
      \node[] at (7,2.25) (e1) [label=above:{}] {$\mathcal{A}''$};
      \node[] at (9,2.25) (f1) [label=above:{}] {$\mathcal{B}''$};
      \path[rightmorph] (a1) edge node[below]{$m$} (b1);
      \path[->] (e1) edge node[left]{$e\circ j$} (a1);
      \path[leftmorph] (e1) edge node[]{$e''$} (f1);
      \path[->] (f1) edge node[]{$h$} (b1);
      \path[->,dashed] (f1) edge node[]{$k$} (a1);
    \end{tikzpicture}
  \end{center}
  The fact that the right hand square commutes follows as
  $h\circ e'' = e'\circ g = e'\circ f\circ j = m\circ e\circ j$.
  Hence $\dom_\mathbb{Q}$ is a fibration.
\end{proof}

\begin{assumption}
  We henceforth assume that $T$ preserves left
  morphisms.
\end{assumption} 

We further assume that the quotient fibration is cloven, so that
we can choose a factorisation $e\circ h = m\circ h^\ast
e$. Note that the fibre $\mathbb{Q}_\mathcal{A}$ above $\mathcal{A}$ has
quotients $e\colon \mathcal{A}\leftmorph \mathcal{B}$ as objects; while a morphism
$h\colon e\to e'$ in $\mathbb{Q}_\mathcal{A}$ is a morphism
$h\colon B\to B'$ in $\mathbb{C}$ such that
$e'\circ h = e$.
In particular, when $T=\text{Id}_\mathbb{C}$, the quotient fibration reduces to
the \emph{cosubobject fibration} $\dom_{\text{CoSub}}\colon\text{CoSub}\to \mathbb{C}$, where
$\text{CoSub}=\mathbb{Q}(\text{Id}_\mathbb{C})$, whose objects are
cosubobjects, that is, 
left morphisms out of a given object in $\mathbb C$.

\subsection{Examples}

Next we instantiate the quotient fibration to monads on $\mathbf{Set}$ and
$\mathbf{Pos}$ and to an endofunctor $G$ on $\mathbf{Set}$ whose $G$-algebras
are deterministic semiautomata.

\paragraph*{Finitary Monads on $\mathbf{Set}$}\label{eg:finitaryMonads}

The connection between finitary monads and varieties of algebras allows us
to describe quotient fibres using notions from universal algebra
\cite{linton:1965:aspectsOfEquationalCategories}.
With the standard epi-mono factorisation system in $\mathbf{Set}$,
endofunctors preserve surjections, as surjections are split
epimorphisms.  It is known that the Eilenberg-Moore category
$\mathbf{Set}^T$ of a finitary monad $T$ on $\mathbf{Set}$ is
equivalent to a variety of algebras.
A quotient in $\mathbb{Q}_\mathcal{A}$ corresponds to a surjective homomorphism
$e\colon \mathcal{A}\twoheadrightarrow \mathcal{B}$ in the
variety of algebras.

Each such homomorphism induces a congruence on $\mathcal{A}$ via its
kernel.  Conversely, each congruence $\equiv$ on $\mathcal{A}$ induces
a canonical surjective homomorphism
$\phi\colon \mathcal{A}\twoheadrightarrow \mathcal{A}/{\equiv}$.  Both
constructions are functorial with respect to $\mathbb{Q}_\mathcal{A}$
and the complete lattice of congruences $\text{Cong}(\mathcal{A})$ as
a category with inclusions as morphisms.

\begin{proposition}\label{prop:quotCongEq}
 For every $T$-algebra
  $\mathcal{A}$, $\mathbb{Q}_\mathcal{A}\simeq \text{Cong}(\mathcal{A})$.
\end{proposition}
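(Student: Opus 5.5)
We establish the equivalence with the two functors just described, writing $K\colon\mathbb{Q}_\mathcal{A}\to\text{Cong}(\mathcal{A})$ for the kernel construction and $Q\colon\text{Cong}(\mathcal{A})\to\mathbb{Q}_\mathcal{A}$ for the canonical quotient. We identify $\mathbf{Set}^T$ with the corresponding variety, so a quotient in $\mathbb{Q}_\mathcal{A}$ is a surjective homomorphism $e\colon\mathcal{A}\twoheadrightarrow\mathcal{B}$. A morphism $h\colon e\to e'$ in the fibre is a function $h\colon B\to B'$ with $h\circ e=e'$. Since $\text{Cong}(\mathcal{A})$ is a poset, we must fix the direction of its morphisms. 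A morphism $e\to e'$ forces $\ker e\subseteq\ker e'$, so $K$ sends $e$ to $\ker e=\{(x,y)\mid e(x)=e(y)\}$, and we read $\text{Cong}(\mathcal{A})$ with an arrow $\theta\to\theta'$ whenever $\theta\subseteq\theta'$.

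First we check that $K$ is well defined. The kernel of a homomorphism is a congruence by the standard argument. If $h\circ e=e'$, then $e(x)=e(y)$ gives $e'(x)=e'(y)$, which is the inclusion we need. Functoriality is automatic because the target is a preorder.

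Next we define $Q({\equiv})$ as the canonical map $\phi\colon\mathcal{A}\twoheadrightarrow\mathcal{A}/{\equiv}$. For an inclusion ${\equiv}\subseteq{\equiv'}$, the map $[x]\mapsto[x]'$ is well defined and commutes with the two projections, so it is a morphism in the fibre. Preservation of identities and composition follows from the surjectivity of $\phi$. Note that a fibre morphism is only required to be a function in $\mathbb{C}$. Even so, between quotients it is automatically a homomorphism: $e$ is a surjective homomorphism and $h\circ e=e'$ is one, so $h$ is one too. We do not actually need this fact.

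It remains to give the natural isomorphisms.
\begin{itemize}
\item $\ker\phi_{\equiv}={\equiv}$ holds on the nose, so $K\circ Q=\text{Id}$.
\item For $e\colon\mathcal{A}\twoheadrightarrow\mathcal{B}$, the induced map $\bar e\colon\mathcal{A}/\ker e\to B$, $[x]\mapsto e(x)$, is a well-defined bijection because $e$ is surjective (first isomorphism theorem). It satisfies $\bar e\circ\phi=e$, so it is an isomorphism in $\mathbb{Q}_\mathcal{A}$. Its inverse is a function satisfying $\bar e^{-1}\circ e=\phi$, which is again a fibre morphism.
\end{itemize}
For naturality, take $h\colon e\to e'$. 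The two composites $\mathcal{A}/\ker e\to B'$ both agree after precomposition with the surjection $\phi$, since both equal $e'$, and therefore they are equal.

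The only point needing care is matching the variance and the morphism conventions of the fibre with those of $\text{Cong}(\mathcal{A})$. Everything else is the first isomorphism theorem. The fibre is a preorder, since fibre morphisms out of a surjection are unique, and so the equivalence is in fact an equivalence of preorders with the poset $\text{Cong}(\mathcal{A})$.
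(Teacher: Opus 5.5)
Your proposal is correct and follows the same route as the paper. Both use the kernel functor $K$ with the canonical quotient $\phi\colon\mathcal{A}\twoheadrightarrow\mathcal{A}/{\equiv}$ as pseudo-inverse, and both rest on the first isomorphism theorem. You simply spell out what the paper delegates to ``standard isomorphism theorems'', including the correct variance $h\circ e=e'$ for fibre morphisms (the paper's prose writes $e'\circ h=e$, a typo) and the observation that naturality is automatic because the fibre is thin.
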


\begin{proof}
  Define the functor $K\colon \mathbb{Q}_\mathcal{A}\to \text{Cong}(\mathcal{A})$
  which sends a surjective homomorphism $e\colon \mathcal{A}\twoheadrightarrow \mathcal{B}$
  to the congruence $\ker e = \{(a,a')\in A\times A\mid e(a)=e(a')\}$.
  (For functoriality, it suffices to show that
  $e\leq e' \implies \ker e\subseteq \ker e'$, which is straightforward.)
It has a pseudo-inverse
  $C\colon \text{Cong}(\mathcal{A})\to \mathbb{Q}_\mathcal{A}$, which sends a
  congruence $\equiv$ on $\mathcal{A}$ to the canonical homomorphism
  $\phi\colon \mathcal{A}\to \mathcal{A}/{\equiv}$. The fact that
  $\mathbb{Q}_\mathcal{A}$ and $\text{Cong}(\mathcal{A})$
  form an equivalence follows from standard isomorphism theorems of
  universal algebra~\cite{burris:1981:universalAlgebra}.
\end{proof}

Recall that a category is \emph{bicomplete} if it is both complete and
cocomplete.

\begin{corollary}\label{cor:quotCongComplete}
  All quotient fibres of $T$-algebras are bicomplete.
\end{corollary}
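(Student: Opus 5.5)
The plan is to transport the question along the equivalence of Proposition~\ref{prop:quotCongEq}. Bicompleteness is invariant under equivalence of categories: a limit or colimit of a diagram in $\text{Cong}(\mathcal{A})$ can be carried across the equivalence and remains a limit or colimit. So it suffices to show that $\text{Cong}(\mathcal{A})$, regarded as a category with inclusions as morphisms, is complete and cocomplete, for every $T$-algebra $\mathcal{A}$ in the variety corresponding to the finitary monad $T$ on $\mathbf{Set}$.

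A poset viewed as a category is bicomplete exactly when it is a complete lattice. In a poset, the limit of any small diagram is the meet of the objects in it, since all parallel arrows coincide and cones are just lower bounds. Likewise, the colimit is the join. Empty diagrams are handled by the top and bottom elements. So the key step is to check that $\text{Cong}(\mathcal{A})$ is a complete lattice.

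For this, I would show that an arbitrary intersection of congruences is again a congruence. It is an equivalence relation, and it is compatible with each basic operation because each member of the family is. The empty intersection is taken to be $A\times A$, which is the top element. A poset with all meets is a complete lattice: the join of a family is the meet of all its upper bounds, and this set of upper bounds always contains $A\times A$. This is the standard universal-algebra fact cited via~\cite{burris:1981:universalAlgebra}.

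The main obstacle is the bookkeeping in the reduction itself. I need to ensure that $\mathbb{Q}_\mathcal{A}$ is equivalent to a poset, meaning it is thin, and that colimits in the fibre agree with joins of congruences. Thinness follows because each quotient $e$ is surjective. Hence any morphism $h$ with $e'\circ h=e$ is determined by $e$ and $e'$, which is exactly the order relation used in the functoriality argument of Proposition~\ref{prop:quotCongEq}. Once this is settled, the equivalence gives the corollary directly.
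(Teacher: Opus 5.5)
Your proposal is correct and is essentially the paper's argument: you transport along the equivalence $\mathbb{Q}_\mathcal{A}\simeq\text{Cong}(\mathcal{A})$ of Proposition~\ref{prop:quotCongEq} and use that $\text{Cong}(\mathcal{A})$ is a complete lattice. You justify that fact via closure of congruences under arbitrary intersections, whereas the paper simply cites Burris--Sankappanavar. Your separate thinness check is harmless but redundant, since thinness is already implied by the equivalence; note also that the uniqueness of $h$ really comes from $h\circ e=e'$ with $e$ surjective, not from the composite $e'\circ h=e$ you wrote.
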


\begin{proof}
  Let $\mathcal{A}$ be a $T$-algebra.
  As $\mathbb{Q}_\mathcal{A}$ is equivalent to $\text{Cong}(\mathcal{A})$, it
  forms a complete lattice \cite{burris:1981:universalAlgebra}.
\end{proof}

Meets in $\text{Cong}(\mathcal{A})$ are
intersections, while joins are transitive closures of
unions. The reindexing of a homomorphism
$h\colon\mathcal{A}\to \mathcal{B}$ is given by the inverse image map
$h^{-1}\colon \text{Cong}(\mathcal{B})\to \text{Cong}(\mathcal{A})$.

\begin{proposition}\label{prop:quotCongColimitsAndFibredLimits}
  The quotient fibration $\dom_\mathbb{Q}$ has fibred limits.
\end{proposition}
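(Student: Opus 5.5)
We work in the setting of this subsection: $T$ is a finitary monad on $\mathbf{Set}$ with the epi-mono factorisation system. By Proposition~\ref{prop:quotCongEq} each fibre $\mathbb{Q}_\mathcal{A}$ is equivalent to $\text{Cong}(\mathcal{A})$, and by Corollary~\ref{cor:quotCongComplete} each fibre is complete. A fibration has fibred limits when every fibre has limits and reindexing preserves them, so only the preservation condition needs proof. Since the fibres are (equivalent to) complete lattices, limits in a fibre are meets, that is, arbitrary intersections of congruences. We must therefore show that reindexing along any $T$-morphism $h\colon\mathcal{A}\to\mathcal{B}$ preserves arbitrary intersections.

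The first step is to make the reindexing functor explicit under the equivalence. Let $e\colon\mathcal{B}\twoheadrightarrow\mathcal{C}$ be a quotient. The cleavage factorises $e\circ h$ as $m\circ h^\ast e$ with $h^\ast e$ surjective and $m$ injective. Since $m$ is injective, $\ker(h^\ast e)=\ker(e\circ h)=\{(a,a')\mid e(h(a))=e(h(a'))\}=(h\times h)^{-1}(\ker e)$. Transported along $K$ and $C$, reindexing is therefore the inverse image map $h^{-1}\colon\text{Cong}(\mathcal{B})\to\text{Cong}(\mathcal{A})$, as noted just before the statement. One routine point needs checking: $(h\times h)^{-1}$ of a congruence is again a congruence. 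This holds because $h$ is a homomorphism, and it also follows directly from the fact that $\ker(e\circ h)$ is the kernel of a homomorphism.

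The second step is preservation itself. For a family $(\equiv_i)_{i\in I}$ of congruences on $\mathcal{B}$, we have $(h\times h)^{-1}\bigl(\bigcap_i\equiv_i\bigr)=\bigcap_i(h\times h)^{-1}(\equiv_i)$, because inverse images commute with arbitrary intersections of subsets of $B\times B$. The empty family must be included: its meet is the top congruence $B\times B$, whose inverse image is $A\times A$, the top congruence on $\mathcal{A}$. So terminal objects are preserved too.

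The main obstacle is the transfer through the equivalence. Limits in $\mathbb{Q}_\mathcal{A}$ are only determined up to isomorphism, and the cleavage-induced functor $h^\ast$ agrees with $h^{-1}$ only up to the natural isomorphism $h^{-1}\circ K\cong K\circ h^\ast$. To handle this, note that an equivalence of categories preserves and reflects limits. Together with the natural isomorphism, this shows that $h^\ast$ preserves limits if and only if $h^{-1}$ does. The latter was established in the previous step, so the proposition follows.
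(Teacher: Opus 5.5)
Your proof is correct and follows the same route as the paper. You identify each fibre with $\text{Cong}(\mathcal{A})$, note that limits are intersections and that reindexing is the inverse image $(h\times h)^{-1}$, and observe that inverse images preserve arbitrary intersections. You also spell out details the paper leaves implicit: the kernel computation $\ker(h^\ast e)=(h\times h)^{-1}(\ker e)$, the empty meet, and the transfer along the equivalence. The paper instead justifies preservation of intersections by the existence of a left adjoint, namely the direct image.
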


\begin{proof}
  The limits in $\mathbb{Q}_\mathcal{A}$ correspond to meets in $\text{Cong}(\mathcal{A})$.
  Meets in $\text{Cong}(\mathcal{A})$ are  intersections of sets,
  which are trivially preserved by inverse images (as every inverse
  image function over sets has a left adjoint - the direct image
  function).
\end{proof}

\begin{example}
  We give an example showing that colimits may not be preserved by reindexing.
  Consider the identity monad and look at the function
  $h\colon 2\to 1$. Note that a congruence in this case is simply an
  equivalence relation. Take the initial object $\Delta_1$ in $\text{EqRel}(1)$.
  Clearly  $h^{-1}(\Delta_1)=2\times 2$, which is
  different from the initial object $\Delta_2$. So $h^{-1}$ does not
  preserve arbitrary colimits.
\end{example}

Similar results hold for finitary monads on $\mathbf{Set}^k$ 
for $k\in \mathbb{N}$.  These correspond to varieties of multisorted algebras
over finitary signatures. An instance of interest is the variety of
Wilke algebras, which corresponds to the Eilenberg-Moore category
for a monad on $\mathbf{Set}^2$. We return to this example in
Section~\ref{sec:syntactic}. 

\paragraph*{Strongly Finitary Monads on $\mathbf{Pos}$}\label{eg:stronglyFinitaryMonads}

As in the previous example, we describe quotients using notions from
universal algebra.  Recall that $\mathbf{Pos}$ is the category of
posets and order-preserving maps.  Consider the factorisation system
given by order-preserving surjections and injections.  The
Eilenberg-Moore category $\mathbf{Pos}^T$ for a strongly finitary
monad $T$ on $\mathbf{Pos}$ is equivalent to a variety of ordered
algebras~\cite{adamek:2022:categoricalViewOrderedAlgebras}. One can
think of $T$ as an inequational theory over a signature with constants
and function symbols. Each such $T$ preserves order-preserving 
surjections, as it can be obtained by lifting a finitary monad on
$\mathbf{Set}$ \cite{adamek:2022:categoricalViewOrderedAlgebras}. The
quotient fibre $\mathbb{Q}_{(\mathcal{A},\leq_A)}$ has surjective
order-preserving homomorphisms
$h\colon (\mathcal{A},\leq_A)\twoheadrightarrow (\mathcal{B},\leq_B)$
as objects; a morphism between two such homomorphisms $h$ and
$g\colon (\mathcal{A},\leq _A)\twoheadrightarrow (\mathcal{C},\leq_C)$
is a (necessarily surjective) order-preserving homomorphism
$f\colon (\mathcal{B},\leq_B)\twoheadrightarrow
(\mathcal{C},\leq_{C})$ such that $g\circ f = h$.

In $\mathbf{Pos}$,
instead of a correspondence between quotients and congruences, there
is a correspondence with precongruences; see
\cite{bloom:1976:varietiesOfOrderedAlgebras} for details.

A \emph{precongruence} on the set $A$
is a preorder ${\lesssim}\supseteq {\leq_A}$ that is compatible
with the operations on $\mathcal{A}$.
Each order-preserving surjective homomorphism
$h\colon (\mathcal{A},\leq_A)\twoheadrightarrow (\mathcal{B},\leq_B)$
induces a preorder $\lesssim$ on $A$ via $a \lesssim a' \iff h(a)\leq_B h(a')$,
which is a precongruence on $(\mathcal{A},\leq_A)$. Conversely,
every such precongruence $\lesssim$ gives rise to a canonical homomorphism
$\phi\colon (\mathcal{A},\leq_A)\twoheadrightarrow (\mathcal{A}/{\approx},\lesssim)$
with ${\approx}={\lesssim}\cap {\lesssim}^\text{c}$.
Hence for every ordered algebra $(\mathcal{A},\leq)$, the category
$\mathbb{Q}_{(\mathcal{A},\leq)}$ is equivalent to the complete lattice
$\text{PreCong}(\mathcal{A},\leq)$ of precongruences on
$(\mathcal{A},\leq)$ (as a category with inclusions as
morphisms).

\begin{proposition}\label{prop:quotPreCongEq}
  For every ordered $T$-algebra $\mathcal{A}$,
  $\mathbb{Q}_\mathcal{A}\simeq \text{PreCong}(\mathcal{A})$.
\end{proposition}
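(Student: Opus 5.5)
We mirror the proof of Proposition~\ref{prop:quotCongEq}, replacing kernels by induced preorders. Define $K\colon \mathbb{Q}_\mathcal{A}\to \text{PreCong}(\mathcal{A})$ by sending an order-preserving surjective homomorphism $h\colon(\mathcal{A},\leq_A)\twoheadrightarrow(\mathcal{B},\leq_B)$ to the relation $a\lesssim_h a' \iff h(a)\leq_B h(a')$. First we check that $\lesssim_h$ is a precongruence. It is a preorder because $\leq_B$ is one. It contains $\leq_A$ because $h$ is monotone. It is compatible with each operation $\sigma$ because $h$ is a homomorphism and the operations of $\mathcal{B}$ are monotone in each argument: if $a_i\lesssim_h a_i'$ for all $i$, then $h(\sigma(\bar a))=\sigma(h\bar a)\leq_B\sigma(h\bar a')=h(\sigma(\bar a'))$.

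Next we check functoriality. A morphism $f\colon h\to g$ in the fibre satisfies $g=f\circ h$ (the composite in the direction that typechecks) with $f$ monotone. Hence $h(a)\leq h(a')$ implies $g(a)\leq g(a')$, so $\lesssim_h\subseteq\lesssim_g$. Since $\text{PreCong}(\mathcal{A})$ is a poset, nothing more needs to be checked.

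Conversely, define $C$ by sending a precongruence $\lesssim$ to $\phi\colon\mathcal{A}\to\mathcal{A}/{\approx}$, where ${\approx}={\lesssim}\cap{\lesssim}^{\text{c}}$ (the symmetric part) and $[a]\leq[a']\iff a\lesssim a'$. We must verify the following.
\begin{enumerate}
\item $\approx$ is a congruence, by compatibility of $\lesssim$ in both directions.
\item The induced order on the quotient is well defined and is a partial order, since antisymmetry holds by construction.
\item The operations on the quotient are monotone, by compatibility of $\lesssim$, and the quotient satisfies the defining inequations of $T$, so that it lies in $\mathbf{Pos}^T$.
\item $\phi$ is monotone (because $\leq_A\subseteq\lesssim$) and surjective, and hence a quotient.
\end{enumerate}
For the third point, the quotient inherits the inequations from $\mathcal{A}$: an inequation $s\leq t$ holding in $\mathcal{A}$ means $s^{\mathcal{A}}(\bar a)\leq_A t^{\mathcal{A}}(\bar a)$, hence $s(\bar a)\lesssim t(\bar a)$, and therefore $[s(\bar a)]\leq[t(\bar a)]$. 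This uses the correspondence with varieties of ordered algebras from \cite{adamek:2022:categoricalViewOrderedAlgebras}. Functoriality of $C$ is immediate: if ${\lesssim}\subseteq{\lesssim'}$, the map $[a]\mapsto[a]'$ is well defined, monotone and commutes with $\phi$ and $\phi'$.

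It remains to show that $K$ and $C$ are inverse up to isomorphism. The composite $K\circ C$ is the identity, since $[a]\leq[a']$ iff $a\lesssim a'$. For $C\circ K\cong\text{id}$, given $h$, the map $\mathcal{A}/{\approx_h}\to\mathcal{B}$ with $[a]\mapsto h(a)$ is well defined and injective by antisymmetry of $\leq_B$, surjective since $h$ is, and a homomorphism. It is also an order-embedding: $[a]\leq[a']$ iff $h(a)\leq h(a')$. It is therefore an isomorphism in $\mathbf{Pos}^T$ commuting with the quotient maps. Naturality of this isomorphism is automatic because the fibre is a preorder.

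The main obstacle is the step showing that the quotient $\mathcal{A}/{\approx}$ with its induced order is genuinely an algebra in $\mathbf{Pos}^T$. Here we rely on Bloom's correspondence \cite{bloom:1976:varietiesOfOrderedAlgebras} and the equivalence with varieties of ordered algebras rather than on a direct monad-theoretic argument. A second point to keep in mind is that the isomorphism $\mathcal{A}/{\approx_h}\cong\mathcal{B}$ must be an order-isomorphism, not merely a bijective monotone map. This holds precisely because $\lesssim_h$ is defined through $\leq_B$.
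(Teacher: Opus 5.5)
Your proposal is correct and follows essentially the paper's route. The paper's proof simply cites Bloom's Proposition~1.3 \cite{bloom:1976:varietiesOfOrderedAlgebras} for exactly the correspondence $h\mapsto{\lesssim_h}$, ${\lesssim}\mapsto(\phi\colon\mathcal{A}\twoheadrightarrow\mathcal{A}/{\approx})$ sketched just before the statement; you instead verify that correspondence by hand, mirroring Proposition~\ref{prop:quotCongEq}, and rely on the identification of $\mathbf{Pos}^T$ with a variety of ordered algebras at the same point the paper does (with the trivial slip that antisymmetry of $\leq_B$ gives well-definedness of $[a]\mapsto h(a)$, while injectivity is immediate from the definition of ${\approx_h}$).
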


\begin{proof}
  This follows from \cite[Proposition 1.3]{bloom:1976:varietiesOfOrderedAlgebras}.
\end{proof}

\begin{corollary}
  All quotient fibres of ordered $T$-algebras are bicomplete.
\end{corollary}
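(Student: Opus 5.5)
By Proposition~\ref{prop:quotPreCongEq}, $\mathbb{Q}_\mathcal{A}$ is equivalent to the poset $\text{PreCong}(\mathcal{A})$, viewed as a category with inclusions as morphisms. Bicompleteness is invariant under equivalence, so it suffices to show that $\text{PreCong}(\mathcal{A},\leq)$ is a complete lattice. A complete lattice, viewed as a thin category, has all small limits (given by meets) and all small colimits (given by joins), so it is bicomplete. This follows the same route as Corollary~\ref{cor:quotCongComplete}.

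To show completeness of $\text{PreCong}(\mathcal{A},\leq)$, I will show that it is closed under arbitrary intersections inside the powerset lattice of $A\times A$; joins then exist automatically. Let $\{\lesssim_i\}_{i\in I}$ be a family of precongruences and put ${\lesssim}=\bigcap_i {\lesssim_i}$.
\begin{enumerate}
\item The relation $\lesssim$ is reflexive, since each $\lesssim_i$ contains $\leq_A$ and hence contains the diagonal.
\item It is transitive, since an intersection of transitive relations is transitive.
\item It contains $\leq_A$, since every $\lesssim_i$ does.
\item It is compatible with each $n$-ary operation $\sigma$. If $a_k\lesssim b_k$ for all $k\leq n$, then $a_k\lesssim_i b_k$ for every $i$, so $\sigma(\vec a)\lesssim_i\sigma(\vec b)$ for every $i$, and hence $\sigma(\vec a)\lesssim\sigma(\vec b)$.
\end{enumerate}
For the empty family, the intersection is taken to be $A\times A$. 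This is the top precongruence: it trivially contains $\leq_A$, is a preorder, and is compatible with all operations.

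With arbitrary meets in hand, joins are obtained in the standard way: the join of a family is the meet of all its upper bounds, which is nonempty because $A\times A$ is always an upper bound. Concretely, the join is the least precongruence containing the union of the family. Since the operations are finitary, it can also be described as the transitive closure of the compatible closure of the union, but this description is not needed here.

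There is no real obstacle. The only point needing care is to confirm that the equivalence of Proposition~\ref{prop:quotPreCongEq} is an equivalence with this poset, including the top element $A\times A$, which corresponds to the quotient onto the one-element algebra. Then small limits and colimits transfer along the equivalence, giving bicompleteness of $\mathbb{Q}_\mathcal{A}$.
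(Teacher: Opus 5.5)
Your proof is correct and follows the same route as the paper: you transfer along the equivalence $\mathbb{Q}_\mathcal{A}\simeq\text{PreCong}(\mathcal{A},\leq)$ of Proposition~\ref{prop:quotPreCongEq}, and you use the fact that a complete lattice, viewed as a thin category, is bicomplete. The only difference is that you explicitly check that $\text{PreCong}(\mathcal{A},\leq)$ is a complete lattice, via closure under arbitrary intersections with $A\times A$ as the empty meet, whereas the paper takes this for granted.
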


\begin{proof}
  Let $(\mathcal{A},\leq)$ be a ordered $T$-algebra.
  As $\mathbb{Q}_{(\mathcal{A},\leq )}$ is equivalent to the complete lattice
  $\text{PreCong}(\mathcal{A},\leq)$, it is bicomplete.
\end{proof}

Meets in $\text{PreCong}(\mathcal{A},\leq)$ are given by intersections,
joins by the transitive closure of union.
Reindexing in terms of lattices of precongruences is again
given by the inverse image map. Such maps preserve arbitrary meets,
so the following fact holds.

\begin{proposition}
  The quotient fibration $\dom_\mathbb{Q}$ has fibred limits.
\end{proposition}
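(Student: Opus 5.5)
We follow the argument of Proposition~\ref{prop:quotCongColimitsAndFibredLimits}, now for ordered algebras. Proposition~\ref{prop:quotPreCongEq} gives, for each ordered $T$-algebra $(\mathcal{A},\leq_A)$, an equivalence $\mathbb{Q}_{(\mathcal{A},\leq_A)}\simeq\text{PreCong}(\mathcal{A},\leq_A)$. Limits in a fibre are therefore meets of precongruences. These meets exist because the lattice is complete, and they are intersections: an intersection of preorders containing $\leq_A$ is again a preorder containing $\leq_A$, and compatibility with the operations is preserved under intersection. For the empty family we take the top element $A\times A$.

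Next we identify reindexing along a homomorphism $h\colon(\mathcal{A},\leq_A)\to(\mathcal{B},\leq_B)$ with inverse image. Let $e\colon\mathcal{B}\twoheadrightarrow\mathcal{C}$ correspond to the precongruence ${\lesssim_e}$ on $\mathcal{B}$. The reindexed quotient $h^\ast e$ is the surjective part of the factorisation $e\circ h=m\circ h^\ast e$, and $m$ is an order-embedding (a right morphism of the order-preserving surjection/injection factorisation, as intended in this setting). Hence
\[ h^\ast e(a)\leq h^\ast e(a') \iff m(h^\ast e(a))\leq m(h^\ast e(a')) \iff e(h(a))\leq_C e(h(a')), \]
so the precongruence of $h^\ast e$ is $(h\times h)^{-1}({\lesssim_e})$. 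Applying this to a family $({\lesssim_i})_i$ gives
\[ (h\times h)^{-1}\Bigl(\bigcap_i{\lesssim_i}\Bigr)=\bigcap_i(h\times h)^{-1}({\lesssim_i}), \]
because inverse images of functions preserve all intersections; they are right adjoints to direct images. The empty meet $B\times B$ is sent to $A\times A$. Transporting back along the equivalences shows that $h^\ast$ preserves all limits in the fibres, so $\dom_\mathbb{Q}$ has fibred limits.

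The main obstacle is the identification of reindexing with inverse image. It relies on the right morphisms in the chosen factorisation of $\mathbf{Pos}$ reflecting the order; with plain order-preserving injections this would fail. So I would state explicitly that the factorisation is into surjections and order-embeddings, which is the standard (epi, regular mono) factorisation on $\mathbf{Pos}$. A second point to check is that preservation of limits is invariant under the equivalences and the chosen cleavage. This holds because limits are determined only up to isomorphism, and in the posetal fibres isomorphic objects coincide.
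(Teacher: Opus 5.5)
Your proof is correct and follows the paper's route. The paper argues as for Proposition~\ref{prop:quotCongColimitsAndFibredLimits}: fibre limits are meets in $\text{PreCong}(\mathcal{A},\leq)$, meets are intersections, and reindexing is inverse image, which preserves intersections because direct image is left adjoint to it. Beyond that, you check explicitly that reindexing is inverse image, and you note that this needs the right class to be order-embeddings rather than merely order-preserving injections (which do not even form an orthogonal factorisation system on $\mathbf{Pos}$). Both points fill in details the paper only asserts.
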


\begin{proof}
  The argument follows along the same lines as that of Proposition
  \ref{prop:quotCongColimitsAndFibredLimits}.
\end{proof}

Particular instances of this example are ordered semigroups and
ordered monoids as discussed in
\cite{pin:1995:varietyTheoremWithoutComplementation}. 

\paragraph*{Deterministic Semiautomata}\label{eg:semiautomata}

We now fit automata into the fibrational approach. As
acceptance states are only needed for accepting languages, we drop
them for now and focus on semiautomata.

Let $(A^\ast,\cdot,\epsilon)$ be the free monoid on the set $A$.
A \emph{semiautomaton} over $A$ is a tuple
$(Q,\overline{q},\delta)$
with \emph{initial state} $\overline{q}\in Q$ and \emph{transition
  function} $\delta\colon Q\times A\to Q$, which extends from letters
to words as
$\delta\colon Q\times A^\ast \to Q$. Homomorphisms of semiautomata are
functions $f\colon Q\to Q'$ that preserve initial states and satisfy
$f(\delta(q,a))=\delta'(f(q),a)$.  Reassembling these data as
$\left[ \overline{q},\delta \right]\colon 1+Q\times A \to Q$ yields a
$G$-algebra for the endofunctor $GX=1+X\times A$ on $\mathbf{Set}$;
semiautomata homomorphisms become $G$-morphisms.

We consider the quotient fibre above the initial semiautomaton
$\mathcal{S}=(A^\ast,\epsilon,\sigma(u,a)=ua)$.

\begin{lemma}
  The function $\delta(\overline{q},-)\colon A^\ast\to Q$ is the
  unique semiautomaton morphism $\mathcal{S}\to (Q,\overline{q},\delta)$.
\end{lemma}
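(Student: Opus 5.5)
We prove existence and uniqueness separately, both by induction on the length of words in $A^\ast$. This amounts to showing that $\mathcal{S}$ is the initial $G$-algebra for $GX = 1 + X\times A$. Throughout, $\delta\colon Q\times A^\ast\to Q$ is the extension fixed in the definition of semiautomata, namely
\[
\delta(q,\epsilon)=q \quad\text{and}\quad \delta(q,ua)=\delta(\delta(q,u),a).
\]

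For existence, put $f=\delta(\overline{q},-)$ and check the two conditions defining a homomorphism of semiautomata.
\begin{itemize}
\item Initial states are preserved, since $f(\epsilon)=\delta(\overline{q},\epsilon)=\overline{q}$.
\item Transitions are preserved, since for $u\in A^\ast$ and $a\in A$ we have
\[
f(\sigma(u,a)) = f(ua) = \delta(\overline{q},ua) = \delta(\delta(\overline{q},u),a) = \delta(f(u),a).
\]
\end{itemize}
Equivalently, in $G$-algebra terms, $f\circ[\epsilon,\sigma] = [\overline{q},\delta]\circ(\text{id}_1+f\times\text{id}_A)$. This holds because the two sides agree on the summand $1$ and on the summand $A^\ast\times A$, by the two computations above.

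For uniqueness, let $g\colon \mathcal{S}\to (Q,\overline{q},\delta)$ be any semiautomaton morphism. We show $g(w)=f(w)$ by induction on the length of $w$.
\begin{itemize}
\item Base case: $g(\epsilon)=\overline{q}=f(\epsilon)$, because $g$ preserves initial states.
\item Inductive step: if $g(u)=f(u)$, then $g(ua)=g(\sigma(u,a))=\delta(g(u),a)=\delta(f(u),a)=f(ua)$.
\end{itemize}
Every word is either $\epsilon$ or of the form $ua$, so this covers all of $A^\ast$ and gives $g=f$.

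No step is a real obstacle. The only point needing care is the convention for extending $\delta$ to words. It must be the right-recursive extension $\delta(q,ua)=\delta(\delta(q,u),a)$, which matches $\sigma(u,a)=ua$ appending letters on the right. If the paper had meant a left-recursive definition, we would first prove the identity $\delta(q,ua)=\delta(\delta(q,u),a)$ by a short induction, and then proceed exactly as above.
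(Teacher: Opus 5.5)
Your proof is correct and follows the paper's argument: existence is immediate from the recursive extension of $\delta$ to words, and uniqueness goes by induction on words using $\delta(\overline{q},ua)=\delta(\delta(\overline{q},u),a)$. You spell out the existence check and the right-recursive convention, which the paper takes for granted.
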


\begin{proof}
  The function $\delta(\overline{q},-)$ is a semiautomaton
  morphism by definition. For uniqueness, suppose
  $f\colon A^\ast\to Q$ satisfies
  $f(\epsilon)=\overline{q}$ and $f(ua)=\delta(f(u),a)$. Then
  $\delta(\overline{q},-)=f$ by structural induction
  on words: the base case holds by definition, while the induction step 
  shows that $
    \delta(\overline{q},ua)=\delta(\delta(\overline{q},u),a) =
    \delta(f(u),a)=f(ua)$. 
\end{proof}

A quotient in $\mathbb{Q}_\mathcal{S}$ corresponds to a semiautomaton
$(Q,\overline{q},\delta)$ for which the unique $\delta$ is surjective,
so that
for every $q\in Q$ there exists a word $u\in A^\ast$ such that
$\delta(\overline{q},u)=q$. The category $\mathbb{Q}_\mathcal{S}$
is therefore isomorphic to the full subcategory of reachable
semiautomata.

Moreover, it is known that for each semiautomaton
$(Q,\overline{q},\delta)$, the kernel of $\delta(\overline{q},-)$ is a
right congruence on $A^\ast$.  Conversely, every right congruence is
compatible with the transition function of $\mathcal{S}$ and hence
induces a canonical semiautomaton morphism
\cite{ginzburg:1968:algebraicTheoryOfAutomata}.  By analogy with
Propositions \ref{prop:quotCongEq} and $\ref{prop:quotPreCongEq}$, we
obtain the following result.

\begin{proposition}\label{prop:quotRCongEq}
  The quotient fibre $\mathbb{Q}_\mathcal{S}$ is isomorphic to the
  full subcategory of reachable semiautomata, and equivalent to the
  lattice of right congruences $\text{RCong}(A^\ast)$ on $A$.
\end{proposition}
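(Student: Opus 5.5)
The proof has two parts: an isomorphism with reachable semiautomata, and an equivalence with right congruences. The second part mirrors Proposition~\ref{prop:quotCongEq}.

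\textbf{Isomorphism with reachable semiautomata.} First I will identify objects. An object of $\mathbb{Q}_\mathcal{S}$ is a surjective $G$-morphism $e\colon \mathcal{S}\twoheadrightarrow (Q,\overline{q},\delta)$. By the preceding lemma, $e$ must equal $\delta(\overline{q},-)$. So the object is determined by its codomain semiautomaton, and surjectivity of $e$ says exactly that this semiautomaton is reachable.

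Next I will identify morphisms. A morphism $h\colon e\to e'$ in the fibre is a function $h\colon Q\to Q'$ with $e'\circ h = e$ (composition read as in the paper). I will check that such an $h$ is automatically a semiautomaton morphism. It preserves the initial state because $h(\overline{q}) = h(e(\epsilon)) = e'(\epsilon)=\overline{q}'$. For transitions, write $q = e(u)$, which is possible by surjectivity. Then
\[ h(\delta(q,a)) = h(e(ua)) = e'(ua) = \delta'(e'(u),a) = \delta'(h(q),a). \]
Conversely, every semiautomaton morphism between reachable semiautomata commutes with the unique maps from $\mathcal{S}$, by uniqueness in the lemma. Hence the assignment $e\mapsto$ codomain is bijective on objects and fully faithful, which gives the isomorphism onto the full subcategory.

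\textbf{Equivalence with right congruences.} I will proceed as in Proposition~\ref{prop:quotCongEq}. Define $K(e)=\ker e$. This is an equivalence relation. It is a right congruence because $e(u)=e(v)$ implies
\[ e(ua)=\delta(e(u),a)=\delta(e(v),a)=e(va), \]
and the word case $w$ follows by induction on $w$. The assignment is functorial: a morphism $h$ with $e'\circ h=e$ gives $\ker e\subseteq \ker e'$.

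For the pseudo-inverse $C$, send a right congruence $\equiv$ to the semiautomaton $(A^\ast/{\equiv},[\epsilon],([u],a)\mapsto[ua])$, together with its canonical projection. The transition map is well defined precisely because $\equiv$ is a right congruence, and the projection is a surjective $G$-morphism. One composite is immediate: $KC(\equiv)={\equiv}$.

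For the other composite, the natural isomorphism $CK(e)\cong e$ is the map $[u]\mapsto e(u)$. It is well defined and injective by the definition of $\ker e$, surjective by reachability, and it commutes with the projections. Naturality is automatic because the fibre is a preorder: between any two objects there is at most one morphism, since $e$ is epi.

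The only mildly delicate step is verifying that fibre morphisms are automatically $G$-morphisms. It relies on surjectivity of $e$, and the one point to watch is the direction convention in the fibre's morphisms. Everything else is routine.
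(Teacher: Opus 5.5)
Your proposal is correct and takes the paper's route. You identify objects of $\mathbb{Q}_\mathcal{S}$ with reachable semiautomata via the uniqueness lemma, then mirror Proposition~\ref{prop:quotCongEq}, using the kernel functor and the canonical quotient semiautomaton as pseudo-inverse. The only difference is that you carry out explicitly what the paper delegates to Ginzburg and the isomorphism theorems, including the check that fibre morphisms are automatically semiautomaton morphisms; your computations there correctly read those morphisms as satisfying $h\circ e = e'$.
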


\begin{proof}
  The fact that each semiautomaton gives rise to a right congruence on
  $A^\ast$ and vice versa is known
  \cite{ginzburg:1968:algebraicTheoryOfAutomata}.  This correspondence
  becomes an equivalence of categories when restricting to reachable
  semiautomata. The proof follows that of Proposition
  \ref{prop:quotCongEq}.
\end{proof}

As the lattice $\text{RCong}(A^\ast)$ is complete, we obtain the following corollary.

\begin{corollary}
  The category $\mathbb{Q}_\mathcal{S}$ is bicomplete.
\end{corollary}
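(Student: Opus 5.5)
We transfer completeness from the lattice $\text{RCong}(A^\ast)$ along the equivalence of Proposition~\ref{prop:quotRCongEq}, exactly as in Corollary~\ref{cor:quotCongComplete}. Bicompleteness is invariant under equivalence of categories: an equivalence $F\colon \mathbb{C}\to\mathbb{D}$ with pseudo-inverse $G$ is both a left and a right adjoint. Hence it preserves and reflects all limits and colimits that exist, and a diagram in $\mathbb{Q}_\mathcal{S}$ has a (co)limit if and only if its image in $\text{RCong}(A^\ast)$ does. It therefore suffices to show that $\text{RCong}(A^\ast)$, regarded as a poset category with inclusions as morphisms, is bicomplete. In a poset category, limits of arbitrary diagrams are meets of the objects involved and colimits are joins, because all parallel morphisms coincide. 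So bicompleteness amounts to $\text{RCong}(A^\ast)$ being a complete lattice.

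To see that it is a complete lattice, I would argue directly. The total relation $A^\ast\times A^\ast$ is a right congruence. An arbitrary intersection of right congruences is again an equivalence relation, and it is compatible with right multiplication: if $u\equiv_i v$ for all $i$, then $ua\equiv_i va$ for all $i$. Hence $\text{RCong}(A^\ast)$ is closed under arbitrary meets, including the empty meet, which is the total relation. A poset with all meets is a complete lattice, with joins given as meets of upper bounds; concretely, the join is the equivalence closure of the union, which remains right-compatible.

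The only point needing care is the size and well-definedness of the objects. The equivalence of Proposition~\ref{prop:quotRCongEq} sends a reachable semiautomaton to the kernel of $\delta(\overline{q},-)$ and sends a right congruence $\equiv$ to the quotient $A^\ast/{\equiv}$ with the induced transitions. Because $\mathbb{Q}_\mathcal{S}$ is only equivalent to the lattice, not isomorphic to it, its (co)limits are determined up to isomorphism. For example, the colimit of a diagram $(e_i)$ is the quotient $A^\ast/\bigvee_i\ker e_i$, and the limit is $A^\ast/\bigcap_i \ker e_i$. This presents no obstacle, since (co)limits are only ever defined up to isomorphism, so the corollary follows.
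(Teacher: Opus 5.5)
Your proof is correct and follows the paper's route: the paper likewise transfers bicompleteness along the equivalence $\mathbb{Q}_\mathcal{S}\simeq\text{RCong}(A^\ast)$ of Proposition~\ref{prop:quotRCongEq} and uses that $\text{RCong}(A^\ast)$ is a complete lattice, exactly as in Corollary~\ref{cor:quotCongComplete}. Your direct check that right congruences are closed under arbitrary intersections, with the total relation as the empty meet, supplies the completeness of $\text{RCong}(A^\ast)$ that the paper simply asserts.
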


\begin{proof}
  Similar to that of Corollary
  \ref{cor:quotCongComplete}.
\end{proof}

Semiautomata form an instance of a finitary monad on $\mathbf{Set}$ as
in the first example. 
As $G$ is a polynomial functor
on $\mathbf{Set}$, the forgetful functor
$U\colon \text{Alg}(G)\to \mathbf{Set}$ has a left adjoint (in which
case $G$ is called \emph{varietor} \cite{adamek:1990:automataAndAlgebras}).
This adjunction defines a monad $G^\ast$ -- the \emph{free monad on $G$}.
The signature of the equational theory for $G^\ast$ is induced by $G$; it
consists of one constant $c$ and one unary function symbol $a$ per $a\in A$.
As there are no equations, the free algebras are simply term algebras. The
initial semiautomaton $\mathcal{S}$ is the free $G^\ast$-algebra on
$\emptyset$, and a congruence on $\mathcal{S}$, regarded as a $G^\ast$-algebra,
is precisely a right congruence on $A^\ast$ 
(Proposition~\ref{prop:quotRCongEq}).
Moreover, adapting this example to $\mathbf{Pos}$ leads to ordered semiautomata
(see \cite{klima:2019:varietiesOfOrderedAutomata}) that are linked
to the previous example on strongly finitary monads on $\mathbf{Pos}$.

\section{The Language and Recognition Fibrations}

We now introduce a notion of language over a $T$-algebra and
investigate the recognition of languages via quotients in the
fibrational approach. To this end we introduce language and
recognition fibrations, among others.  We instantiate this categorical
framework using the concrete examples from Section
\ref{sec:quotients}.

A \emph{language} over a $T$-algebra $\mathcal{A}$ is a morphism
$p\colon A\to\Omega$ into a fixed object $\Omega$ in $\mathbb{C}$.
The fibre of languages above $\mathcal{A}$ should therefore be some
categorical structure on $\mathbb{C}(A,\Omega)$. Here we
only impose that such hom-sets form categories as stated below.

\begin{assumption}\label{ass:languagePseudofunctor}
  Assume that $\mathbb{C}(A,\Omega)$ forms a category for each
  $A\in \mathbb{C}$, and that each $T$-morphism
  $h\colon \mathcal{A}\to \mathcal{B}$ induces a functor 
  $h^\ast\colon \mathbb{C}(B,\Omega)\to \mathbb{C}(A,\Omega)$ such
  that $\Phi\colon (\mathbb{C}^T)^{\text{op}}\to \mathbf{Cat}$ defined
  by $\Phi(\mathcal{A})=\mathbb{C}(A,\Omega)$ and $\Phi h=h^\ast$ is a
  contravariant pseudofunctor.
\end{assumption}

\begin{definition}\label{defn:LanguageFibration}
  The \emph{category of languages}
  $\mathbb{L}(T)$ is the total category $\int\Phi$
  and the \emph{language fibration} is the fibration $L\colon \mathbb{L}(T)\to \mathbb{C}^T$
  obtained from the Grothendieck construction.
\end{definition}

By the Grothendieck construction, 
the category $\mathbb{L}(T)$ has
pairs $(\mathcal{A},p\colon A\to \Omega)$ with $\mathcal{A}$ in $\mathbb{C}^T$ 
and $p$ in $\mathbb{C}$ as objects, and morphisms
$(\mathcal{A},p)\to (\mathcal{B},q)$ are pairs $(h,f)$ such that
$h\colon \mathcal{A}\to \mathcal{B}$ in $\mathbb{C}^T$ and $f\colon p\to h^\ast q$
in $\mathbb{C}(A,\Omega)$. The functor $L\colon \mathbb{L}(T)\to \mathbb{C}^T$ 
is the projection to the first component.

We drop the letter $T$ from our notation whenever it is clear from
context.  Our setup of quotients and languages allows defining
language recognition as in algebraic language theory.

\begin{definition}\label{defn:languageRecognition}
  Let $T$ be a monad and $\mathcal{A}$ a $T$-algebra. A language
  $p\colon A\to \Omega$
  is \emph{recognised by a quotient} $e\colon \mathcal{A}\leftmorph \mathcal{B}$
  if and only if there is a language $q\colon B\to\Omega$ such that
  $p= q\circ e$.
\end{definition}

This definition allows combining language and quotient fibrations
into one. For this we first take the following pullback in $\mathbf{Cat}$.
\begin{center}
  \begin{tikzpicture}[modal,scale=0.9]
    \node[] at (0,0) (a1) [label=above:{}] {$\mathbb{Q}(T)$};
    \node[] at (4,0) (b1) [label=above:{}] {$\mathbb{C}^T$};
    \node[] at (0,2) (e1) [label=above:{}] {$\mathbb{Q}(T)\times_{\mathbb{C}^T}\mathbb{L}(T)$};
    \node[] at (4,2) (f1) [label=above:{}] {$\mathbb{L}(T)$};
    \path[->] (a1) edge node[below]{$\dom_\mathbb{Q}$} (b1);
    \path[->] (e1) edge node[left]{$\pi_1$} (a1);
    \path[->] (e1) edge node[]{$\pi_2$} (f1);
    \path[->] (f1) edge node[]{$L$} (b1);
  \end{tikzpicture}
\end{center}
\begin{definition}\label{defn:RecFibration}
  The category $\mathbb{R}(T)$ is the full subcategory of
  $\mathbb{Q}(T)\times_{\mathbb{C}^T}\mathbb{L}(T)$ with triples
  $(\mathcal{A},e,p)$ as objects, such that $e\colon \mathcal{A}\leftmorph \mathcal{B}$ 
  recognises $p\colon A\to\Omega$.
\end{definition}

The functors $\pi_1$ and $\pi_2$ are fibrations as
they are obtained by a change of base construction
\cite{jacobs:1999:categoricalLogic}. Moreover, 
$\dom_\mathbb{Q} \circ \pi_1$ and $L \circ \pi_2$ are fibrations as they
are compositions of fibrations
\cite{jacobs:1999:categoricalLogic}. 
Our aim is to turn $\mathbb{R}(T)\to \mathbb{C}^T$ into a
subfibration, which requires that recognition is stable under
reindexing. A sufficient condition is that reindexing in
the language fibration $\mathbb L(T)$ is obtained by composition, so that
$h^\ast p = p\circ h$.

\begin{lemma}\label{lem:fibredRecognition}
  Let $h\colon \mathcal{A}\to \mathcal{B}$ and let $e\in \mathbb{Q}(T)_\mathcal{B}$
  recognise $p\in \mathbb{L}(T)_\mathcal{B}$.
  If $h^\ast p = p\circ h$, then $h^\ast e$ recognises
  $h^\ast p$.
\end{lemma}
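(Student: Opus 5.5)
The plan is to unwind the definitions and exhibit the required witness language directly. Let $e\colon \mathcal{B}\leftmorph \mathcal{C}$ recognise $p\colon B\to\Omega$, so by Definition~\ref{defn:languageRecognition} there is $q\colon C\to\Omega$ with $p = q\circ e$. By the cleavage fixed after Proposition~\ref{prop:qFibration}, the reindexed quotient $h^\ast e\colon \mathcal{A}\leftmorph \mathcal{D}$ comes from the chosen factorisation
\[
e\circ h = m\circ h^\ast e ,
\]
with $m\colon D\rightmorph C$ a right morphism. The proof of Proposition~\ref{prop:qFibration} shows that $h^\ast e$ is a left $T$-morphism, hence a genuine quotient of $\mathcal{A}$.

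The candidate witness for $h^\ast p$ is the language $q\circ m\colon D\to\Omega$. The key computation uses the hypothesis $h^\ast p = p\circ h$:
\[
h^\ast p = p\circ h = q\circ e\circ h = q\circ m\circ h^\ast e = (q\circ m)\circ h^\ast e .
\]
This is exactly the condition in Definition~\ref{defn:languageRecognition} for $h^\ast e$ to recognise $h^\ast p$.

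There is no real obstacle; the only points needing care are bookkeeping. First, the domain of $h^\ast e$ must be $\mathcal{A}$, so that $h^\ast e$ and $h^\ast p$ lie in fibres over the same algebra. Second, languages are arbitrary morphisms $D\to\Omega$ in $\mathbb{C}$, so $q\circ m$ qualifies as a language on the carrier $D$ without any compatibility with $T$.

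I would also remark that the argument does not depend on the choice of cleavage. Any other factorisation differs by an isomorphism, which can be absorbed into the witness language.
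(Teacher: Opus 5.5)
Your proposal is correct and follows the paper's own proof: you use the same witness $q\circ m$ (the paper calls it $p'\circ m$), obtained from the chosen factorisation $e\circ h = m\circ h^\ast e$, together with the same one-line computation $h^\ast p = p\circ h = q\circ e\circ h = (q\circ m)\circ h^\ast e$. Your remark that the argument is independent of the cleavage, since factorisations are unique up to isomorphism, is a harmless addition the paper does not make.
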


\begin{proof}
  As $e$ recognises $p$, there exists a morphism $p'\colon B\to \Omega$ such that
  $p'\circ e=p$. Let $m$ be the right morphism in the factorisation
  $m\circ f^\ast e = e\circ f$. We show that $p'\circ m\circ (f^\ast e) = f^\ast p$. 
  \begin{center}
    \begin{tikzpicture}[modal]
      \node[] at (0,0) (a) [label=above:{}] {$f^\ast\mathcal{B}'$};
      \node[] at (3,0) (b) [label=above:{}] {$\Omega$};
      \node[] at (6,0) (c) [label=above:{}] {$\mathcal{B}'$};
      \node[] at (1.5,1.5) (d) [label=above:{}] {$\mathcal{A}$};
      \node[] at (4.5,1.5) (e) [label=above:{}] {$\mathcal{B}$};
      \path[->,dashed] (a) edge node[below]{$p'\circ m$} (b);
      \path[rightmorph] (a) edge[bend right] node[below]{$m$} (c);
      \path[->] (c) edge node[]{$p'$} (b);
      \path[leftmorph] (d) edge node[above left]{$f^\ast e$} (a);
      \path[->] (d) edge node[left]{$f^\ast p$} (b);
      \path[->] (d) edge node[]{$f$} (e);
      \path[->] (e) edge node[right]{$p$} (b);
      \path[leftmorph] (e) edge node[]{$e$} (c);
    \end{tikzpicture}
  \end{center}
  Indeed, as in the above diagram, we have that
  $p'\circ m\circ (f^\ast e) = p'\circ e\circ f = p\circ f = f^\ast p$.
  Hence $f^\ast e$ recognises $f^\ast p$ by definition.
\end{proof}

\begin{assumption}\label{ass:languageReindexing}
  Henceforth, we assume that reindexing in $\mathbb L(T)$ is obtained by composition:
  for any $h\colon \mathcal{A}\to \mathcal{B}$ and $p\in \mathbb{L}_\mathcal{B}$,
 the reindexing of $p$ along $h$ is given by
  $h^\ast p = p\circ h$.
\end{assumption}

We obtain the functor
$\dom_\mathbb{R}\colon \mathbb{R}(T)\to \mathbb{C}^T$ by composing
$\dom_\mathbb{Q}\circ \pi_1\circ I$ ( $=L \circ \pi_2\circ I$), which
results in the following diagram.  Moreover, by forgetting the
pullback object, we recover the commutative square from the introduction
(Figure \ref{eq:story}).

\begin{figure}[t]
  \centering
  \begin{tikzpicture}[modal,scale=0.9]
    \node[] at (2.5,0) (a1) [label=above:{}] {$\mathbb{Q}(T)$};
    \node[] at (6,0) (b1) [label=above:{}] {$\mathbb{C}^T$};
    \node[] at (2.5,2) (e1) [label=above:{}] {$\mathbb{Q}(T)\times_{\mathbb{C}^T}\mathbb{L}(T)$};
    \node[] at (6,2) (f1) [label=above:{}] {$\mathbb{L}(T)$};
    \node[] at (0,3.5) (g) {$\mathbb{R}(T)$};
    \path[->] (a1) edge node[below]{$\dom_\mathbb{Q}$} (b1);
    \path[->] (e1) edge node[left]{$\pi_1$} (a1);
    \path[->] (e1) edge node[]{$\pi_2$} (f1);
    \path[->] (f1) edge node[]{$L$} (b1);
    \path[->] (g) edge[bend right] node[below left]{$U_\mathbb{Q}$} (a1);
    \path[->] (g) edge[bend left] node[]{$U_\mathbb{L}$} (f1);
    \path[->] (g) edge[] node[]{$I$} (e1);
  \end{tikzpicture}
  \caption{The construction of the recognition fibration.}
  \label{eq:derivedstory}
\end{figure}

\begin{proposition}
  The domain functor $\dom_\mathbb{R}$ is a cloven fibration.
\end{proposition}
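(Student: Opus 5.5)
The plan is to show that the composite fibration $\dom_\mathbb{Q}\circ\pi_1\colon \mathbb{Q}(T)\times_{\mathbb{C}^T}\mathbb{L}(T)\to\mathbb{C}^T$ restricts to the full subcategory $\mathbb{R}(T)$. That is, we show that Cartesian liftings of objects of $\mathbb{R}(T)$ can be chosen inside $\mathbb{R}(T)$. Since $\mathbb{R}(T)$ is full, a morphism of $\mathbb{R}(T)$ that is Cartesian in the ambient pullback category is automatically Cartesian in $\mathbb{R}(T)$. The universal property only quantifies over morphisms into the codomain, and every such morphism from an object of $\mathbb{R}(T)$ lies in $\mathbb{R}(T)$, as do the induced mediating maps. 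So it suffices to exhibit, for each $(\mathcal{B},e,p)$ in $\mathbb{R}(T)$ and each $h\colon\mathcal{A}\to\mathcal{B}$, a Cartesian lift in the pullback whose domain is again an object of $\mathbb{R}(T)$.

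First, we describe Cartesian morphisms in the pullback. Cartesian morphisms in $\mathbb{Q}(T)\times_{\mathbb{C}^T}\mathbb{L}(T)$ over $h$ are pairs of a Cartesian morphism of $\mathbb{Q}(T)$ over $h$ and a Cartesian morphism of $\mathbb{L}(T)$ over $h$. This is a routine check: a map into the pair is a pair of maps over the same base morphism, and the unique factorisations in each component lie over the same $w$, so they pair up. By Proposition~\ref{prop:qFibration} and the chosen cleavage, the quotient component is $(h,m)\colon h^\ast e\to e$, where $m\circ h^\ast e=e\circ h$. The language component is the Grothendieck cleavage $(h,\text{id})\colon(\mathcal{A},h^\ast p)\to(\mathcal{B},p)$. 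The cleaved lift of $(\mathcal{B},e,p)$ along $h$ is therefore $(\mathcal{A},h^\ast e,h^\ast p)$.

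Second, we check that this lift lies in $\mathbb{R}(T)$, which is exactly Lemma~\ref{lem:fibredRecognition} under Assumption~\ref{ass:languageReindexing}: $h^\ast e$ recognises $h^\ast p=p\circ h$ via $p'\circ m$. Hence the lift lives in $\mathbb{R}(T)$. By fullness and the first paragraph it is Cartesian for $\dom_\mathbb{R}$, and the choice is made uniformly from the cleavages of $\dom_\mathbb{Q}$ and $L$. This gives a cleavage, so $\dom_\mathbb{R}$ is cloven.

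The main point needing care is the first step, namely that componentwise Cartesian pairs are Cartesian in the pullback. Alternatively, we can cite the standard fact that $\pi_1$ is a fibration by change of base and that composites of fibrations are fibrations. For that route we would note that the Cartesian lift of $\pi_1$ over a $\dom_\mathbb{Q}$-Cartesian map has second component an $L$-Cartesian map, so the composite lift is the componentwise one above. Everything else is bookkeeping with fullness.
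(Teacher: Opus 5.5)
Your proposal is correct and follows the paper's proof. Reindexing in $\mathbb{Q}(T)\times_{\mathbb{C}^T}\mathbb{L}(T)$ is componentwise, the lift $(\mathcal{A},h^\ast e,h^\ast p)$ lies in $\mathbb{R}(T)$ by Lemma~\ref{lem:fibredRecognition}, and so $\dom_\mathbb{R}$ is a subfibration with the inherited cleavage. You also spell out two points the paper leaves implicit: why componentwise Cartesian pairs are Cartesian in the pullback, and why fullness of $\mathbb{R}(T)$ makes the ambient Cartesian lift Cartesian in $\mathbb{R}(T)$.
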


\begin{proof}
  Let $(\mathcal{B},e,p)$ be a triple in $\mathbb{R}$ and hence also
  in $\mathbb{Q}\times_{\mathbb{C}^T}\mathbb{L}$, let
  $f\colon \mathcal{A}\to \mathcal{B}$ in $\mathbb{C}^T$. The
  reindexing in $\mathbb{Q}\times_{\mathbb{C}^T}\mathbb{L}$ is
  performed pointwise, that is, $f^\ast(e,p)=(f^\ast e,f^\ast p)$, but
  this Cartesian morphism also exists in $\mathbb{R}$ due to Lemma
  \ref{lem:fibredRecognition}.  Hence
  $\dom_\mathbb{R}\colon \mathbb{R}\to \mathbb{C}^T$ is a subfibration
  of $\mathbb{Q}\times_{\mathbb{C}^T}\mathbb{L}\to
  \mathbb{C}^T$. It is clear from the construction that this
  fibration is cloven.
\end{proof}

\begin{definition}
  The fibration $\dom_\mathbb{R}\colon \mathbb{R}(T)\to \mathbb{C}^T$ is
  called the \emph{recognition fibration}.
\end{definition}

We write $U_\mathbb{Q}$ and $U_\mathbb{L}$ for the functors
$\pi_1\circ I$ and $\pi_2\circ I$, respectively, which are
fibrations as well. Moreover, as $\dom_\mathbb{Q}$ and $L$ are both
cloven, so are the fibrations $\dom_\mathbb{R}$, $U_\mathbb{Q}$ and
$U_\mathbb{L}$.
Whenever it is clear from the context, we write $\mathbb{R}$ for
$\mathbb{R}(T)$.

In sum, we have now introduced the five fibrations shown in
Figure \ref{eq:derivedstory}. First we constructed
the quotient fibration
$\dom_\mathbb{Q}\colon \mathbb{Q}\to \mathbb{C}^T$, whose fibre
$\mathbb{Q}_\mathcal{A}$ above $\mathcal{A}$ consists of all quotients
$e\colon \mathcal{A} \leftmorph \mathcal{B}$. Next we introduced
languages and the language fibration
$L\colon \mathbb{L}\to \mathbb{C}^T$. The language fibre
$\mathbb{L}_\mathcal{A}$ above $\mathcal{A}$ consists of all languages
$p\colon A\to \Omega$. From these fibrations we constructed three more
fibrations. The recognition fibration
$\dom_\mathbb{R}\colon \mathbb{R}\to \mathbb{C}^T$, whose fibre
$\mathbb{R}_\mathcal{A}$ above $\mathcal{A}$ consists of all pairs
$(e,p)$ such that $e$ recognises $p$. For the fibration
$U_\mathbb{Q}$, the fibre $\mathbb{R}_{(\mathcal{A},e)}$ above a
quotient $(\mathcal{A},e)$ consists of the languages
$p\colon A\to \Omega$ recognised by $e$. Finally, for the fibration
$U_\mathbb{L}$, the fibre $\mathbb{R}_{(\mathcal{A},p)}$ above the
language $(\mathcal{A},p)$ consists of  the quotients
$e\colon \mathcal{A}\leftmorph \mathcal{B}$ recognising $p$.

Next we revisit the examples from Section \ref{sec:quotients} and provide
instantiations of the language fibration in this context.

\begin{example}\label{eg:finitaryMonadsLanguage} 
  In $\mathbf{Set}$ we fix $\Omega = 2$. A language over
  $\mathcal{A}$ is a subset $P\subseteq A$ represented by
  $\chi_P\colon A\to 2$. For each $\mathcal{A}$, the fibre $\mathbb{L}_\mathcal{A}$
  above $\mathcal{A}$ is the hom-set $\mathbf{Set}(A,2)$. It forms 
  a complete atomic Boolean algebra  with inclusions as morphisms.
  Reindexing of $P\subseteq B$ along a homomorphism
  $h\colon\mathcal{A}\to \mathcal{B}$ means taking the
  inverse image $h^{-1}(P)\subseteq A$. As this is equivalent to
  $h^\ast\chi_P = \chi_P\circ h$, the reindexing satisfies Assumption
  \ref{ass:languageReindexing}. It is routine to check that
  reindexing preserves all meets and joins, so that the language
  fibration has all fibred limits and colimits.

  Let $e\colon \mathcal{A}\twoheadrightarrow \mathcal{B}$ and $P\subseteq A$.
  Then $e$ recognises $P$ if and only if there exists
  a subset $Q\subseteq B$ such that $\chi_Q \circ e = \chi_P$ or equivalently
  $P=e^{-1}[Q]$.

  By Proposition \ref{prop:quotCongEq},
  $\mathbb{Q}_\mathcal{A}\simeq\text{Cong}(\mathcal{A})$, so
  recognition should be definable in terms of congruences. Indeed,
  recognising a language means saturating a subset with an equivalence
  relation. Formally,
  an equivalence $\sim$ on
  $A$ \emph{saturates} a subset $P\subseteq A$ if and only if
  $P=\bigcup_{a\in P}[a]_\sim$. 
  Equivalently,
  $\sim$ saturates $P$ if and only if $P=\phi^{-1}[\phi(P)]$ where
  $\phi\colon A\to A/\sim$ is the canonical function associated with
  $\sim$.  It is routine to verify that a quotient
  $e\colon \mathcal{A}\twoheadrightarrow \mathcal{B}$ recognises a
  language $P\subseteq A$ if and only if $\ker e$ saturates
  $P$.
\end{example}

\begin{example}\label{eg:strongFinitaryMonadsLanguage}
  In $\mathbf{Pos}$ we fix $\Omega = (2,0\leq_2 1)$.
  A language over $(\mathcal{A},\leq)$ is an order-preserving map
  $(A,\leq)\to (2,\leq_2)$.  Such maps correspond
  precisely to subsets of $A$ that are up-closed with respect to
  $\leq$. The set of such subsets forms a complete lattice, which need
  not be complemented. Reindexing is based on inverse
  images, so Assumption \ref{ass:languageReindexing} is satisfied.  As
  reindexing preserves all meets and joins, the language fibration has
  fibred limits and colimits. 

  The definition of language acceptance instantiates as in Example
  \ref{eg:finitaryMonadsLanguage}. The definition can also be stated
  in terms of saturation: a precongruence $\lesssim$ on
  $(\mathcal{A},\leq)$ recognises an up-closed set
  $P\subseteq A$ if and only if the induced equivalence relation
  ${\sim}$ saturates $P$.
\end{example}

\begin{example}\label{eg:semiautomataLanguage}
  For semiautomata, a language over the initial semiautomaton
  $\mathcal{S}$ is a subset of $A^\ast$.  The language $L$ is accepted
  by
  $\delta(\overline{q},-)\colon \mathcal{S}\twoheadrightarrow
  (Q,\overline{q},\delta)$ if and only if there is a subset
  $F\subseteq Q$ of accepting states such that
  $ u\in L \iff \delta(\overline{q},u)\in F $. Hence our definition of
  language recognition specialises to the standard definition of
  language acceptance by an automaton.  Equivalently, recognition can
  again be defined in terms of saturation, the details of which we
  omit. 
\end{example}

The above examples demonstrate how classical concepts from automata and algebraic 
language theory are instantiated uniformly within our fibrational
framework, as summarised in Figure \ref{fig:table}.
Other instances that fall within our
setup include ordered automata, Wilke algebras and lasso
automata
\cite{klima:2019:varietiesOfOrderedAutomata,wilke:1993:algebraic,ciancia:2019:omegaAutomata}.

\begin{figure}[t]
  \centering
  \begin{tabular}{ |c|c|c|c|c|c | } 
   \hline
  $\mathbb C$ & $T\colon \mathbb C \to \mathbb C$ & $\mathbb Q_\mathcal A$ & $\mathbb L_\mathcal A$ & $\mathbb R_{(\mathcal{A},e)}$ \\ 
  \hline
  $\mathbf{Set}$ & Finitary monad  & Congruences & Subsets & Recognised languages\\
  $\mathbf{Pos}$ & \begin{tabular}[x]{@{}c@{}}Strongly\\ finitary monad \end{tabular}  & Precongruences & Up-closed subsets & Recognised languages\\
  $\mathbf{Set}$ & $1 + A \times -$ & Reachable semiautomata & Subsets & Accepted languages  \\
   \hline
  \end{tabular}
  \caption{Summary of examples and their fibrations.}
  \label{fig:table}
\end{figure}

\section{Syntactic Quotients}\label{sec:syntactic}

We now present sufficient conditions for the existence of syntactic
quotients. Intuitively, in accordance with classical language theory,
the syntactic quotient of a language should be the `smallest' quotient
recognising it, in the sense that it factors through any other quotient
recognising the language. It plays a role
analogous to the syntactic monoid (syntactic congruence) in algebraic
language theory, and to the minimal (semi)automaton in automata
theory.

We begin by defining syntactic quotients. Then we give sufficient
conditions under which a syntactic quotient exists and present some
relevant results by \adamek~\cite{adamek:1979:cogenerationOfAlgebras} which
establish these sufficient conditions for categories with proper factorisation
systems. Finally, we present a
construction of syntactic quotients using translations that originates
from results by \slominski{}
\cite{slominski:1974:greatestCongruence}. We conclude by showing how
\slominski's formalism can be expressed using fibrations, and how it can be
used to derive several syntactic congruences from the literature.

\begin{definition}
  Let $\mathcal{A}$ be a $T$-algebra. The \emph{syntactic quotient} for the
  language $p\colon A\to \Omega$ is the terminal object in
  $\mathbb{R}_{(\mathcal{A},p)}$.
\end{definition}

An equivalent, arguably more operational definition is that $e$ is the
syntactic quotient for $p$ if and only if it recognises $p$ and
whenever some $e'\in \mathbb{Q}_\mathcal{A}$ recognises $p$, there
exists a unique map $f:e'\to e$ in $\mathbb{Q}(T)_\mathcal{A}$.  If
$e$ is weakly terminal in
$\mathbb{R}_{(\mathcal{A},p)}$ (that is $f$ need not be unique), then we call $e$ a
\emph{weakly syntactic quotient} for $p$.

\subsection{Existence of Syntactic Quotients}

We now present sufficient conditions under which  languages admit
 (weakly) syntactic quotients. Then we outline existing results
obtained by \adamek~\cite{adamek:1979:cogenerationOfAlgebras} that allow us to
establish these conditions.

Our existence result is based on two technical lemmas. We can lift
the forgetful functor $U\colon \mathbb{C}^T\to \mathbb{C}$ for the monad
$T$ on $\mathbb{C}$ to a functor
$\overline{U}\colon \mathbb{Q}(T)\to \text{CoSub}$.  It maps a
quotient $e\colon \mathcal{A}\leftmorph \mathcal{B}$ to itself (seen
as a morphism in $\mathbb{C}$), and a pair $(f,g)$ again to itself. So
$\overline{U}$ just forgets that our maps are $T$-morphisms.

\begin{lemma}
  The pair $(U,\overline{U})$ is a morphism of fibrations; the following
  square commutes:
  \begin{center}
    \begin{tikzpicture}[modal,scale=0.9]
      \node[] at (0,0) (a) [label=above:{}] {$\mathbb{C}^T$};
      \node[] at (3,0) (b) [label=above:{}] {$\mathbb{C}$};
      \node[] at (0,2) (c) [label=above:{}] {$\mathbb{Q}(T)$};
      \node[] at (3,2) (d) [label=above:{}] {$\text{CoSub}$};
      \path[->] (a) edge node[below]{$U$} (b);
      \path[->] (c) edge node[left]{$\dom$} (a);
      \path[->] (c) edge node[]{$\overline{U}$} (d);
      \path[->] (d) edge node{$\dom$} (b);
    \end{tikzpicture}
  \end{center}
\end{lemma}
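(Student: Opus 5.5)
We need to verify three things: that $\overline{U}$ is a well-defined functor, that the square commutes, and that $\overline{U}$ sends Cartesian morphisms of $\dom_\mathbb{Q}$ to Cartesian morphisms of $\dom\colon\text{CoSub}\to\mathbb{C}$.

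\emph{Well-definedness and commutativity.} A quotient $e\colon\mathcal{A}\leftmorph\mathcal{B}$ is in particular a left morphism $A\leftmorph B$ of $\mathbb{C}$, so it is an object of $\text{CoSub}=\mathbb{Q}(\text{Id}_\mathbb{C})$. A morphism $(f,g)\colon e\to e'$ in $\mathbb{Q}(T)$ satisfies $e'\circ f=g\circ e$ in $\mathbb{C}$, so $(Uf,g)$ is a morphism in $\text{CoSub}$. Since composition and identities in both categories are componentwise, functoriality is immediate. For the square, both composites send $e$ to its underlying domain $A$ and a morphism $(f,g)$ to $Uf$. Hence $\dom\circ\overline{U}=U\circ\dom_\mathbb{Q}$ holds on the nose.

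\emph{Preservation of Cartesian morphisms.} The proof of Proposition~\ref{prop:qFibration} shows that the Cartesian lifting of $f\colon\mathcal{A}\to\mathcal{A}'$ at $e'$ is $(f,m)$, where $e'\circ f=m\circ e$ is the factorisation in $\mathbb{C}$ and $e$ acquires a $T$-algebra structure on its codomain. The same argument with $T=\text{Id}_\mathbb{C}$ shows that $(Uf,m)$ is Cartesian in $\text{CoSub}$ over $Uf$. Concretely, given $(j,h)\colon e''\to e'$ in $\text{CoSub}$ with $Uf\circ w=j$, the required $k$ is the unique diagonal filler of the square $h\circ e''=m\circ(e\circ w)$. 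So $\overline{U}$ sends the chosen lifts to Cartesian morphisms.

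An arbitrary Cartesian morphism over $f$ is isomorphic in the fibre to the chosen one, via a vertical isomorphism. Since functors preserve isomorphisms, and a Cartesian morphism composed with a vertical isomorphism is again Cartesian, preservation extends to all Cartesian morphisms.

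The main point needing care is that the factorisation of $e'\circ f$ computed in $\mathbb{C}$ is the same one used to build the lift in $\mathbb{Q}(T)$. This is true by construction, and it is the whole content of the argument.
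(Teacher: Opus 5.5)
Your proposal is correct and takes the same approach as the paper, whose proof observes that Cartesian lifts in $\mathbb{Q}(T)$ are computed by the factorisation in $\mathbb{C}$, so $\overline{U}$ sends them to Cartesian morphisms of $\text{CoSub}$. You also spell out what the paper leaves implicit: functoriality, strict commutativity of the square, the explicit filler argument, and the passage from the chosen lifts to arbitrary Cartesian morphisms via vertical isomorphisms.
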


\begin{proof}
  We need to check that $\overline{U}$ preserves Cartesian morphisms,
  which is trivial because the Cartesian morphism were computed in
  $\mathbb{C}$ in the first place.
\end{proof}

Thus $\overline{U}$ is fibred and for
every $T$-algebra $\mathcal{A}$ there is a functor
$\overline{U}_\mathcal{A}\colon \mathbb{Q}_\mathcal{A}\to \text{CoSub}_A$.

\begin{lemma}\label{prop:coarsestLeftMorph}
  Every language $p\colon A\to\Omega$ has a weakly syntactic quotient
  in $\text{CoSub}_A$. It has a
  syntactic quotient if $\mathcal{L}\subseteq \mathcal{E}$.
\end{lemma}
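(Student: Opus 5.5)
The weakly syntactic quotient will be read off directly from the factorisation of $p$ itself. In $\text{CoSub}$ (so $T=\text{Id}_\mathbb{C}$), recognition of $p\colon A\to\Omega$ by a left morphism $e\colon A\leftmorph B$ means that $p=q\circ e$ for some $q\colon B\to\Omega$. I would factor $p=m\circ e_p$ with $e_p\colon A\leftmorph B_p$ in $\mathcal{L}$ and $m\colon B_p\rightmorph \Omega$ in $\mathcal{R}$. Then $e_p$ recognises $p$, witnessed by $m$, so $e_p$ is an object of the fibre $\text{CoSub}_{(A,p)}$ of languages-with-recognisers over $(A,p)$.

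Weak terminality comes from the diagonal filler. Let $e\colon A\leftmorph B$ be any left morphism recognising $p$, say $p=q\circ e$. The square
\[
\begin{array}{ccc}
A & \xrightarrow{\;e\;} & B\\
{\scriptstyle e_p}\downarrow & & \downarrow{\scriptstyle q}\\
B_p & \xrightarrow{\;m\;} & \Omega
\end{array}
\]
commutes because $m\circ e_p=p=q\circ e$. Its top edge is in $\mathcal{L}$ and its bottom edge is in $\mathcal{R}$, so there is a filler $w\colon B\to B_p$ with $w\circ e=e_p$ and $m\circ w=q$. The equation $w\circ e=e_p$ says exactly that $w$ is a morphism $e\to e_p$ in the fibre $\text{CoSub}_A$, using the description of fibre morphisms given after Proposition~\ref{prop:qFibration}. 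Morphisms in the recognition fibre over $(A,p)$ are just fibre morphisms of quotients, since the language component is fixed at the identity. Hence $e_p$ is weakly terminal in the recognition fibre over $(A,p)$.

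Uniqueness is the only real issue, because the filler is unique only relative to both triangles, whereas a fibre morphism is only required to satisfy $w\circ e=e_p$. With $\mathcal{L}\subseteq\mathcal{E}$ this gap disappears: $e$ is an epimorphism, so any two $w,w'$ with $w\circ e=e_p=w'\circ e$ are equal. Thus the morphism $e\to e_p$ is unique and $e_p$ is terminal, that is, a syntactic quotient.

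I expect no serious obstacle. The one point needing care is to state precisely which morphisms the recognition fibre $\mathbb{R}_{(A,p)}$ has, namely that they are the quotient-fibre maps and that no compatibility with the witnesses $q$ is demanded. If compatibility with the witnesses were demanded, the filler would in any case supply $m\circ w=q$.
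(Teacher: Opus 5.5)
Your proposal is correct and is essentially the paper's proof: factor $p=m\circ e_p$, apply the diagonal filler to the square formed by any recognising left morphism $e$ and its witness $q$ to obtain a fibre map $e\to e_p$, and, when $\mathcal{L}\subseteq\mathcal{E}$, get uniqueness because $e$ is epi (the paper phrases this as $\text{CoSub}_A$ being thin). You read fibre morphisms as $w\circ e=e_p$, which is the correctly typed condition, and you are right that morphisms of $\mathbb{R}(\text{Id}_\mathbb{C})_{(A,p)}$ need not be compatible with the witnesses.
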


\begin{proof}
  The statement is equivalent to saying that
  the fibre $\mathbb{R}(\text{Id}_\mathbb{C})_{(A,p)}$ has
  a weakly terminal object.
  Let $A\overset{e}{\to}B \overset{m}{\to}\Omega$ be the factorisation of $p$.
  We show that $e$ is weakly terminal.
  Assume that $e'\colon A\to C$ recognises $p$. Then there is a $p'$ such that
  $p=p'\circ e'$, as in the following diagram on the left below.
  \begin{center}
    \begin{tikzpicture}[modal]
      \node[] at (0,3) (a) [label=above:{}] {$A$};
      \node[] at (3,3) (b) [label=above:{}] {$C$};
      \node[] at (1.5,1.5) (c) [label=above:{}] {$B$};
      \node[] at (1.5,0) (d) [label=above:{}] {$\Omega$};
      \path[leftmorph] (a) edge node[]{$e'$} (b);
      \path[leftmorph] (a) edge node[]{$e$} (c);
      \path[->] (a) edge[bend right] node[below left]{$p$} (d);
      \path[->] (b) edge[bend left] node[]{$p'$} (d);
      \path[rightmorph] (c) edge node[]{$m$} (d);
      \node[] at (6,0.5) (a1) [label=above:{}] {$B$};
      \node[] at (8,0.5) (b1) [label=above:{}] {$\Omega$};
      \node[] at (6,2.5) (e1) [label=above:{}] {$A$};
      \node[] at (8,2.5) (f1) [label=above:{}] {$C$};
      \path[rightmorph] (a1) edge node[below]{$m$} (b1);
      \path[leftmorph] (e1) edge node[left]{$e$} (a1);
      \path[leftmorph] (e1) edge node[]{$e'$} (f1);
      \path[->] (f1) edge node[]{$p'$} (b1);
      \path[->,dashed] (f1) edge node[]{} (a1);
    \end{tikzpicture}
  \end{center}
  We obtain the commuting square shown on the right, and by the filler
  property a map from $e'$ to $e$, so that $e$ is weakly terminal.

  Finally, if $\mathcal{L}\subseteq \mathcal{E}$, then the fibre
  $\text{CoSub}_A$ is thin and hence $e$ is terminal.
\end{proof}

The (weakly) syntactic quotient $e$ in $\text{CoSub}_A$ is obtained by
factoring $p=m\circ e$, which requires a factorisation
system. The next theorem gives sufficient conditions for the existence
of (weakly) syntactic quotients.  If $\overline{U}_\mathcal{A}$ has a
right adjoint $\overline{R}_\mathcal{A}$, then
$\overline{R}_\mathcal{A}(e)$ is the (weakly) syntactic quotient for
$p$ in $\mathbb{Q}_\mathcal{A}$. A more descriptive explanation of
this result is given using \slominski{}'s construction below.

\begin{theorem}\label{thm:weakSyntacticObj}
  Let $\mathcal{A}$ be a $T$-algebra and let
  $\overline{U}_\mathcal{A}$ have
  a right adjoint. Then every language $p\colon A\to\Omega$ has
  a weakly syntactic quotient in $\mathbb{Q}(T)_\mathcal{A}$,
  and a syntactic quotient if $\mathcal{L}\subseteq \mathcal{E}$.
\end{theorem}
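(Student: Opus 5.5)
Let $\overline{R}_\mathcal{A}\colon \text{CoSub}_A\to \mathbb{Q}_\mathcal{A}$ be the right adjoint of $\overline{U}_\mathcal{A}$, with counit $\varepsilon\colon \overline{U}_\mathcal{A}\overline{R}_\mathcal{A}\Rightarrow \text{Id}$. The plan is to transport the weakly terminal object of Lemma~\ref{prop:coarsestLeftMorph} along $\overline{R}_\mathcal{A}$. Factor $p=m\circ e$ with $e\colon A\leftmorph B$ and $m\colon B\rightmorph \Omega$. By Lemma~\ref{prop:coarsestLeftMorph}, $e$ is weakly terminal in $\mathbb{R}(\text{Id}_\mathbb{C})_{(A,p)}$. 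The candidate weakly syntactic quotient is $s:=\overline{R}_\mathcal{A}(e)\colon \mathcal{A}\leftmorph \mathcal{S}$.

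First, $s$ recognises $p$. The counit component $\varepsilon_e\colon \overline{U}_\mathcal{A}(s)\to e$ is a morphism in $\text{CoSub}_A$, i.e.\ a map $k\colon S\to B$ in $\mathbb{C}$ with $e\circ\text{id}_A = k\circ s$. Hence $p = m\circ e = (m\circ k)\circ s$, so $q := m\circ k$ witnesses recognition in the sense of Definition~\ref{defn:languageRecognition}.

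Second, $s$ is weakly terminal. Let $e'\in\mathbb{Q}_\mathcal{A}$ recognise $p$, say $p=p'\circ e'$. Then $\overline{U}_\mathcal{A}(e')$ recognises $p$ in $\text{CoSub}_A$, since recognition only depends on the underlying $\mathbb{C}$-morphism. Weak terminality of $e$ gives a morphism $\overline{U}_\mathcal{A}(e')\to e$ in $\text{CoSub}_A$, namely the filler in the proof of Lemma~\ref{prop:coarsestLeftMorph}. Its adjoint transpose is a morphism $e'\to \overline{R}_\mathcal{A}(e)=s$ in $\mathbb{Q}_\mathcal{A}$. So $s$ is weakly terminal in $\mathbb{R}_{(\mathcal{A},p)}$.

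Third, uniqueness when $\mathcal{L}\subseteq\mathcal{E}$. A morphism $h\colon e'\to s$ in $\mathbb{Q}_\mathcal{A}$ satisfies $h\circ e'=s$, and $e'$ is epi, so $h$ is unique. Thus $\mathbb{Q}_\mathcal{A}$ is thin and the weakly terminal object is terminal, i.e.\ a syntactic quotient. No step is deep. The only points needing care are that morphisms in both fibres are the triangle-commuting maps, so the counit really yields the factorisation $e = k\circ s$, and that $\overline{U}_\mathcal{A}$ preserves and reflects the recognition property. Both are immediate from the definitions.
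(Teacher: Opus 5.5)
Your proposal is correct and follows the paper's argument. You transport the weakly terminal factorisation quotient $e$ of Lemma~\ref{prop:coarsestLeftMorph} along $\overline{R}_\mathcal{A}$, and use the counit $\varepsilon_e$ to see that $\overline{R}_\mathcal{A}(e)$ still recognises $p$; the paper phrases this as down-closure plus a restricted adjunction, while you unpack it via adjoint transposes, and for $\mathcal{L}\subseteq\mathcal{E}$ you use thinness of $\mathbb{Q}_\mathcal{A}$ directly where the paper notes that $e$ is terminal and right adjoints preserve terminal objects.
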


\begin{proof}
  We show that $\mathbb{R}(T)_{(\mathcal{A},p)}$ has a (weakly) terminal object.
  Let $\overline{R}_\mathcal{A}$ be right adjoint to $\overline{U}_\mathcal{A}$.
  We show that this adjunction restricts to an adjunction
  \begin{center}
  \begin{tikzpicture}[modal]
    \node[] at (0,0) (a) [label=above:{}] {$\mathbb{R}(T)_{(\mathcal{A},p)}$};
  \node[] at (3,0) (b) [label=above:{}] {$\mathbb{R}(\text{Id}_{\mathbb{C}})_{(A,p)}.$};
  \node[] at (1.5,0) (c) [label=above:{}] {$\bot$};
  \path[->] (a) edge[bend left] node[]{$\overline{U}_\mathcal{A}$} (b);
  \path[->] (b) edge[bend left] node[]{$\overline{R}_\mathcal{A}$} (a);
  \end{tikzpicture}
  \end{center}

  It is clear that $\overline{U}_\mathcal{A}$ restricts appropriately.
  For $\overline{R}_\mathcal{A}$ we need to show that if $e\in \text{CoSub}_A$
  recognises $p$, then so does $\overline{R}_\mathcal{A}(e)$. This follows
  directly from \ref{prop:recMDownClosed} 
  and the fact that
  $\epsilon_e\colon \overline{R}_\mathcal{A}(e)\to e$.

  Moreover, $\mathbb{R}(\text{Id}_\mathbb{C})_{(A,p)}$ has a weakly
  terminal object $e$ by Proposition \ref{prop:coarsestLeftMorph} . As
  $\overline{R}_\mathcal{A}$ is a right adjoint, it preserves weakly
  terminal objects and thus $\overline{R}_\mathcal{A}(e)$ is weakly
  terminal in $\mathbb{R}(T)_{(\mathcal{A},p)}$.

  Finally, if $\mathcal{L}\subseteq \mathcal{E}$, then $e$ is
  terminal, hence so is $\overline{R}_\mathcal{A}(e)$.
\end{proof}

Next we prove  three facts that also require a proper
factorisation system.  In this case it turns out that the quotient
fibration is \emph{thin}; that is,  each fibre $\mathbb{Q}_\mathcal{A}$ is a poset.  We write $h\colon e\leq e'$ whenever $h$ is the unique
$T$-morphism for which $e'\circ h = e$.  Moreover, instead of limits
and colimits, we speak of meets and joins in this
context.

The first two results are translated from
\cite{adamek:1979:cogenerationOfAlgebras}.  The first lemma gives
sufficient conditions for each fibre of the cosubobject fibration to
have arbitrary joins. As pointed out in
\cite{adamek:1979:cogenerationOfAlgebras}, the assumptions made are
weak enough to admit many interesting examples. Recall that a category
is co-well-powered if every object has a small poset of
cosubobjects.

\begin{lemma}[\cite{adamek:1979:cogenerationOfAlgebras}]\label{lem:coSubJoins}
  Let $\mathbb{C}$ be a co-well-powered category, which is either complete or
  cocomplete. Then each fibre in $\dom_\text{CoSub}$ has arbitrary
  joins.
\end{lemma}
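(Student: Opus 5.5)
We work under the standing convention of this part of the section that the factorisation system is proper, so that each fibre $\text{CoSub}_A$ is a preorder: between two left morphisms $e\colon A\leftmorph B$ and $e'\colon A\leftmorph B'$ there is at most one $h$ with $h\circ e=e'$, because $e$ is epic. Co-well-poweredness says this preorder is, up to isomorphism, a small poset. The task is therefore to produce, for any small family $(e_i\colon A\leftmorph B_i)_{i\in I}$, a least upper bound. Here $e\le e'$ when $e'$ factors through $e$, and the join is the coarsest quotient factoring through all of them. We treat the two hypotheses separately.

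\emph{Cocomplete case.} We form the wide pushout of the family $(e_i)_{i\in I}$ under $A$. Concretely, this is the colimit of the diagram consisting of $A$ and the $B_i$ with the arrows $e_i$. Call it $P$, with coprojections $q_i\colon B_i\to P$, and let $q=q_i\circ e_i\colon A\to P$. We factor $q=m\circ e$ with $e\colon A\leftmorph D$ and $m$ a right morphism. We then check that $e$ is the join.
\begin{itemize}
\item \emph{Upper bound.} For each $i$ the square $m\circ e=q_i\circ e_i$ commutes, and $e_i$ is left while $m$ is right. The diagonal filler gives $d_i\colon B_i\to D$ with $d_i\circ e_i=e$.
\item \emph{Least.} Let $e'\colon A\leftmorph C$ with maps $g_i\colon B_i\to C$ satisfying $g_i\circ e_i=e'$. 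The universal property of $P$ gives $g\colon P\to C$, and then $g\circ m\colon D\to C$ satisfies $g\circ m\circ e=e'$.
\end{itemize}
Note that $m$ is only an auxiliary map here. Alternatively, one can check that $q$ itself is left, using the fact that left classes are closed under colimits in the arrow category, but the factorisation route avoids this.

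\emph{Complete case.} This is the dual approach, closer to Ad\'amek's original proof. By co-well-poweredness, let $(c_j\colon A\leftmorph C_j)_{j\in J}$ be a small set of representatives of all quotients $c$ that are upper bounds, meaning $e_i\le c$ for every $i$. We take the product $\prod_j C_j$ together with the induced map $\langle c_j\rangle\colon A\to\prod_j C_j$, and factor it as $A\overset{e}{\leftmorph}D\overset{m}{\rightmorph}\prod_j C_j$. We then check that $e$ is the join.
\begin{itemize}
\item \emph{Bounded by every upper bound.} For each $j$ we have $\pi_j\circ m\circ e=c_j$, so every upper bound $c_j$ factors through $e$. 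This holds in particular when the upper-bound family is empty, where the product is terminal.
\item \emph{Upper bound.} We need $e_i\le e$ for each $i$. Each $c_j$ factors as $c_j=h_{ij}\circ e_i$, so $\langle c_j\rangle=\langle h_{ij}\rangle\circ e_i$. In the square $m\circ e=\langle h_{ij}\rangle\circ e_i$, the map $e_i$ is left and $m$ is right. The diagonal filler gives $B_i\to D$ with composite $e$.
\end{itemize}

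The main obstacle is size and the exact bookkeeping of the order. The complete case needs co-well-poweredness so that the product is small. The order also has to be oriented consistently with $h\colon e\le e'$ meaning $e'\circ h=e$ as in the paper. If that reading reverses the order, then ``joins'' means the finest common refinement, and the roles of the two constructions swap. We would adjust accordingly, since both constructions yield both bounds. A further point is that for an empty family, and for the top and bottom elements, the argument yields $\text{id}_A$ or the factorisation $A\leftmorph\cdot\rightmorph 1$.
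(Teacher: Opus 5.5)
Your proof is correct. The paper gives no proof of this lemma; it only cites Ad\'amek. What you wrote is the standard argument behind that citation.
\begin{itemize}
\item In the cocomplete case, the join is the $\mathcal{L}$-part of the cointersection (wide pushout) of the $e_i$. The diagonal fill-in gives $e_i\le e$, and the universal property of the wide pushout gives minimality. As you note, the factorisation step is actually an isomorphism, since left classes are closed under wide pushouts.
\item In the complete case, the join is the meet of the upper bounds, which form a small family by co-well-poweredness. This meet is computed as the $\mathcal{L}$-part of $\langle c_j\rangle\colon A\to\prod_j C_j$.
\end{itemize}

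There are two small corrections to the wording around the argument.

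First, the sentence ``the join is the coarsest quotient factoring through all of them'' has the wrong adjective. The quotients factoring through every $e_i$ are precisely the upper bounds, and the join is the \emph{finest} of them; the coarsest would be the top element. Your actual verifications prove the right thing.

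Second, the hedge about the orientation of the order is unnecessary. The paper's ``$e'\circ h=e$'' is a typo for $h\circ e=e'$, since the former does not even typecheck for $h\colon B\to B'$. The paper's later usage confirms this:
\begin{itemize}
\item in Proposition~\ref{prop:quotJoins}, $c_i\colon T(e_i)\le a^\ast(e_i)$ is witnessed by $c_i\circ T(e_i)=a^\ast(e_i)$;
\item $\text{id}_A$ is called the bottom element of $\mathbb{Q}_\mathcal{A}$.
\end{itemize}
So your reading ($e\le e'$ when $e'$ factors through $e$) is the intended one. The cointersection-style join you construct is exactly what Proposition~\ref{prop:quotJoins} and Corollary~\ref{cor:uHasRightAdj} need: $T$ must preserve it, and $\overline{U}_\mathcal{A}$, being a left adjoint, must preserve it.
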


The next result is the main theorem in
\cite{adamek:1979:cogenerationOfAlgebras}. Here we provide an alternative
fibrational proof.

\begin{proposition}[\cite{adamek:1979:cogenerationOfAlgebras}]\label{prop:quotJoins}
  The functor $\overline{U}_\mathcal{A}$ creates all joins in $\text{CoSub}_A$
  that are preserved by $T$.
\end{proposition}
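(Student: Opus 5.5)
We work in the thin setting just described, with a proper factorisation system, so that $\mathbb{Q}_\mathcal{A}$ and $\text{CoSub}_A$ are posets. Take a family of quotients $e_i\colon\mathcal{A}\leftmorph\mathcal{B}_i$ in $\mathbb{Q}_\mathcal{A}$. Suppose their underlying cosubobjects have a join $e\colon A\leftmorph B$ in $\text{CoSub}_A$, and suppose $T$ preserves it, i.e.\ $Te\colon TA\leftmorph TB$ is the join of the $Te_i$ in $\text{CoSub}_{TA}$. Creation of this join means three things, and we prove them in order. First, $B$ carries a unique $T$-algebra structure making $e$ a $T$-morphism. Second, each comparison $h_i\colon e_i\le e$, with $h_i\circ e_i=e$, is a $T$-morphism. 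Third, $e$ is then the join of the $e_i$ in $\mathbb{Q}_\mathcal{A}$.

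\textbf{Step 1: the algebra structure.} Write $a\colon TA\to A$ and $b_i\colon TB_i\to B_i$ for the structure maps. The morphism $e\circ a\colon TA\to B$ gives a cosubobject of $TA$ after factorising. We check that each $Te_i$ is below $e\circ a$ in the sense of factorisation:
\[
e\circ a=h_i\circ e_i\circ a=h_i\circ b_i\circ Te_i .
\]
Since $Te$ is the join of the $Te_i$, the natural move is to factor $e\circ a=m\circ e'$, so that $e'$ is an upper bound of the $Te_i$. Then $Te\le e'$, which gives a map $k$ with $k\circ Te=e'$. Set $c:=m\circ k$; this satisfies $c\circ Te=e\circ a$, so $e$ is a $T$-morphism. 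Uniqueness of $c$ holds because $Te$ is epi, by properness.

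\textbf{Step 2: the algebra laws and the $h_i$.} The unit and multiplication laws for $c$ follow by precomposing with the epis $e$ and $TTe$, which are epi because $T$ preserves left morphisms. The computations are the same as in the proof of Proposition~\ref{prop:qFibration}. For each $h_i$ we check $h_i\circ b_i=c\circ Th_i$ after precomposing with the epi $Te_i$:
\[
h_i\circ b_i\circ Te_i=h_i\circ e_i\circ a=e\circ a=c\circ Te=c\circ Th_i\circ Te_i .
\]

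\textbf{Step 3: the join property.} Let $e'\colon\mathcal{A}\leftmorph\mathcal{B}'$ be an upper bound of the $e_i$ in $\mathbb{Q}_\mathcal{A}$. Applying $\overline{U}_\mathcal{A}$ shows it is an upper bound in $\text{CoSub}_A$, so there is a map $g$ with $g\circ e=e'$. This $g$ is a $T$-morphism, again by cancelling the epi $Te$:
\[
g\circ c\circ Te=g\circ e\circ a=e'\circ a=b'\circ Te'=b'\circ Tg\circ Te .
\]
Hence $e$ is the join in $\mathbb{Q}_\mathcal{A}$, and uniqueness of the lifted structure gives creation.

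\textbf{Main obstacle.} I expect Step 1 to be the delicate part, namely producing $c$ from the preservation hypothesis. If the factorisation argument does not go through cleanly, I will instead use fibrational language: reindex along $a$ and note that $a^\ast e$ is an upper bound of the $Te_i$, since $a^\ast e_i\ge Te_i$ via the Cartesian factorisation of Proposition~\ref{prop:qFibration}. Then $Te\le a^\ast e$, and the comparison composed with the right morphism $m$ yields $c$.
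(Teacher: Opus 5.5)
Your proof is correct and follows the paper's argument. Both proofs obtain the structure map by showing $Te\le a^\ast e$, where $a^\ast e$ is the left part of the factorisation $e\circ a=m\circ a^\ast e$, then compose the comparison with $m$, and verify the unit and multiplication laws by cancelling the epis $e$ and $TTe$. The differences are cosmetic: you get $Te_i\le a^\ast e$ directly from the fill-in against $m$ (using $e=h_i\circ e_i$) rather than via $Te_i\le a^\ast e_i\le a^\ast e$ and monotonicity of $a^\ast$, and you spell out by epi-cancellation that the comparison and mediating maps are $T$-morphisms, which the paper leaves implicit.
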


\begin{proof}
  Let $\mathcal{A}=(A,a\colon TA\to A)$, $(e_i\colon \mathcal{A}\twoheadrightarrow \mathcal{A}_i)_{i\in I}$
  and $e\colon A\to A'$ be their join in $\text{CoSub}_A$.
  We assume without loss of generality that $I$ is non-empty, as
  we already know that $\mathbb{Q}_\mathcal{A}$ has a bottom element, namely $\text{id}_A$.
  As $T$ preserves joins, $Te\colon TA\to TA'$ is the join of
  $T(e_i)\colon TA\to T(A_i)$ in $\text{CoSub}_{TA}$. As the square 
  \begin{center}
    \begin{tikzpicture}[modal]
      \node[] at (0,0) (a1) [label=above:{}] {$a^\ast A_{i}$};
      \node[] at (2,0) (b1) [label=above:{}] {$A_{i}$};
      \node[] at (0,2) (e1) [label=above:{}] {$TA$};
      \node[] at (2,2) (f1) [label=above:{}] {$TA_{i}$};
      \path[mono] (a1) edge node[below]{$m_i$} (b1);
      \path[epi] (e1) edge node[left]{$a^\ast e_i$} (a1);
      \path[epi] (e1) edge node[]{$T(e_i)$} (f1);
      \path[->] (f1) edge node[]{$a_i$} (b1);
      \path[->,dashed] (f1) edge node[]{$c_i$} (a1);
    \end{tikzpicture}
  \end{center}
  commutes, the filler property implies that
  $c_i\colon T(e_i)\leq a^\ast(e_i)$ for all $i$.  The map $a^\ast$ is
  order-preserving, so $T(e_i)\leq a^\ast(e_i)\leq a^\ast e$ and as
  $Te$ is the join of the $T(e_i)$, we have $d\colon Te\leq a^\ast e$.
  By postcomposing $d$ with $m\colon a^\ast A'\to A'$ we obtain a
  morphism $a'\colon TA'\to A'$. By construction, $e$ and all
  $e_i\leq e$ are $T$-morphisms if $a'$ is a $T$-algebra. It remains to show
  that $a'$ interacts well with the
  unit and multiplication of the monad.  For the unit,
  $ a'\eta_{A'}e = a'T(e)\eta_A = e a \eta_A = e$ and
  $a'\eta_{A'}=\text{id}_{A'}$, as $e$ is epi. For the
  multiplication, we find that
  \[
    a'\mu_{A'}TT(e) = a'T(e)\mu_A=ea\mu_A=eaT(a)=a'T(e)T(a)=a'T(a')TT(e),
  \]
  and $a'\mu_{A'}=a' T(a')$ as $TT(e)$ is epi.
  So $e\colon \mathcal{A}\to \mathcal{A}'$ is the join of the $e_i$ and $\overline{U}_\mathcal{A}$
  creates all joins in $\text{CoSub}_A$ that are preserved by
  $T$.
\end{proof}

The existence of a right adjoint follows immediately from the general adjoint
functor theorem \cite{maclane98}.

\begin{corollary}\label{cor:uHasRightAdj}
  Let $\mathbb{C}$ be co-well-powered, let $\text{CoSub}_A$ have
  arbitrary joins and let $T$ preserve these joins. Then $\overline{U}_\mathcal{A}$ has a
  right-adjoint.
\end{corollary}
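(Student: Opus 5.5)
We plan to apply the special adjoint functor theorem in its poset form. This is legitimate because we are in the proper setting assumed just before Lemma~\ref{lem:coSubJoins}: the fibres $\text{CoSub}_A$ and $\mathbb{Q}_\mathcal{A}$ are thin, i.e.\ preorders. For a monotone map $\overline{U}_\mathcal{A}\colon \mathbb{Q}_\mathcal{A}\to \text{CoSub}_A$ between posets, a right adjoint exists as soon as the domain has all small joins and $\overline{U}_\mathcal{A}$ preserves them. The candidate right adjoint is then
\[
\overline{R}_\mathcal{A}(e) = \bigvee \{ e' \in \mathbb{Q}_\mathcal{A} \mid \overline{U}_\mathcal{A}(e') \leq e \}.
\]

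The steps are as follows. First, co-well-poweredness of $\mathbb{C}$ makes $\text{CoSub}_A$ a small poset. Since $\mathbb{Q}_\mathcal{A}$ embeds into it via $\overline{U}_\mathcal{A}$ (which only forgets $T$-structure, and $T$-structure on a quotient is unique because $Te$ is epi, as in the proof of Proposition~\ref{prop:quotJoins}), $\mathbb{Q}_\mathcal{A}$ is also small. Hence the join above is indexed by a set. Second, $\text{CoSub}_A$ has arbitrary joins by hypothesis, and $T$ preserves them. So Proposition~\ref{prop:quotJoins} says $\overline{U}_\mathcal{A}$ creates these joins: for any family $(e_i)$ in $\mathbb{Q}_\mathcal{A}$, the join of the $\overline{U}_\mathcal{A}(e_i)$ in $\text{CoSub}_A$ carries a $T$-algebra structure making it the join in $\mathbb{Q}_\mathcal{A}$. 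Thus $\mathbb{Q}_\mathcal{A}$ has all small joins and $\overline{U}_\mathcal{A}$ preserves them.

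Third, we verify the adjunction $\overline{U}_\mathcal{A}(e')\leq e \iff e'\leq \overline{R}_\mathcal{A}(e)$. The forward direction is immediate from the definition of $\overline{R}_\mathcal{A}(e)$ as a join. For the converse, since $\overline{U}_\mathcal{A}$ preserves joins, $\overline{U}_\mathcal{A}(\overline{R}_\mathcal{A}(e))$ is the join of elements each $\leq e$, hence is itself $\leq e$; monotonicity of $\overline{U}_\mathcal{A}$ then gives $\overline{U}_\mathcal{A}(e')\leq \overline{U}_\mathcal{A}(\overline{R}_\mathcal{A}(e))\leq e$. Alternatively, we could simply invoke the general adjoint functor theorem~\cite{maclane98}: the solution set condition is trivial because $\mathbb{Q}_\mathcal{A}$ is small, and preservation of the relevant limits (colimits in the order-dual sense) is exactly the join preservation established above.

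The main obstacle is the careful use of Proposition~\ref{prop:quotJoins}: we must check that ``creates joins preserved by $T$'' yields both existence of joins in $\mathbb{Q}_\mathcal{A}$ and their preservation by $\overline{U}_\mathcal{A}$. This includes the empty join, whose bottom element is $\text{id}_A$ and is handled separately in that proof. We must also make sure the orientation is right: joins in the fibre correspond to colimits, as the paper's conventions require, so that join preservation is the correct hypothesis for a right adjoint.
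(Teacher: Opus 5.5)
Your proof is correct and follows essentially the paper's route: Proposition~\ref{prop:quotJoins} gives all joins in $\mathbb{Q}_\mathcal{A}$, preserved by $\overline{U}_\mathcal{A}$, co-well-poweredness supplies the size bound, and an adjoint functor theorem finishes. For you this theorem is the explicit poset form $\overline{R}_\mathcal{A}(e)=\bigvee\{e'\mid \overline{U}_\mathcal{A}(e')\leq e\}$, which is just the thin-category instance of the general adjoint functor theorem the paper invokes. Your point that what trivialises the solution set condition is the essential smallness of $\mathbb{Q}_\mathcal{A}$, obtained by embedding it into the small poset $\text{CoSub}_A$, rather than mere local smallness, is a slightly more precise version of the paper's remark.
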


\begin{proof}
  Proposition $\ref{prop:quotJoins}$ shows that $\mathbb{Q}_\mathcal{A}$ has arbitrary joins and these
  are clearly preserved by $\overline{U}_\mathcal{A}$.
  Moreover, as $\mathbb{C}$ is co-well-powered, $\mathbb{Q}_\mathcal{A}$ is
  locally small, and the
  solution set condition holds. Hence  the general adjoint functor
  theorem implies that $\overline{U}_\mathcal{A}$ has a right adjoint.
\end{proof}

Examples \ref{eg:finitaryMonadsLanguage}, \ref{eg:strongFinitaryMonadsLanguage}
and \ref{eg:semiautomataLanguage} satisfy the assumptions of this
corollary. Hence every language in these examples admits a syntactic
quotient.

\subsection{\slominski{}'s Construction}

While the results that follow are less general than the existence
result in Theorem~\ref{thm:weakSyntacticObj}, the technique we refer to as
\emph{\slominski'{}s construction} allows a constructive description
of the right adjoint needed for the existence of syntactic quotients.
In particular, these results can be stated in the language of
fibrations.

In the context of Proposition \ref{prop:quotCongEq}, the forgetful functor
$\overline{U}_\mathcal{A}\colon \mathbb{Q}_\mathcal{A}\to \text{CoSub}_A$
can be recast as a lattice homomorphism
$\overline{U}_\mathcal{A}\colon \text{Cong}(\mathcal{A})\to \text{EqRel}(A)$,
where $\text{EqRel}(A)$
is the complete lattice of equivalence relations on $A$.
For lattices, a right adjoint
$\overline{R}_\mathcal{A}\colon \text{EqRel}(A)\to \text{Cong}(\mathcal{A})$
corresponds to an interior operator on the lattice of equivalence relations.
\slominski{} characterises this interior operator
using \emph{translations} \cite{slominski:1974:greatestCongruence} (for
more details on translations see \cite{bergman:2011:universalAlgebra}).

Let $T=(\Sigma,E)$ be an equational theory with signature $\Sigma$ and
$\mathcal{A}$ a $T$-algebra. The set of \emph{elementary translations}
of $\mathcal{A}$ is given by
\[
  \text{ETr}(\mathcal{A}) = \{\lambda x.f^\mathcal{A}(\ldots,a_{i-1},x,a_{i+1},\ldots)\mid (f,n)\in \Sigma, 0 \leq i < n, a_j\in A\}.
\] The set $\text{Tr}(\mathcal{A})$ of \emph{translations} is the
carrier of the submonoid of $(A^A,\text{id}_A,\circ)$ generated by
$\text{ETr}(\mathcal{A})$.  One may think of $\text{Tr}(\mathcal{A})$
as all the contexts over $\Sigma$ and $A$ that have one hole.

The next lemma connects translations with congruences.

\begin{lemma}\label{lem:translationsAndCong}
  Let $\mathcal{A}$ be a $T$-algebra for a finitary monad $T$ on $\mathbf{Set}$
  and $\sim$ an equivalence  on
  $A$. Then the following are equivalent:
  \begin{enumerate}
    \item $\sim$ is a congruence.
    \item $\forall a,b\in A. \left(a\sim b \implies \forall f\in \text{ETr}(\mathcal{A}).~f(a)\sim f(b)\right)$.
    \item $\forall a,b\in A. \left(a\sim b \implies \forall f\in \text{Tr}(\mathcal{A}).~f(a)\sim f(b)\right)$.
    \end{enumerate}
\end{lemma}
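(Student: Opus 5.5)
We prove the cycle of implications $1\Rightarrow 3\Rightarrow 2\Rightarrow 1$, working in the description of $\mathbf{Set}^T$ as a variety over the signature $\Sigma$ (as in the paragraph on finitary monads). There a congruence is an equivalence relation compatible with every basic operation: if $a_j\sim b_j$ for all $j<n$, then $f^\mathcal{A}(a_0,\ldots,a_{n-1})\sim f^\mathcal{A}(b_0,\ldots,b_{n-1})$.

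For $1\Rightarrow 3$ we first handle elementary translations. Take $\lambda x.f^\mathcal{A}(\ldots,a_{i-1},x,a_{i+1},\ldots)$ and $a\sim b$. Apply compatibility of $f$ to the tuples that agree everywhere except in position $i$, using reflexivity $a_j\sim a_j$ for the other coordinates. To pass to arbitrary translations, let
\[
S=\{g\in A^A\mid \forall a,b.\ a\sim b\implies g(a)\sim g(b)\}.
\]
Then $S$ contains $\text{id}_A$ and is closed under composition, so it is a submonoid of $(A^A,\text{id}_A,\circ)$. Since it contains $\text{ETr}(\mathcal{A})$, it contains the generated submonoid $\text{Tr}(\mathcal{A})$.

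The step $3\Rightarrow 2$ is immediate, because $\text{ETr}(\mathcal{A})\subseteq\text{Tr}(\mathcal{A})$.

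The only real work is $2\Rightarrow 1$, and it uses transitivity. Given $a_j\sim b_j$ for $j<n$, interpolate one coordinate at a time:
\[
f(a_0,\ldots,a_{n-1})\sim f(b_0,a_1,\ldots,a_{n-1})\sim f(b_0,b_1,a_2,\ldots)\sim\cdots\sim f(b_0,\ldots,b_{n-1}).
\]
Each link changes a single argument $i$, with the other arguments fixed at elements of $A$. So each link is an instance of an elementary translation applied to $a_i\sim b_i$. Transitivity of $\sim$ then closes the chain. Constants ($n=0$) impose no condition.

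I expect the main (mild) subtlety to be the identification of $T$-algebra congruences with $\Sigma$-compatible equivalences. By Proposition \ref{prop:quotCongEq}, congruences correspond to surjective homomorphisms. Since the variety is closed under quotients by compatible equivalences (its equations survive passage to a quotient), compatibility with the basic operations is exactly the condition required. Finitarity of $T$ is needed only so that operations have finite arity, which makes the interpolation chain finite.
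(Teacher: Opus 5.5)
Your proof is correct and follows the paper's route: the cycle $1\Rightarrow 3\Rightarrow 2\Rightarrow 1$, where the only substantive step $2\Rightarrow 1$ changes one argument at a time by elementary translations and closes the chain by transitivity of $\sim$. The one difference is that you spell out the submonoid argument for $1\Rightarrow 3$ and note that finite arity keeps the chain finite, details the paper passes over as obvious.
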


\begin{proof}
  The implications from $(1)$ to $(3)$
  and from $(3)$ to $(2)$ are obvious. It thus remains to show that $(2)$
  implies $(1)$.  Suppose $(1)$ holds. Let $(f,n)\in \Sigma$ and
  $a_i\sim b_i$. We must show that
  $f(a_0,\ldots,a_{n-1})\sim f(b_0,\ldots,b_{n-1})$. It follows from
  that
  $f(\ldots,a_{i-1},a_i,b_{i+1}\ldots)\sim
  f(\ldots,a_{i-1},b_i,b_{i+1},\ldots)$ for each $i$.  The
  conclusion then follows from transitivity of $\sim$.
\end{proof}

The following proposition translates \slominski{}'s construction
\cite[Theorem 1]{slominski:1974:greatestCongruence} to the
language of fibrations.

\begin{proposition}\label{prop:slominski}
  Let $T$ be a finitary monad on $\mathbf{Set}$ and $\mathcal{A}$ a $T$-algebra.
  The forgetful functor
  $\overline{U}_\mathcal{A}\colon \mathbb{Q}_\mathcal{A}\to \text{CoSub}_A$
  has a right-adjoint $(-)^\circ\colon \text{CoSub}_A\to \mathbb{Q}_\mathcal{A}$
  given by
  \[
    e^\circ = \bigwedge_{f\in \text{Tr}(\mathcal{A})} f^\ast e.
  \]
  The reindexing is performed in $\text{CoSub}_A$.
\end{proposition}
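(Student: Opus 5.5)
We work in the concrete setting of Proposition~\ref{prop:quotCongEq}: $\text{CoSub}_A$ is identified with the complete lattice $\text{EqRel}(A)$ of equivalence relations on $A$, and $\mathbb{Q}_\mathcal{A}$ with $\text{Cong}(\mathcal{A})$. Under this identification, reindexing a cosubobject $e$ along a function $f\colon A\to A$ in $\text{CoSub}$ becomes the inverse image. Factoring $e\circ f$ as a surjection followed by an injection gives the equivalence
\[
  f^\ast(\sim)=\{(a,b)\mid f(a)\sim f(b)\}.
\]
Meets are intersections. Writing $\sim$ for $\ker e$, the candidate is therefore
\[
  {\sim^\circ}=\{(a,b)\mid \forall f\in\text{Tr}(\mathcal{A}).~f(a)\sim f(b)\},
\]
and we must check two things. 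First, ${\sim^\circ}$ is a congruence, so that $(-)^\circ$ lands in $\mathbb{Q}_\mathcal{A}$. Second, for every congruence $\approx$ we have ${\approx}\subseteq{\sim}$ iff ${\approx}\subseteq{\sim^\circ}$. Since both categories are posets, this Galois correspondence is exactly the adjunction $\overline{U}_\mathcal{A}\dashv(-)^\circ$. Functoriality of $(-)^\circ$ is automatic from monotonicity of the inverse image and of meets.

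For the first point, ${\sim^\circ}$ is an intersection of equivalences, hence an equivalence. We then apply Lemma~\ref{lem:translationsAndCong}, criterion (2). Let $a\sim^\circ b$ and $g\in\text{ETr}(\mathcal{A})$. For every $f\in\text{Tr}(\mathcal{A})$, the composite $f\circ g$ is again in $\text{Tr}(\mathcal{A})$, because $\text{Tr}(\mathcal{A})$ is a submonoid containing the elementary translations. Hence $f(g(a))\sim f(g(b))$, that is, $g(a)\sim^\circ g(b)$. So ${\sim^\circ}$ is a congruence. This closure-under-composition step is the crux; everything else is bookkeeping.

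For the adjunction, first suppose ${\approx}\subseteq{\sim^\circ}$. Taking $f=\text{id}_A\in\text{Tr}(\mathcal{A})$ shows ${\sim^\circ}\subseteq{\sim}$, which also gives the counit, so ${\approx}\subseteq{\sim}$. Conversely, suppose $\approx$ is a congruence contained in $\sim$. By Lemma~\ref{lem:translationsAndCong}, criterion (3), $a\approx b$ implies $f(a)\approx f(b)$ for all $f\in\text{Tr}(\mathcal{A})$. Hence $f(a)\sim f(b)$ for all such $f$, so $a\sim^\circ b$.

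Finally, we transport this back along the equivalences of Proposition~\ref{prop:quotCongEq} and its $\text{CoSub}$ analogue (the case $T=\text{Id}$). The main obstacle I expect is verifying carefully that the reindexing $f^\ast e$ in $\text{CoSub}_A$ along a translation, which is a mere function and not a homomorphism, really corresponds to the preimage equivalence. One also has to check that the meet is taken in $\text{CoSub}_A$ and not in $\mathbb{Q}_\mathcal{A}$. Both follow from the epi--mono factorisation in $\mathbf{Set}$: the kernel of $m\circ e'$ with $m$ injective equals the kernel of $e'$.
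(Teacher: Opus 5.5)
Your proposal is correct and follows essentially the same route as the paper: pass to $\text{Cong}(\mathcal{A})\subseteq\text{EqRel}(A)$, show ${\sim^\circ}$ is an equivalence and a congruence via Lemma~\ref{lem:translationsAndCong}, then use $\text{id}_A\in\text{Tr}(\mathcal{A})$ and criterion (3) to show it is the greatest congruence contained in ${\sim}$. You also spell out two steps the paper leaves implicit: the closure-under-composition argument ($f\circ g\in\text{Tr}(\mathcal{A})$ for $g\in\text{ETr}(\mathcal{A})$), and the check that reindexing along a mere function in $\text{CoSub}_A$ is the preimage equivalence.
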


\begin{proof}
  By Proposition \ref{prop:quotCongEq} the fibre $\mathbb{Q}_\mathcal{A}$ 
  is equivalent to the lattice of congruences $\text{Cong}(\mathcal{A})$ 
  and $\mathbb{Q}_A$ is equivalent to the lattice of equivalence relations
  $\text{EqRel}(A)$. So instead of working with quotients and left morphisms
  we work with congruences and equivalence relations.

  Let $\sim$ be an equivalence relation on $A$.
  First we show
  that $\sim^\circ$ is again an equivalence.  For all $a\in A$ and
  $f\in \text{Tr}(\mathcal{A})$ it is clear that $f(a)\sim f(a)$ and
  therefore $a\sim^\circ a$. Symmetry and transitivity of $\sim^\circ$
  follow immediately from that of $\sim$.  Moreover, $\sim^\circ$ is a
  congruence by Lemma \ref{lem:translationsAndCong}.

  Finally, we show that $\sim^\circ$ is the greatest congruence
  contained in $\sim$. Obviously,
  $\text{id}_A\in \text{Tr}(\mathcal{A})$, so that
  $a\sim^\circ b \implies a\sim b$ and $\sim^\circ$ is contained in
  $\sim$.  Now assume that ${\equiv}\subseteq {\sim}$ is a
  congruence. If $a\equiv b$, then $f(a) \equiv f(b)$ for all
  $f\in \text{Tr}(\mathcal{A})$ by Lemma
  \ref{lem:translationsAndCong}. As ${\equiv}\subseteq {\sim}$ for
  each translation $f$, we also have $f(a)\sim f(b)$ and hence
  $a\sim^\circ b$ by
  definition. So ${\equiv}\subseteq {\sim^\circ}$.
\end{proof}

The constructive aspect
of this proposition is illustrated by the following examples.
The syntactic congruences considered are classical
\cite{wilke:1993:algebraic,almeida:1990:pseudovarieties,urbat:2017:eilenbergTheorems}; here
we compute them using \slominski{}'s interior operator.

\begin{example}\label{eg:finitaryMonadsSyntactic}
  Let $T$ be the list monad, so that $T$-algebras are monoids.
  An elementary translation of the free monoid $A^\ast$ on $A$ has the form
  $\lambda x.\ xv$ or $\lambda x.\ vx$ with $v\in A^\ast$.
  The translations of $A^\ast$ are therefore of the form
  $\lambda x.\ uxv$ with $u,v\in A^\ast$. An element of
  $\text{CoSub}_{A^\ast}$ is an equivalence $\sim$ on
  $A^\ast$. Applying the definition of $(-)^\circ$, \slominski{}'s
  construction yields
  \[
    {\sim^\circ} = \left(\bigcap_{f\in \text{Tr}(A^\ast)} f^{-1}(\sim) \right)
    = \left\{(x,y)\in A^2\mid \forall u,v\in A^\ast.\ uxv\sim uyv\right\}.
  \]
  Hence $x \sim^\circ y \iff \forall u,v\in A^\ast.\ uxv \sim uyv$.
  Given a language $L\subseteq A^\ast$, we obtain its syntactic quotient
  in $\text{EqRel}(A^\ast)$ by taking the epi-mono factorisation
  $m\circ e$ of $\chi_L\colon A^\ast\to 2$ and then the kernel
  of $e$, where $\ker e = \{(x,y)\mid x\in L \iff y\in L\}$.
  Theorem \ref{thm:weakSyntacticObj} tells us that applying $(-)^\circ$ to
  $\ker e$ yields the syntactic (monoid) congruence $\equiv_L$ of $L$:
  \[
    x \equiv_L y \iff \forall u,v\in A^\ast.\ (uxv \in L \iff uyv\in L).
  \]
  This coincides with the syntactic congruence of $L$ from algebraic
  language theory.
\end{example}

\begin{example}\label{eg:semiautomataSyntactic}
  Instantiating Proposition \ref{prop:slominski} and Theorem
  \ref{thm:weakSyntacticObj} to the semiautomaton
  $\mathcal{S}=(A^\ast,\epsilon,\sigma(u,a)=ua)$ allows deriving the
  Myhill-Nerode congruence of a language. Recall that $\mathcal{S}$ is
  the free $G^\ast$-algebra over $\emptyset$, where $G^\ast$ is the
  free monad generated by $G$, and that $G^\ast$ is associated with
  the equational theory whose signature contains one constant $c$ and
  one unary function symbol $a$ for each $a\in A$.
  Hence the elementary translations of $\mathcal{S}$
  are of the form $\lambda x.\ xu$ with $u\in A^+$, and the translations
  are of the form $\lambda x.\ xu$ with $u\in A^\ast$. For $L\subseteq A^\ast$,
  the syntactic congruence of $L$ in $\text{RCong}(A^\ast)$ is the right congruence
  \[
    x \approx_L y \iff \forall u\in A^\ast. \left( xu\in L \iff yu\in L \right).
  \] So $\approx_L$ is the Myhill-Nerode congruence of $L$, as expected.
\end{example}

The proof of Proposition \ref{prop:slominski} can easily be adapted to
the ordered and multisorted settings. We state the next proposition
as a paradigmatic case and look briefly at two further examples,
namely ordered monoids and Wilke algebras.

\begin{proposition}\label{prop:slominskiPosMulti}
  Proposition \ref{prop:slominski} also holds for strongly finitary monads on $\mathbf{Pos}$
  and for finitary monads on $\mathbf{Set}^k$ (for $k\in \mathbb{N}$).
\end{proposition}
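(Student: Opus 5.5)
We follow the proof of Proposition~\ref{prop:slominski} in both settings, replacing the lattice correspondences and adapting Lemma~\ref{lem:translationsAndCong}.

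\emph{Ordered case.} Let $T$ be strongly finitary on $\mathbf{Pos}$, and let $(\mathcal{A},\leq)$ be an ordered $T$-algebra.
\begin{enumerate}
\item By Proposition~\ref{prop:quotPreCongEq}, $\mathbb{Q}_\mathcal{A}\simeq\text{PreCong}(\mathcal{A},\leq)$. In the same way, $\text{CoSub}_A$ (order-preserving surjections out of $(A,\leq)$) is equivalent to the lattice of preorders on $A$ containing $\leq$.
\item Reindexing along a monotone $f\colon A\to A$ is inverse image $f^{-1}(\lesssim)=\{(a,b)\mid f(a)\lesssim f(b)\}$, and meets are intersections.
\item Translations are defined as before, from the underlying signature of the inequational theory. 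Each elementary translation is monotone, since the operations of an ordered algebra are monotone, so $f^\ast$ makes sense in $\text{CoSub}_A$.
\item The ordered analogue of Lemma~\ref{lem:translationsAndCong} reads: a preorder $\lesssim\supseteq\leq$ is a precongruence iff $a\lesssim b$ implies $f(a)\lesssim f(b)$ for all $f\in\text{ETr}(\mathcal{A})$, iff the same holds for all $f\in\text{Tr}(\mathcal{A})$. The proof changes the arguments one at a time and uses transitivity of $\lesssim$ in place of transitivity of $\sim$.
\item Set ${\lesssim^\circ}=\bigcap_{f\in\text{Tr}(\mathcal{A})}f^{-1}(\lesssim)$. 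This is a preorder. It contains $\leq$: if $a\leq b$ then $f(a)\leq f(b)$ by monotonicity of $f$, hence $f(a)\lesssim f(b)$. It is a precongruence by the lemma, and $\text{id}_A\in\text{Tr}(\mathcal{A})$ gives ${\lesssim^\circ}\subseteq{\lesssim}$.
\item Maximality: any precongruence $\preceq\subseteq\lesssim$ is closed under all translations by the lemma, hence $\preceq\subseteq\lesssim^\circ$. So $(-)^\circ$ is right adjoint to the inclusion, which is $\overline{U}_\mathcal{A}$ transported along the equivalences.
\end{enumerate}

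\emph{Multisorted case.} Let $T$ be finitary on $\mathbf{Set}^k$, corresponding to $k$-sorted algebras over a finitary signature.
\begin{enumerate}
\item Quotients correspond to sorted congruences, that is, $k$-tuples $(\sim_s)_{s<k}$ of equivalences compatible with the operations, by the multisorted analogue of Proposition~\ref{prop:quotCongEq}. Cosubobjects correspond to $k$-tuples of equivalences, with meets taken sortwise.
\item Elementary translations are now sorted maps $A_s\to A_t$: an operation $f\colon s_0\cdots s_{n-1}\to t$ with one argument of sort $s=s_i$ left open. $\text{Tr}(\mathcal{A})$ is the subcategory (not a monoid) of $\mathbf{Set}$ generated by these maps on the sorts, including the identities.
\item The interior operator is defined sortwise: for $a,b\in A_s$,
\[
a\sim^\circ_s b \iff \forall t.\ \forall f\colon A_s\to A_t\text{ in }\text{Tr}(\mathcal{A}).\ f(a)\sim_t f(b).
\]
Equivalently, at sort $s$ it is the intersection of $f^\ast(\sim_t)$ over all translations $f$ with domain sort $s$, so reindexing is still performed in $\text{CoSub}$.
\item The analogue of Lemma~\ref{lem:translationsAndCong} and the argument of the ordered case then go through unchanged, since Wilke-type examples only need changes of one argument at a time.
\end{enumerate}

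\emph{Main obstacle.} The delicate step is the ordered case, not the sorted one. One must check that the elementary translations are monotone, that $\lesssim^\circ$ still contains $\leq$ (so that it lies in $\text{PreCong}$), and that the equivalence in Proposition~\ref{prop:quotPreCongEq} sends the inclusion of precongruences into preorders to $\overline{U}_\mathcal{A}$. That last point holds because an order-preserving surjective homomorphism and the underlying monotone surjection determine the same induced preorder. Once these checks are made, the adjunction follows purely order-theoretically, exactly as in Proposition~\ref{prop:slominski}.
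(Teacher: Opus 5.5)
Your proposal is correct and follows the paper's route: adapt translations to the ordered or sorted signature, then rerun \slominski{}'s argument, noting that symmetry is never used in the ordered case. You also spell out checks the paper leaves implicit: monotonicity of translations (so that $f^\ast$ makes sense in $\text{CoSub}$ over $\mathbf{Pos}$), $\lesssim^\circ\supseteq\leq$, and sorted translations forming a category rather than a monoid. One justification is misplaced: the one-argument-at-a-time proof of the multisorted lemma works for every signature, not just ``Wilke-type examples'', because every intermediate term $\sigma(b_0,\ldots,b_{i-1},a_i,\ldots,a_{n-1})$ lies in the output sort $t$, where transitivity of $\sim_t$ applies.
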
 

\begin{proof}
  The finitary monads on $\mathbf{Set}^k$ correspond to multi-sorted
  equational theories, while the strongly finitary monads on $\mathbf{Pos}$
  correspond to inequational theories with classical signatures.
  In both cases, the definition of translation can easily be
  adapted. 

  For the multisorted setting, the proof of \slominski{}'s result generalises
  immediately. For the ordered setting, we are working with precongruences
  instead of congruences. Here we note that the proof never makes use of
  symmetry, and hence generalises again.
\end{proof}

We instantiate Proposition \ref{prop:slominskiPosMulti} in the ordered
setting to show how the syntactic precongruence for a language on the
free ordered monoid $(A^\ast,=)$ can be derived. We omit the details
for this example; the development follows closely that of Examples
\ref{eg:finitaryMonadsSyntactic} and
\ref{eg:semiautomataSyntactic}. 

\begin{example}
  Let $(A^\ast,=)$ be the free ordered monoid on $(A,=)$. The translations
  are precisely those of the free monoid $A^\ast$, that is, the functions
  of the form $\lambda x.\ uxv$ for $u,v\in A^\ast$. The syntactic
  precongruence of a language $L\subseteq A^\ast$ is concretely given by
  \[
    x \lesssim_L y \iff \forall u,v\in A^\ast.\left( uxv\in L \implies uyv\in L \right).
  \]
  In a similar way, we can derive a description of the ordered Myhill-Nerode
  congruence \cite{klima:2019:varietiesOfOrderedAutomata}.
\end{example}

Finally, we  instantiate Proposition \ref{prop:slominskiPosMulti} in
the multisorted setting to derive the syntactic Wilke algebra congruence
for an $\infty$-language $L\subseteq A^{\infty}$.

\begin{example}
  Consider the set $A^\omega$ of all infinite words over the set
  $A$. We write $A^\text{up}$ for the ultimately periodic
  words, which are of the form $uv^\omega$.  The two-sorted
  signature of Wilke algebras over sorts
  $S=\{\mathsf{f},\mathsf{i}\} $ consists of the three function
  symbols
  $(- \cdot -)\colon \mathsf{f} \times \mathsf{f} \to \mathsf{f}$,
  $(- \times -)\colon \mathsf{f} \times \mathsf{i} \to \mathsf{i}$ and
  $(-)^\omega\colon \mathsf{f}\to \mathsf{i}$.  The \emph{free Wilke algebra}
  over $A$ has carrier $(A^+,A^{\text{up}})$, with
  operations $u\cdot v=uv$, $u\times vw^\omega = uvw^\omega$ and
  $u^\omega = u^\omega$. We are interested in the syntactic congruence
  for $\infty$-languages, however for technical reasons we require that 
  the subset of infinite words is uniquely determined by the
  ultimately periodic words it contains. The generalisation of
  translations to the two-sorted setting is straightforward, one
  forms them as before but ensures that sorts match.

  For the first sort, there are three possible applications. A word $u$
  can either be multiplied by a finite word (on the left or right),
  by an infinite word (on the right), or be repeated infinitely often by
  applying $(-)^\omega$. For the second sort, the only possible operation 
  is left multiplication by a finite word. Given a language
  $L_0\uplus L_1 = L\subseteq A^\infty$
  with $L_0\subseteq A^+$ and $L_1 \subseteq A^{\omega}$, and applying Proposition \ref{prop:slominskiPosMulti}
  to the kernel of its characteristic function, we obtain the Wilke
  algebra congruence \cite{wilke:1993:algebraic}
  \begin{align*}
    u \equiv_1 v &\iff \forall x,y\in A^\ast.\left( xuy\in L_0 \iff xvy\in L_0 \right) \text{ and} \\
                 &\qquad\quad\forall x\in A^\ast,\alpha\in A^{\text{up}}.\left(xu\alpha\in L_1 \iff xv\alpha\in L_1\right) \text{ and }\\
    &\qquad\quad \forall x,y,z\in A^\ast.\left(x(yuz)^\omega\in L_1 \iff x(yvz)^\omega \in L_1\right), \\
    \alpha \equiv_2\beta &\iff \forall x\in A^\ast.\left(x\alpha\in L_1 \iff x\beta\in L_1\right).
  \end{align*}
  Using the laws for $(-)^\omega$, we can simplify the third line
  to
  \[
    \forall x,y\in A^\ast.\left( x(uy)^\omega\in L \iff x(vy)^\omega\in L \right).
  \]
  The congruence described here is the syntactic congruence of $L$ for
  Wilke algebras
  \cite{wilke:1993:algebraic}.
\end{example}

\section{Properties of Recognition and (Regular) Languages}

In language theory, regular languages are recognised by quotients with
finite codomain. Here we assume that the underlying category
$\mathbb{C}$ contains a distinguished class of `finite' objects, which
allows us to define a generic notion of regularity. We then propose a
sufficient condition under which regular languages are stable under
reindexing and study properties of the recognition fibration. The main
result of this section provides conditions under which regular
languages over $T$-algebras are closed under $\mathcal{J}$-limits and
$\mathcal{J}$-colimits. In concrete examples, this leads to
Boolean closure properties of regular languages. We state all results
relative to a monad $T$, a $T$-algebra $\mathcal{A}$, a finite
quotient $e\colon \mathcal{A}\leftmorph \mathcal{B}$ and a language
$p\colon A\to \Omega$.

\begin{definition}
  A language $p\colon A\to \Omega$ in $\mathbb{L}_\mathcal{A}$ is
 \emph{regular} if and only if it is recognised by some quotient
  $e\colon \mathcal{A}\leftmorph \mathcal{B}$ in
  $\mathbb{Q}_\mathcal{A}$ with finite codomain.
\end{definition}

Let $\mathbb{Q}_\text{fi}(T)$ be the full subcategory of $\mathbb{Q}$
on pairs $(\mathcal{A},e)$ for which the codomain of $e$ is
finite. In the presence of further assumptions, regularity is
preserved under reindexing. We say that right morphisms
\emph{reflect finiteness} if whenever $B$ is finite and
$m\colon A\rightmorph B$, $A$ is also finite.

\begin{proposition}
  Let right morphisms reflect finiteness. Then
  $\dom_{\mathbb{Q}_{\text{fi}}}\colon\mathbb{Q}_{\text{fi}}(\mathbb{T})\to \mathbb{C}^T$
  is a subfibration of the quotient fibration.
\end{proposition}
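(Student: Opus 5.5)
The plan is to show that the chosen Cartesian liftings of the quotient fibration from Proposition~\ref{prop:qFibration} stay inside $\mathbb{Q}_{\text{fi}}(T)$. Since $\mathbb{Q}_{\text{fi}}(T)$ is a full subcategory of $\mathbb{Q}(T)$, a morphism of $\mathbb{Q}_{\text{fi}}$ that is Cartesian in $\mathbb{Q}$ is automatically Cartesian in $\mathbb{Q}_{\text{fi}}$. The universal property only quantifies over morphisms into the codomain, and all the relevant objects and morphisms of $\mathbb{Q}_{\text{fi}}$ are among those of $\mathbb{Q}$. It is then enough to check that the domain of the lifting lies in $\mathbb{Q}_{\text{fi}}$; this makes $\dom_{\mathbb{Q}_{\text{fi}}}$ a fibration whose inclusion into $\mathbb{Q}$ preserves Cartesian morphisms, which is exactly a subfibration.

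The concrete step is the following. Take $e'\colon \mathcal{A}'\leftmorph \mathcal{B}'$ with $B'$ finite and a $T$-morphism $f\colon \mathcal{A}\to\mathcal{A}'$. Recall the construction in the proof of Proposition~\ref{prop:qFibration}: factor $e'\circ f = m\circ e$ in $\mathbb{C}$ with $e\colon A\leftmorph B$ and $m\colon B\rightmorph B'$, and equip $B$ with the filler structure map $c$. The Cartesian lifting is then $(f,m)\colon e\to e'$. Since $m$ is a right morphism into the finite object $B'$, the hypothesis that right morphisms reflect finiteness gives that $B$ is finite. Hence $e=f^\ast e'$ is an object of $\mathbb{Q}_{\text{fi}}(T)$, and the same holds for the cleavage, so the subfibration is cloven too.

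No step here is a real obstacle. The only point needing a little care is the remark that Cartesianness restricts along a full inclusion. Concretely, given $e''$ in $\mathbb{Q}_{\text{fi}}$, a morphism $(g,h)\colon e''\to e'$ and $j$ with $f\circ j=g$, the unique mediating $(j,k)$ provided in $\mathbb{Q}$ is already a morphism of $\mathbb{Q}_{\text{fi}}$ by fullness, so both existence and uniqueness carry over.
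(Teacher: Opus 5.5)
Your proposal is correct and follows essentially the same route as the paper. The Cartesian lift $(f,m)\colon f^\ast e'\to e'$ has $f^\ast e'$ with codomain equal to the domain of the right morphism $m$ into the finite object $B'$, so that codomain is finite because right morphisms reflect finiteness. Your extra remark that Cartesianness restricts along the full inclusion $\mathbb{Q}_{\text{fi}}\subseteq\mathbb{Q}$ spells out a step the paper leaves implicit.
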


\begin{proof}
  Let $e\colon \mathcal{A}\leftmorph \mathcal{B}$ with $B$ finite.  We
  show that the Cartesian morphism
  $\overline{f}(e)\colon f^\ast e\to e$ exists in
  $\mathbb{Q}_{\text{fi}}$.  Recall that $\overline{f}(e)=(f,m)$,
  where $m$ is the right morphism obtained by factoring
  $e\circ f=m\circ f^\ast e$. The right-hand side of this equation is
  a finite quotient whenever the domain of $m$ is finite.  As $B$ is
  finite and $m$ reflects finiteness, this is the case.
\end{proof}

For a uniform fibrational treatment of closure properties of regular
languages we assume from here on that
$\dom_\mathbb{F}\colon \mathbb{F}(T)\to \mathbb{C}^T$ is a
subfibration of the quotient fibration. We think of the fibre
$\mathbb{F}_\mathcal{A}$ as consisting of finite quotients over
$\mathcal{A}$. Moreover, for the remainder of this section we write
$\mathbb{R}$ for the category whose objects are pairs
$(\mathcal{A},e\colon \mathcal{A}\leftmorph \mathcal{B},p\colon A\to
\Omega)$ with $e\in \mathbb{F}_\mathcal{A}$ recognising
$p\in \mathbb{L}_\mathcal{A}$.  Our use of $\mathbb{Q}$ and
$\mathbb{L}$ remains unchanged. Finally, we write
$\mathbb{L}_{\text{reg}}$ for the full subcategory of $\mathbb{L}$
consisting of pairs $(\mathcal{A},p\colon A\to\Omega)$ with $p$
recognised by some $e\in \mathbb{F}_\mathcal{A}$. Hence for a
$T$-algebra $\mathcal{A}$, the fibre
$(\mathbb{L}_{\text{reg}})_\mathcal{A}$ consists of all regular
languages over $\mathcal{A}$. By construction, the restriction
$L_\text{reg}\colon \mathbb{L}_\text{reg}\to \mathbb{C}^T$ of the language
fibration $L\colon \mathbb{L}\to \mathbb{C}^T$ to $\mathbb{L}_\text{reg}$
is a fibration.

The first of the following facts shows that the finite quotients
recognising a language $p$ are down-closed in $\mathbb{F}$. This means
that if $e$ recognises $p$ and $c\colon e'\to e$, then $e'$ also
recognises $p$.  The mere existence of $c$ therefore suffices to
establish that $e'$ also recognises $p$.

\begin{proposition}\label{prop:recMDownClosed}
  The fibre $\mathbb{R}(T)_{(\mathcal{A},p)}$ is down-closed in $\mathbb{F}(T)$.
  If $\text{id}_A\colon \mathcal{A}\to \mathcal{A}\in \mathbb{F}_\mathcal{A}$,
  then $\mathbb{R}_{(\mathcal{A},p)}$ has an initial object.
\end{proposition}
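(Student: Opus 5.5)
The plan is to prove the two claims separately, both directly from Definition~\ref{defn:languageRecognition} and the description of morphisms in the fibre $\mathbb{Q}_\mathcal{A}$ (restricted to $\mathbb{F}_\mathcal{A}$).

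\emph{Down-closure.} Let $e\colon \mathcal{A}\leftmorph \mathcal{B}$ be in $\mathbb{F}_\mathcal{A}$ and recognise $p$, so $p = q\circ e$ for some $q\colon B\to\Omega$. Let $e'\colon\mathcal{A}\leftmorph\mathcal{B}'$ be in $\mathbb{F}_\mathcal{A}$ with a morphism $c\colon e'\to e$ in the fibre. Such a morphism is a map $c\colon B'\to B$ in $\mathbb{C}$ with $e\circ c = e'$ in the convention fixed after Proposition~\ref{prop:qFibration}. The orientation must be read so that $c$ goes from the codomain of the finer quotient to that of the coarser one; concretely, $e = c\circ e'$. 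Then
\[
p = q\circ e = (q\circ c)\circ e',
\]
so $e'$ recognises $p$ with witness $q\circ c$. Since $\mathbb{R}_{(\mathcal{A},p)}$ is a full subcategory of $\mathbb{F}_\mathcal{A}$, this is exactly down-closure. Only the existence of $c$ is used, not its uniqueness, which matches the remark before the statement.

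\emph{Initial object.} Assume $\text{id}_A\in\mathbb{F}_\mathcal{A}$. It recognises $p$, since $p = p\circ\text{id}_A$, so it is an object of $\mathbb{R}_{(\mathcal{A},p)}$. For any $e\colon\mathcal{A}\leftmorph\mathcal{B}$ in $\mathbb{R}_{(\mathcal{A},p)}$, a morphism $\text{id}_A\to e$ in the fibre is a map $k\colon A\to B$ with $k\circ\text{id}_A = e$. This forces $k = e$, which exists and is unique. Hence $\text{id}_A$ is initial in $\mathbb{F}_\mathcal{A}$ and a fortiori in the full subcategory $\mathbb{R}_{(\mathcal{A},p)}$. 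Here the fibre morphism is a pair $(\text{id}_A,k)$, and fullness of the inclusions means no extra conditions on $k$ arise.

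\emph{Main obstacle.} The only delicate point is bookkeeping of the direction of fibre morphisms. The text after Proposition~\ref{prop:qFibration} writes $e'\circ h = e$ for $h\colon e\to e'$, whereas the pair convention $(f,g)\colon e\to e'$ with $e'\circ f = g\circ e$ gives $g\circ e = e'$ over $\text{id}_\mathcal{A}$. I will fix the latter, Definition~\ref{defn:QuotientCategory}, as authoritative. Then "down" means the source of $c$, and the argument above applies with $e = c\circ e'$.
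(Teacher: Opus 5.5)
Your proof is correct and follows the paper's argument. Given a fibre map $c\colon e'\to e$ with $c\circ e'=e$, the witness $q\circ c$ shows that $e'$ recognises $p$, and $\text{id}_A$ is initial because the only fibre map $\text{id}_A\to e$ is $e$ itself, a step the paper leaves as "straightforward"; your choice to take Definition~\ref{defn:QuotientCategory} over the ill-typed equation stated after Proposition~\ref{prop:qFibration} matches the composition the paper's proof actually uses.
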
 

\begin{proof}
  First, let $e\colon \mathcal{A}\leftmorph \mathcal{B}\in \mathbb{R}_{(\mathcal{A},p)}$ and
  $w\colon e'\to e$ for $e'\colon \mathcal{A}\leftmorph \mathcal{C}\in \mathbb{F}_\mathcal{A}$,
  so $w\colon \mathcal{C}\leftmorph \mathcal{B}$ in $\mathbb{C}^T$.
  We wish to show that $e'\in \mathbb{R}_{(\mathcal{A},p)}$.
  As $e$ recognises $p$, there is some $q\in \mathbb{L}_\mathcal{B}$
  such that $p=q\circ e$. For $q\circ w\in \mathbb{L}_\mathbb{C}$ we
  have that $q\circ w \circ e' = q\circ e = p$ and hence
  $e'\in \mathbb{R}_{(\mathcal{A},p)}$ as required.
  
  Second, initiality of $\text{id}_A$ is straightforward.
\end{proof}

As a corollary, we obtain the following existence result for limits and
colimits in $\mathbb{R}_{(\mathcal{A},p)}$.

\begin{corollary}\label{cor:forgetfulCreatesCoLimits}
  The forgetful functor $(U_\mathbb{F})_{(\mathcal{A},p)}\colon \mathbb{R}(T)_{(\mathcal{A},p)}\to \mathbb{F}(T)_\mathcal{A}$
  creates all non-empty limits.
  If $\mathbb{R}(T)_{(\mathcal{A},p)}$ has a weakly terminal object,
  $(U_\mathbb{F})_{(\mathcal{A},p)}$ also creates
  all non-empty colimits.
\end{corollary}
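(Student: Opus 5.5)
The plan is to use that $\mathbb{R}(T)_{(\mathcal{A},p)}$ is a full subcategory of $\mathbb{F}(T)_\mathcal{A}$ which, by Proposition~\ref{prop:recMDownClosed}, is down-closed: whenever there is a morphism $w\colon e'\to e$ in $\mathbb{F}_\mathcal{A}$ with $e$ recognising $p$, then $e'$ recognises $p$. For a full inclusion, creating a (co)limit reduces to showing that the (co)limit computed in $\mathbb{F}_\mathcal{A}$ lies in the subcategory. Given such an object, the limiting (colimiting) cone lies in $\mathbb{R}_{(\mathcal{A},p)}$ by fullness. It is universal there because any competing cone in $\mathbb{R}_{(\mathcal{A},p)}$ is also a cone in $\mathbb{F}_\mathcal{A}$, and the mediating morphism of $\mathbb{F}_\mathcal{A}$ is a morphism of $\mathbb{R}_{(\mathcal{A},p)}$. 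Uniqueness of lifts up to the usual convention is inherited, since objects of $\mathbb{R}_{(\mathcal{A},p)}$ over $(\mathcal{A},p)$ are determined by their quotient component.

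For limits, let $D\colon\mathcal{J}\to\mathbb{R}_{(\mathcal{A},p)}$ with $\mathcal{J}$ non-empty, and suppose $\lim U D = e$ exists in $\mathbb{F}_\mathcal{A}$ with projections $\pi_j\colon e\to D_j$. I would pick any $j\in\mathcal{J}$, which is possible because $\mathcal{J}\neq\emptyset$. Then $\pi_j$ exhibits $e$ below the recognising quotient $D_j$, so by down-closure $e$ recognises $p$. The remark that non-emptiness is essential, since the empty limit is the top element, which need not recognise $p$, explains the hypothesis.

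For colimits, let $t$ be the weakly terminal object of $\mathbb{R}_{(\mathcal{A},p)}$ and let $c=\operatorname{colim} UD$ in $\mathbb{F}_\mathcal{A}$ with coprojections $\iota_j$. Each $D_j$ has some morphism $D_j\to t$ by weak terminality, but these need not form a cocone, so I cannot immediately induce $c\to t$. This is the main obstacle. I would first try to exploit thinness: in the proper setting used around Proposition~\ref{prop:recMDownClosed} the fibres are posets, so any family of maps into $t$ is automatically a cocone. The universal property then gives $c\to t$, and down-closure shows that $c$ recognises $p$. Without thinness I would instead argue at the level of languages, using non-emptiness to pick $j_0$ and aiming to show that any $q\circ\iota$ witnesses recognition. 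This does not seem to work in general, so I would commit to the thin (proper factorisation) setting, consistent with the surrounding text's use of meets and joins.
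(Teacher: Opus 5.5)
Your argument for limits is the paper's: pick an index $j$, use the projection $e\to D_j$ and down-closure (Proposition~\ref{prop:recMDownClosed}), then fullness. For colimits, the obstacle you identify is real, and the colimit claim is in fact false with only a weakly terminal object. Take $\mathbb{C}=\mathbf{Set}$ with the trivial factorisation system, $T=\text{Id}$, finite codomains as $\mathbb{F}$, $A=\{1,2\}$ and $p=\chi_{\{2\}}$. Then $p\colon A\to 2$ is weakly terminal in $\mathbb{R}_{(A,p)}$. Let $e_1\colon A\to\{x,y,z\}$ and $e_2\colon A\to\{a,b\}$ be the injections $1\mapsto x,2\mapsto y$ and $1\mapsto a,2\mapsto b$. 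Both recognise $p$. The maps $f,g\colon\{x,y,z\}\to\{a,b\}$ with $f(x)=g(x)=a$, $f(y)=g(y)=b$, $f(z)=a$, $g(z)=b$ are parallel morphisms $e_1\to e_2$ in the fibre. Their coequaliser in $\mathbb{F}_A$ is $A\to 1$, which does not recognise $p$. In fact $\mathbb{R}_{(A,p)}$ has no cocone over this pair at all. Your suspicion that the ``language-level'' fallback cannot work is therefore justified.

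The paper's proof quietly replaces ``weakly terminal'' by ``terminal''. If $e_t$ is terminal in $\mathbb{R}_{(\mathcal{A},p)}$, the maps $t_i\colon e_i\to e_t$ are unique. Hence $t_j\circ D(d)=t_i$ holds automatically for every $d\colon i\to j$, so they form a cocone. The universal property then gives $e\to e_t$, and down-closure puts $e$ in $\mathbb{R}_{(\mathcal{A},p)}$. This needs no thinness. Your route is the special case where the fibres are posets, so that weakly terminal already means terminal. It is sound, but it needlessly restricts you to proper factorisation systems. The observation you were missing is that uniqueness of maps into a genuinely terminal object, i.e.\ a syntactic quotient, is exactly what forces the cocone condition. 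The colimit part should be stated and proved under that hypothesis rather than under properness.
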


\begin{proof}
  First we show that the forgetful functor creates all non-empty limits.
  Let $D\colon I\to \mathbb{R}_{(\mathcal{A},p)}$ be a non-empty diagram and
  let $(e\colon \mathcal{A}\to \mathcal{B},\sigma)$ be a limiting cone
  for $(U_\mathbb{F})_p\circ D$. Take any $e_i\colon \mathcal{A}\to \mathcal{A}_{i}$.
  As the diagram is non-empty, there is a morphism
  $\sigma_i\colon e\to e_i$. As $e_i\in \mathbb{R}_{(\mathcal{A},p)}$
  and $\mathbb{R}_{(\mathcal{A},p)}$ is
  down-closed by
  Proposition \ref{prop:recMDownClosed}, $e\in
  \mathbb{R}_{(\mathcal{A},p)}$ as well and
  $(e,\sigma)$ is limiting in $\mathbb{R}_{(\mathcal{A},p)}$.

  Next suppose $e_t$ is the terminal object in $\mathbb{R}_{(\mathcal{A},p)}$.
  We show that $(U_\mathbb{F})_p$ creates all non-empty colimits.
  Let $D\colon I\to \mathbb{R}_{(\mathcal{A},p)}$ be a non-empty diagram,
  $(e\colon \mathcal{A}\to \mathcal{B},\sigma)$ be the colimiting cocone
  for $(U_\mathbb{F})_p\circ D$.
  For each $e_i$ in the diagram $D$, let $t_i\colon e_i\to e_t$ be the
  unique map to the terminal object in $\mathbb{R}_{(\mathcal{A},p)}$. Then the
  $t_i\colon e_i\to e_t$ are also maps in $\mathbb{F}_\mathcal{A}$ and
  by the universal property of $e$, there is a morphism
  $c\colon e\to e_t$ in $\mathbb{F}_\mathcal{A}$. Finally, as $e_t\in \mathbb{R}_{(\mathcal{A},p)}$
  and $\mathbb{R}_{(\mathcal{A},p)}$ is down-closed by Proposition
  \ref{prop:recMDownClosed}, $e\in \mathbb{R}_{(\mathcal{A},p)}$ is colimiting.
\end{proof}

The next proposition generalises well-known connections between
regularity and  finiteness of syntactic quotients.

\begin{proposition}\label{prop:regularSyntactic}
  Let $\overline{U}_\mathcal{A}\colon \mathbb{Q}(T)_\mathcal{A}\to \text{CoSub}_A$ have
  a right adjoint $\overline{R}_\mathcal{A}$,
  $\mathbb{R}(\text{Id}_\mathbb{C})_{(\mathcal{A},p)}$ have a terminal object
  $e_t$ and $\mathbb{F}_\mathcal{A}$ be upward-closed in $\mathbb{Q}_\mathcal{A}$.
  Then $p$ is regular if and only if $\overline{R}_{\mathcal{A}}(e_t)\in \mathbb{F}(T)_\mathcal{A}$.
\end{proposition}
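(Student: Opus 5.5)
We prove both directions, using the argument of Theorem~\ref{thm:weakSyntacticObj} to see that $\overline{R}_{\mathcal{A}}(e_t)$ is a weakly terminal object of the fibre $\mathbb{R}(T)_{(\mathcal{A},p)}$ of quotients recognising $p$. Here the fibre is taken over all of $\mathbb{Q}_\mathcal{A}$, not only $\mathbb{F}_\mathcal{A}$. The structure is: weak terminality together with upward closure of $\mathbb{F}_\mathcal{A}$ gives one direction, and the fact that $\overline{R}_{\mathcal{A}}(e_t)$ itself recognises $p$ gives the other.

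First I establish the two properties of $\overline{R}_{\mathcal{A}}(e_t)$. It recognises $p$: the counit $\epsilon_{e_t}\colon \overline{U}_\mathcal{A}\overline{R}_\mathcal{A}(e_t)\to e_t$ is a morphism in $\text{CoSub}_A$, and $e_t$ recognises $p$. Hence $\overline{R}_\mathcal{A}(e_t)$ recognises $p$ by the down-closure argument of Proposition~\ref{prop:recMDownClosed}. That argument only uses the composite $q\circ w$, so it works verbatim in $\text{CoSub}_A$. It is weakly terminal: let $e\in\mathbb{Q}_\mathcal{A}$ recognise $p$. Then $\overline{U}_\mathcal{A}(e)$ recognises $p$, so terminality of $e_t$ in $\mathbb{R}(\text{Id}_\mathbb{C})_{(A,p)}$ yields a map $\overline{U}_\mathcal{A}(e)\to e_t$. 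Its adjoint transpose is a map $e\to\overline{R}_\mathcal{A}(e_t)$ in $\mathbb{Q}_\mathcal{A}$.

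($\Leftarrow$) Suppose $\overline{R}_\mathcal{A}(e_t)\in\mathbb{F}_\mathcal{A}$. Since it recognises $p$, it is a finite quotient recognising $p$, so $p$ is regular.

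($\Rightarrow$) Suppose $p$ is regular, witnessed by some $e\in\mathbb{F}_\mathcal{A}$ recognising $p$. Since $\mathbb{F}_\mathcal{A}\subseteq\mathbb{Q}_\mathcal{A}$, $e$ lies in the full recognition fibre. Weak terminality gives a morphism $e\to\overline{R}_\mathcal{A}(e_t)$ in $\mathbb{Q}_\mathcal{A}$. Upward closure of $\mathbb{F}_\mathcal{A}$ in $\mathbb{Q}_\mathcal{A}$ then gives $\overline{R}_\mathcal{A}(e_t)\in\mathbb{F}_\mathcal{A}$. Here I read upward-closed as: if $e\in\mathbb{F}_\mathcal{A}$ and there is a morphism $e\to e'$ in $\mathbb{Q}_\mathcal{A}$, then $e'\in\mathbb{F}_\mathcal{A}$, matching the convention of Proposition~\ref{prop:recMDownClosed} that morphisms point from finer to coarser quotients.

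The main obstacle is bookkeeping. In this section $\mathbb{R}$ was redefined to consist of finite quotients, so I must work with the unrestricted recognition fibre, as in Theorem~\ref{thm:weakSyntacticObj}. I also need to check that down-closure of recognition, stated in Proposition~\ref{prop:recMDownClosed} for $\mathbb{F}$, holds for arbitrary morphisms in $\text{CoSub}_A$ and $\mathbb{Q}_\mathcal{A}$. It does, since its proof only composes with $q$.
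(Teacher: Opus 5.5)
Your proof is correct and matches the paper's argument: ($\Rightarrow$) uses the arrow from a finite recognising quotient $e$ to $\overline{R}_\mathcal{A}(e_t)$ in $\mathbb{Q}_\mathcal{A}$ together with upward closure of $\mathbb{F}_\mathcal{A}$, and ($\Leftarrow$) uses that $\overline{R}_\mathcal{A}(e_t)$ recognises $p$. The only difference is presentational. Where the paper simply appeals to ``$\overline{R}_\mathcal{A}$ preserves terminal objects'' (via the restricted adjunction of Theorem~\ref{thm:weakSyntacticObj}), you unpack this into the counit/down-closure step and the adjoint transpose, needing only existence of the arrow, and you keep track of the fact that $\mathbb{R}$ was redefined to contain only finite quotients.
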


\begin{proof}
  If $p$ is regular, then there exists some $e\in \mathbb{R}_{(\mathcal{A},p)}$.
  As there exists a unique arrow $e \to \overline{R}_\mathcal{A}(e_t)$
  ($\overline{R}_\mathcal{A}$ preserves terminal objects)
  in $\mathbb{Q}_\mathcal{A}$ and $\mathbb{F}_\mathcal{A}$ is upward closed,
  $\overline{R}_\mathcal{A}(e_t)\in \mathbb{F}_\mathcal{A}$.

  For the converse direction, we know that $\overline{R}_{\mathcal{A}}(e_t)$
  recognises $p$. Thus $p$ is regular by definition.
\end{proof}

Finally, we present sufficient conditions under which regular
languages are closed under $\mathcal{J}$-limits or
$\mathcal{J}$-colimits. The ideas behind the proof can be understood
through semiautomata. It is well-known that the product construction
on semiautomata can be used to accept both the union and intersection
of two regular languages, considering unions and intersections of
accepting states, respectively. A similar construction applies to
$\mathcal{J}$-limits and $\mathcal{J}$-colimits. 

\begin{theorem}\label{thm:limitsColimitsRegular}
  Let the fibre $\mathbb{F}_\mathcal{A}$ have a cone for every discrete diagram of size $|\mathcal{J}|$.
  If the language fibration $L$ has fibred $\mathcal{J}$-limits
  ($\mathcal{J}$-colimits), then the fibre $(\mathbb{L}_{\text{reg}})_\mathcal{A}$
  has $\mathcal{J}$-limits ($\mathcal{J}$-colimits).
\end{theorem}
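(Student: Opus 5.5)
Given a diagram $D\colon \mathcal{J}\to (\mathbb{L}_{\text{reg}})_\mathcal{A}$, I would compute its limit (colimit) in the full fibre $\mathbb{L}_\mathcal{A}$ and show that this object is again regular. Since $(\mathbb{L}_{\text{reg}})_\mathcal{A}$ is a full subcategory of $\mathbb{L}_\mathcal{A}$, the limiting (colimiting) cone in $\mathbb{L}_\mathcal{A}$ is then also limiting (colimiting) in $(\mathbb{L}_{\text{reg}})_\mathcal{A}$, and nothing further is needed.

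\textbf{Step 1 (a common recogniser).} For each $j\in\mathcal{J}$, write $p_j=D(j)\colon A\to\Omega$ and choose, by regularity, a finite quotient $e_j\colon\mathcal{A}\leftmorph\mathcal{B}_j$ in $\mathbb{F}_\mathcal{A}$ and a language $q_j\colon B_j\to\Omega$ with $p_j=q_j\circ e_j$. The family $(e_j)_{j}$ is a discrete diagram of size $|\mathcal{J}|$ in $\mathbb{F}_\mathcal{A}$. By hypothesis it has a cone, that is, some $e\colon\mathcal{A}\leftmorph\mathcal{B}$ in $\mathbb{F}_\mathcal{A}$ with morphisms $w_j\colon e\to e_j$, so $w_j\circ e=e_j$. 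By Proposition~\ref{prop:recMDownClosed}, $e$ recognises every $p_j$, namely $p_j=(q_j\circ w_j)\circ e=e^\ast(q_j\circ w_j)$, using Assumption~\ref{ass:languageReindexing}. This is the abstract version of the product automaton.

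\textbf{Step 2 (transport the diagram to $\mathcal{B}$).} I want a diagram $D'\colon\mathcal{J}\to\mathbb{L}_\mathcal{B}$ with $D'(j)=q_j\circ w_j$ and $e^\ast\circ D'\cong D$. The hard part is the arrows: a morphism $D(j)\to D(k)$ in $\mathbb{L}_\mathcal{A}$ need not come from a morphism $q_j w_j\to q_k w_k$ in $\mathbb{L}_\mathcal{B}$. I would handle this using fullness of $e^\ast$ on the relevant objects. In the thin examples this is immediate: $e$ is epi, so $x\circ e\le y\circ e$ implies $x\le y$ pointwise on the image. In general I would either assume, or try to derive from $e\in\mathcal{L}$ together with the hom-category structure, that $e^\ast\colon\mathbb{L}_\mathcal{B}\to\mathbb{L}_\mathcal{A}$ is full and faithful. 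If that fails, there is a fallback for $\mathcal{J}$ discrete (products and coproducts), where no arrows need transporting; the Boolean applications only need that case.

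\textbf{Step 3 (conclude).} Take $q=\lim D'$ (or $\operatorname{colim} D'$) in $\mathbb{L}_\mathcal{B}$, which exists because $L$ has fibred $\mathcal{J}$-limits. Since these limits are fibred, reindexing preserves them, so $e^\ast q=q\circ e$ is a limit of $e^\ast\circ D'\cong D$ in $\mathbb{L}_\mathcal{A}$. Hence $\lim D\cong q\circ e$ is recognised by the finite quotient $e$ and is therefore regular. If regularity is required to be invariant under isomorphism in the fibre, I would simply choose the limit of $D$ to be $q\circ e$ itself. The colimit case is dual in the languages but uses the same cone of quotients. I expect Step 2, the lifting of the diagram's morphisms along $e^\ast$, to be the main obstacle.
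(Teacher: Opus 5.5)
Your strategy is the paper's. A cone $e$ over the finite recognisers $e_j$ in $\mathbb{F}_\mathcal{A}$ gives a common finite recogniser by down-closure. You then form the limit of the $q_j$ in $\mathbb{L}_\mathcal{B}$, and $e^\ast$ carries it to the limit of the $p_j$; the colimit case uses the same cone.

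The obstacle you isolate in Step~2 is genuine, and the paper's proof passes over it: it writes $\lim_{\mathcal{J}} q_i$ as though the $q_i$ already formed a $\mathcal{J}$-diagram in $\mathbb{L}_\mathcal{B}$. Your hope of deriving fullness of $e^\ast$ from $e\in\mathcal{L}$ cannot succeed. Assumption~\ref{ass:languagePseudofunctor} does not tie the hom-categories to left morphisms in any way, and without fullness the statement can fail for non-discrete $\mathcal{J}$ (see below).

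So your argument is a complete proof, identical to the paper's, for discrete $\mathcal{J}$, which is the case behind the Boolean closure properties in the examples. For general $\mathcal{J}$ you must add the hypothesis that $e^\ast\colon\mathbb{L}_\mathcal{B}\to\mathbb{L}_\mathcal{A}$ is fully faithful for each $e\in\mathbb{F}_\mathcal{A}$. This holds in the $\mathbf{Set}$ and $\mathbf{Pos}$ examples, where fibres are ordered pointwise and $e$ is surjective. Under it, the unique lifts give $D'$ with $e^\ast D'=D$ on the nose, and your Step~3 goes through as written.

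Here is an instance where the statement fails.
\begin{itemize}
\item Take $\mathbf{Set}$ with surjections and injections, $T$ the list monad, $\mathcal{A}=A^\ast$, and $\mathbb{F}$ the surjections onto finite monoids, which has cones over finite families.
\item Let $\mathcal{K}$ be the full subcategory of finite sets on $\emptyset,1,2$, which has equalisers, and put $\Omega=\{\emptyset,1,2\}$.
\item Make $\mathbb{C}(X,\Omega)$ the category $\mathcal{K}^X$: a morphism $p\to p'$ is a family of maps $p(x)\to p'(x)$ in $\mathcal{K}$, and $h^\ast$ is precomposition on objects and on morphisms.
\end{itemize}
Assumptions~\ref{ass:languagePseudofunctor} and~\ref{ass:languageReindexing} hold, and $L$ has fibred equalisers, computed pointwise.

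Fix a non-regular $N\subseteq A^\ast$ and let $p_a=p_b$ be the constant language with value $2$. Let $\phi=\text{id}$, and let $\psi_w$ be the identity of $2$ for $w\in N$ and the swap for $w\notin N$. Any $r$ with a map $\chi\colon r\to p_a$ equalising $\phi,\psi$ has $r(w)=\emptyset$ for $w\notin N$, since the swap has no fixed points.

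Now test a putative equaliser in $(\mathbb{L}_{\text{reg}})_{A^\ast}$ against a single $w_0\in N$. Take the language that is $1$ at $w_0$ and $\emptyset$ elsewhere; it is regular, as $\{w_0\}$ is, and it carries two equalising cones. Both must factor through the equaliser, which forces $\chi_{w_0}$ to be onto, so $r(w_0)=2$. Hence the equaliser would satisfy $r^{-1}(2)=N$, contradicting regularity of $r$.

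Here $e^\ast$ is faithful but not full even though $e$ is surjective. The morphism $\psi$ descends to no finite quotient, so no lifted diagram $D'$ exists.
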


\begin{proof}
  We only show the theorem for $\mathcal{J}$-limits, the proof for
  $\mathcal{J}$-colimits is analogous.
  Let $\mathcal{J}$ be a diagram in
  $(\mathbb{L}_{\text{reg}})_\mathcal{A}$ which has a limit
  $p$ in $\mathbb{L}_\mathcal{A}$. For each language
  $p_i$ in the diagram, assume that $e_i\in \mathbb{F}_\mathcal{A}$
  recognises it.  As there are $\mathcal{J}$-many $e_i$, there exists
  a cone $(e\colon \mathcal{A}\leftmorph \mathcal{B},\sigma)$ over the
  $e_i$. Now $e_i\in \mathbb{R}_{(\mathcal{A},p_i)}$ and
  $\sigma_i\colon e\to e_i$ for all $i$, so that
  $e\in \mathbb{R}_{(\mathcal{A},p_i)}$ for all $i$ as
  $\mathbb{R}_{(\mathcal{A},p_i)}$ is down-closed by Proposition
  \ref{prop:recMDownClosed}. Equivalently,
  therefore, $p_i\in \mathbb{R}_{(\mathcal{A},e)}$ for all $i$.
  Let $q_i\in \mathbb{L}_\mathcal{B}$ be such that
  $p_i = q_i\circ e = e^\ast q_i$.  As
  $\dom_{\mathbb{L}}$ has fibred $\mathcal{J}$-limits, the $\mathcal{J}$-limit
  $q$ exists and moreover it is
  preserved by $e^\ast$. Hence
  \[
    e^\ast q = e^\ast(\lim_\mathcal{J} q_i)=\lim_\mathcal{J} (e^\ast q_i) = \lim_\mathcal{J} p_i=p.
  \]
  It follows that $p$ is recognised by $e\in \mathbb{F}_\mathcal{A}$
  and hence $p\in (\mathbb{L}_{\text{reg}})_\mathcal{A}$.
\end{proof}

This theorem only considers regular languages over a specific $T$-algebra.
The following corollary establishes it more globally for fibrations.

\begin{corollary}
  Let the finite quotient fibration $\dom_\mathbb{F}$ have a
  cone for every discrete diagram of size
  $|\mathcal{J}|$ in each fibre. If the language fibration $L$ has fibred
  $\mathcal{J}$-limits ($\mathcal{J}$-colimits), then so does the regular language
  fibration $L_\text{reg}$.
\end{corollary}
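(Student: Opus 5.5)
The plan is to reduce the fibred statement to two checks: fibrewise existence of $\mathcal{J}$-limits ($\mathcal{J}$-colimits) in $L_\text{reg}$, and their preservation under reindexing. Fibrewise existence is exactly Theorem~\ref{thm:limitsColimitsRegular} applied at each $T$-algebra $\mathcal{A}$. Its hypotheses hold at every fibre: by assumption every fibre $\mathbb{F}_\mathcal{A}$ has a cone for every discrete diagram of size $|\mathcal{J}|$, and $L$ has fibred $\mathcal{J}$-limits. So each fibre $(\mathbb{L}_\text{reg})_\mathcal{A}$ has $\mathcal{J}$-limits.

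I also need to record that these limits are computed as in $\mathbb{L}_\mathcal{A}$. The proof of Theorem~\ref{thm:limitsColimitsRegular} shows that the limit $p$ in $\mathbb{L}_\mathcal{A}$ of a diagram of regular languages is itself regular. Since $(\mathbb{L}_\text{reg})_\mathcal{A}$ is a full subcategory of $\mathbb{L}_\mathcal{A}$, the limiting cone in $\mathbb{L}_\mathcal{A}$ lies entirely in the subcategory. It is therefore also limiting there, because fullness means there are fewer competing cones and the same mediating maps. Consequently the inclusion $(\mathbb{L}_\text{reg})_\mathcal{A}\hookrightarrow\mathbb{L}_\mathcal{A}$ creates $\mathcal{J}$-limits.

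For preservation, take $h\colon\mathcal{A}\to\mathcal{B}$ in $\mathbb{C}^T$. Reindexing in $L_\text{reg}$ is the restriction of reindexing in $L$, namely $h^\ast p=p\circ h$ by Assumption~\ref{ass:languageReindexing}. This restriction is well defined because $L_\text{reg}$ is a fibration: by Lemma~\ref{lem:fibredRecognition} and the fact that $\dom_\mathbb{F}$ is a subfibration, $h^\ast e\in\mathbb{F}_\mathcal{A}$ recognises $h^\ast p$. Now let $(p_i)$ be a $\mathcal{J}$-diagram in $(\mathbb{L}_\text{reg})_\mathcal{B}$ with limit $p=\lim p_i$, computed in $\mathbb{L}_\mathcal{B}$. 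Since $L$ has fibred $\mathcal{J}$-limits, $h^\ast p\cong\lim h^\ast p_i$ in $\mathbb{L}_\mathcal{A}$. All the $h^\ast p_i$ are regular, so by the previous paragraph this limit in $\mathbb{L}_\mathcal{A}$ is also the limit in $(\mathbb{L}_\text{reg})_\mathcal{A}$. Hence $h^\ast$ preserves $\mathcal{J}$-limits in $L_\text{reg}$. The colimit case is dual, using the colimit half of Theorem~\ref{thm:limitsColimitsRegular}.

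The main subtlety is the step identifying limits in $(\mathbb{L}_\text{reg})_\mathcal{A}$ with those in $\mathbb{L}_\mathcal{A}$. The statement of Theorem~\ref{thm:limitsColimitsRegular} only asserts existence, so I will extract from its proof the stronger fact that the ambient limit is regular. Once that is in place, preservation transfers directly from $L$.
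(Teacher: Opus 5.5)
Your proposal is correct and follows the paper's route. You get fibrewise existence from Theorem~\ref{thm:limitsColimitsRegular}, and preservation is inherited from $L$ because reindexing in $L_\text{reg}$ is the restriction of reindexing in $L$; the only addition is that you make explicit what the paper's one-line proof leaves tacit, namely that (by the proof of the theorem) $\mathcal{J}$-limits and $\mathcal{J}$-colimits in $(\mathbb{L}_\text{reg})_\mathcal{A}$ are computed as in $\mathbb{L}_\mathcal{A}$, which is exactly what lets preservation transfer.
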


\begin{proof}
  The reindexing of $L_\text{reg}$ is the same as that of $L$, the
  corollary then follows immediately from Theorem
  \ref{thm:limitsColimitsRegular}.
\end{proof}

\begin{example}
  In $\mathbf{Set}$ we choose finite sets as finite objects.
  These interact as follows with the factorisation system:
  if $e\colon A\twoheadrightarrow B$
  and $A$ is finite so is $B$, and if $m\colon A\hookrightarrow B$ and
  $B$ is finite, so is $A$. A surjective homomorphism
  $e\colon \mathcal{A}\twoheadrightarrow \mathcal{B}$ with finite codomain corresponds
  to a congruence on $\mathcal{A}$ that has finitely many equivalence classes
  (a congruence of \emph{finite index}). Hence the fibre
  $\mathbb{F}_\mathcal{A}$ is equivalent to the lattice of congruences of
  $\mathcal{A}$ of finite index (with inclusions
  as morphisms). This lattice has finite meets
  and arbitrary non-empty joins.

  Let $\mathcal{A}$ be a $T$-algebra and $L\subseteq \mathcal{A}$. As
  each language has a syntactic quotient, the sublattice of
  congruences of finite index that saturate $L$ has finite meets and
  arbitrary non-empty joins as well by Corollary
  \ref{cor:forgetfulCreatesCoLimits}. Moreover, $L$ is regular if and
  only if its syntactic congruence is of finite index by Proposition
  \ref{prop:regularSyntactic}.  As the fibre $\mathbb{F}_\mathcal{A}$
  only has finite meets, regular languages are closed under finite
  intersections and finite unions by Theorem
  \ref{thm:limitsColimitsRegular}.

  Each language fibre $\mathbb{L}_\mathcal{B}$ forms a
  Boolean algebra and is complemented. Moreover, complementation
  is preserved by reindexing. One can then show that regular languages are
  also closed under complementation.
\end{example}

\begin{example}
  In the category $\mathbf{Pos}$, consider finite posets as finite objects.
  Again the factorisation system interacts
  well with this  notion of finiteness. As in the previous example,
  we find that the fibre $\mathbb{F}_\mathcal{A}$ consists of precongruences
  whose induced equivalence relation is of finite index. The same results
  apply in this context, the only difference being that $\mathbb{L}_\mathcal{B}$
  is only a lattice and not a Boolean algebra. Hence here, regular languages
  are no longer closed under complementation. 
\end{example}

The fibrational setting and our notion of regularity therefore allow us to
formulate closure properties of regular languages uniformly in
different settings.

\section{Conclusion}

We gave a fibrational account of language recognition by quotients
based on the idea that recognition is stable under reindexing.  For a
monad $T$ on a category $\mathbb{C}$ equipped with a factorisation system,
we introduced formal definitions for quotients and languages.
We identified sufficient conditions for
the existence of the corresponding quotient and language fibrations,
thus ensuring that our abstract notion of recognition is stable under
reindexing.  This in turn gives rise to a combined recognition
fibration (see Figure~\ref{eq:story}).  Moreover, we proved an
existence theorem for syntactic quotients for languages in the
presence of a factorisation system and analysed how it relates to
\slominski{}'s construction of a greatest congruence contained in an
equivalence.  We then expressed \slominski{}'s construction
in the language of fibrations and used it, as in our existence
theorem, for deriving syntactic congruences for paradigmatic instances,
including ordered monoids and Wilke algebras. Finally, we defined
regular languages relative to a distinguished class of finite objects
and provided necessary conditions under which regular languages are
closed under limits and colimits.

We envisage the following directions for further work.  First, we
currently discuss \slominski{}'s construction only for varieties of
(ordered/multisorted) varieties. Instead a
fully fibrational treatment in the presence of a factorisation system
would be desirable. Boja\'nczyk's use of polynomials in
\cite{bojanczyk:2015:recognisableLanguagesOverMonads} provides a
promising starting point. Second, beyond the fibrational derivation of
Boolean closure properties of languages, a study of algebraic closure
properties remains to be undertaken. Third, our abstract definition of
languages depends on the parameter $\Omega$ that leaves many
choices. Particularly interesting seem quantitative rather than
qualitative notions of language as for instance the formal power
series (see \cite{droste:2009:semiringsAndFormalPowerSerier}) used in
combination with weighted or probabilistic automata.

\bibliographystyle{plain}
\bibliography{references}

@incollection {droste:2009:semiringsAndFormalPowerSerier,
    AUTHOR = {Droste, Manfred and Kuich, Werner},
     TITLE = {Chapter 1: {S}emirings and formal power series},
 BOOKTITLE = {Handbook of weighted automata},
    SERIES = {Monogr. Theoret. Comput. Sci. EATCS Ser.},
     PAGES = {3--28},
 PUBLISHER = {Springer, Berlin},
      YEAR = {2009},
      ISBN = {978-3-642-01491-8},
   MRCLASS = {68Q70},
  MRNUMBER = {2777727},
       DOI = {10.1007/978-3-642-01492-5\_1},
       URL = {https://doi.org/10.1007/978-3-642-01492-5_1},
}

@book{johnson:2021:2DimensionalCategories,
    AUTHOR = {Johnson, Niles and Yau, Donald},
     TITLE = {2-dimensional categories},
 PUBLISHER = {Oxford University Press, Oxford},
      YEAR = {2021},
     PAGES = {xix+615},
      ISBN = {978-0-19-887138-5; 978-0-19-887137-8},
   MRCLASS = {18-02 (18N10 18N15 18N20)},
  MRNUMBER = {4261588},
MRREVIEWER = {Robert\ Laugwitz},
       DOI = {10.1093/oso/9780198871378.001.0001},
       URL = {https://doi.org/10.1093/oso/9780198871378.001.0001},
}

@incollection {urbat:2017:eilenbergTheorems,
    AUTHOR = {Urbat, Henning and Ad\'amek, Ji\v{r}\'i{} and Chen, Liang-Ting
              and Milius, Stefan},
     TITLE = {Eilenberg theorems for free},
 BOOKTITLE = {42nd {I}nternational {S}ymposium on {M}athematical
              {F}oundations of {C}omputer {S}cience},
    SERIES = {LIPIcs. Leibniz Int. Proc. Inform.},
    VOLUME = {83},
     PAGES = {Art. No. 43, 15},
 PUBLISHER = {Schloss Dagstuhl. Leibniz-Zent. Inform., Wadern},
      YEAR = {2017},
      ISBN = {978-3-95977-046-0},
   MRCLASS = {68Q70},
  MRNUMBER = {3755336},
}

@article {almeida:1990:pseudovarieties,
    AUTHOR = {Almeida, Jorge},
     TITLE = {On pseudovarieties, varieties of languages, filters of
              congruences, pseudoidentities and related topics},
   JOURNAL = {Algebra Universalis},
  FJOURNAL = {Algebra Universalis},
    VOLUME = {27},
      YEAR = {1990},
    NUMBER = {3},
     PAGES = {333--350},
      ISSN = {0002-5240,1420-8911},
   MRCLASS = {08C99 (08B05 08B15)},
  MRNUMBER = {1058478},
MRREVIEWER = {James\ B.\ Nation},
       DOI = {10.1007/BF01190713},
}

@article{adamek:1979:cogenerationOfAlgebras,
  title={{O}n the {C}ogeneration of {A}lgebras},
  author={Ji{\v r}{\'i} Ad{\'a}mek},
  journal={Math. Nachr.},
  year={1979},
  volume={88},
  pages={373--384},
  doi = {10.1002/mana.19790880129}
}

@book {adamek:1990:automataAndAlgebras,
    AUTHOR = {Ad{\'a}mek, Ji{\v r}{\'i} and Trnkov{\'a}, V{\v e}ra},
     TITLE = {Automata and algebras in categories},
 PUBLISHER = {Kluwer Academic Publishers Group},
      YEAR = {1990}
}

@book {adamek:1990:automataAndAlgebrasInCategories,
    AUTHOR = {Ad{\'a}mek, Ji{\v r}{\'i} and Trnkov{\'a}, V{\v e}ra},
     TITLE = {Automata and algebras in categories},
 PUBLISHER = {Kluwer Academic Publishers Group},
      YEAR = {1990}
}

@book {adamek:1990:joyOfCats,
    AUTHOR = {Ad\'amek, Ji{\v r}{\'i} and Herrlich, Horst and Strecker, George
              E.},
     TITLE = {Abstract and concrete categories},
 PUBLISHER = {John Wiley \& Sons, Inc.},
      YEAR = {1990}
}

@article {adamek:2018:catApproach,
    AUTHOR = {Ad{\'a}mek, Ji{\v r}{\'i} and Milius, Stefan and Urbat, Henning},
     TITLE = {A categorical approach to syntactic monoids},
   JOURNAL = {Log. Methods Comput. Sci.},
  FJOURNAL = {Logical Methods in Computer Science},
    VOLUME = {14},
      YEAR = {2018},
    NUMBER = {2},
     PAGES = {Paper No. 9, 34},
       DOI = {10.23638/LMCS-14(2:9)2018}
}

@article {adamek:2022:categoricalViewOrderedAlgebras,
    AUTHOR = {Ad\'amek, J. and Dost\'al, M. and Velebil, J.},
     TITLE = {A categorical view of varieties of ordered algebras},
   JOURNAL = {Math. Structures Comput. Sci.},
  FJOURNAL = {Mathematical Structures in Computer Science. A Journal in the
              Applications of Categorical, Algebraic and Geometric Methods
              in Computer Science},
    VOLUME = {32},
      YEAR = {2022},
    NUMBER = {4},
     PAGES = {349--373},
       DOI = {10.1017/S0960129521000463}
}

@article {arbib:1975:adjointMachines,
    AUTHOR = {Arbib, Michael A. and Manes, Ernest G.},
     TITLE = {Adjoint machines, state-behavior machines, and duality},
   JOURNAL = {J. Pure Appl. Algebra},
  FJOURNAL = {Journal of Pure and Applied Algebra},
    VOLUME = {6},
      YEAR = {1975},
    NUMBER = {3},
     PAGES = {313--344},
      ISSN = {0022-4049,1873-1376},
   MRCLASS = {18B20 (94A30)},
  MRNUMBER = {414654},
MRREVIEWER = {S.\ Walig\'orski},
       DOI = {10.1016/0022-4049(75)90028-6},
       URL = {https://doi.org/10.1016/0022-4049(75)90028-6},
}

@article {arbib:1980:machines,
    AUTHOR = {Arbib, Michael A. and Manes, Ernest G.},
     TITLE = {Machines in a category},
   JOURNAL = {J. Pure Appl. Algebra},
  FJOURNAL = {Journal of Pure and Applied Algebra},
    VOLUME = {19},
      YEAR = {1980},
     PAGES = {9--20},
      ISSN = {0022-4049,1873-1376},
   MRCLASS = {68D30 (01A60 18B20 68D15 93Bxx)},
  MRNUMBER = {593243},
       DOI = {10.1016/0022-4049(80)90090-0},
       URL = {https://doi.org/10.1016/0022-4049(80)90090-0},
}

@book {awodey:2010:categoryTheory,
    AUTHOR = {Awodey, Steve},
     TITLE = {Category theory},
   EDITION = {Second},
 PUBLISHER = {Oxford University Press},
      YEAR = {2010},
 DOI = {10.1093/acprof:oso/9780198568612.001.0001}
}

@book {barr:1990:categoryTheory,
    AUTHOR = {Barr, Michael and Wells, Charles},
     TITLE = {Category theory for computing science},
 PUBLISHER = {Prentice Hall International},
      YEAR = {1990}
}

@book{bergman:2011:universalAlgebra,
  title={Universal algebra: Fundamentals and selected topics},
  author={Bergman, Clifford},
  year={2011},
  publisher={Chapman and Hall/CRC},
  doi = {10.1201/9781439851302}
}

@article {bloom:1976:varietiesOfOrderedAlgebras,
    AUTHOR = {Bloom, Stephen L.},
     TITLE = {Varieties of ordered algebras},
   JOURNAL = {J. Comput. System Sci.},
  FJOURNAL = {Journal of Computer and System Sciences},
    VOLUME = {13},
      YEAR = {1976},
    NUMBER = {2},
     PAGES = {200--212},
       DOI = {10.1016/S0022-0000(76)80030-X}
}

@incollection {bojanczyk:2015:recognisableLanguagesOverMonads,
    AUTHOR = {Boja\'nczyk, Miko\l{}aj},
     TITLE = {Recognisable languages over monads},
 BOOKTITLE = {Developments in language theory},
    SERIES = {Lecture Notes in Comput. Sci.},
    VOLUME = {9168},
     PAGES = {1--13},
 PUBLISHER = {Springer},
      YEAR = {2015},
       DOI = {10.1007/978-3-319-21500-6\_1}
}

@book {burris:1981:universalAlgebra,
    AUTHOR = {Burris, Stanley and Sankappanavar, H. P.},
     TITLE = {A course in universal algebra},
 PUBLISHER = {Springer},
      YEAR = {1981},
doi = {10.2307/2322184}
}

@incollection {ciancia:2019:omegaAutomata,
    AUTHOR = {Ciancia, Vincenzo and Venema, Yde},
     TITLE = {{$\Omega$}-automata: a coalgebraic perspective on regular
              {$\omega$}-languages},
 BOOKTITLE = {8th {C}onference on {A}lgebra and {C}oalgebra in {C}omputer
              {S}cience},
    SERIES = {LIPIcs. Leibniz Int. Proc. Inform.},
    VOLUME = {139},
     PAGES = {Art. No. 5, 18},
 PUBLISHER = {Schloss Dagstuhl. Leibniz-Zent. Inform.},
      YEAR = {2019},
  DOI = {10.4230/LIPIcs.CALCO.2019.5}
}

@article {colcombet:2020:automataMinimization,
    AUTHOR = {Colcombet, Thomas and Petri\c{s}an, Daniela},
     TITLE = {Automata minimization: a functorial approach},
   JOURNAL = {Log. Methods Comput. Sci.},
  FJOURNAL = {Logical Methods in Computer Science},
    VOLUME = {16},
      YEAR = {2020},
    NUMBER = {1},
     PAGES = {Paper No. 32, 28},
      ISSN = {1860-5974},
   MRCLASS = {68Q70 (18B20)},
  MRNUMBER = {4086864},
       DOI = {10.23638/LMCS-16(1:32)2020},
       URL = {https://doi.org/10.23638/LMCS-16(1:32)2020},
}

@book{ginzburg:1968:algebraicTheoryOfAutomata,
  title = {{A}lgebraic {T}heory of {A}utomata},
  author = {Ginzburg, Abraham},
  year = {1968},
  publisher = {Academic Press},
  doi = {10.1016/B978-1-4832-0013-2.50012-6}
}

@BOOK{jacobs:1999:categoricalLogic, 
author       = "Jacobs, Bart", 
title        = "{C}ategorical {L}ogic and {T}ype {T}heory",
publisher    = "North Holland", 
address      = "Amsterdam", 
year         = "1999"
}

@book {jacobs:2017:introductionToCoalgebra,
    AUTHOR = {Jacobs, Bart},
     TITLE = {Introduction to coalgebra},
 PUBLISHER = {Cambridge University Press},
      YEAR = {2017},
       DOI = {10.1017/CBO9781316823187}
}

@incollection {klima:2019:varietiesOfOrderedAutomata,
    AUTHOR = {Kl\'ima, Ond{\v r}ej and Pol\'ak, Libor},
     TITLE = {On varieties of ordered automata},
 BOOKTITLE = {Language and automata theory and applications},
    SERIES = {Lecture Notes in Comput. Sci.},
    VOLUME = {11417},
     PAGES = {108--120},
 PUBLISHER = {Springer},
      YEAR = {2019},
       DOI = {10.1007/978-3-030-13435-8\_8}
}

@incollection {linton:1965:aspectsOfEquationalCategories,
    AUTHOR = {Linton, F. E. J.},
     TITLE = {Some aspects of equational categories},
 BOOKTITLE = {Proc. {C}onf. {C}ategorical {A}lgebra},
     PAGES = {84--94},
 PUBLISHER = {Springer},
      YEAR = {1966},
DOI = {10.1007/978-3-642-99902-4_3}
}

@incollection {mellies:2023:parsingAsLifting,
    AUTHOR = {Melli\`es, Paul-Andr\'e{} and Zeilberger, Noam},
     TITLE = {Parsing as a lifting problem and the
              {C}homsky-{S}ch\"utzenberger representation theorem},
 BOOKTITLE = {Mathematical {F}oundations of {P}rogramming
              {S}emantics---{T}hirty-{E}ighth {A}nnual {C}onference},
    SERIES = {Electron. Notes Theor. Inform. Comput. Sci.},
    VOLUME = {1},
     PAGES = {Paper No. 11, 22},
 PUBLISHER = {Episciences},
      YEAR = {2023},
       DOI = {10.46298/entics.proceedings.mfps38}
}

@incollection {mellies:2026:ccfibrationOfHigherOrderRegLang,
    AUTHOR = {Melli\`es, Paul-Andr\'e{} and Moreau, Vincent},
     TITLE = {A {C}artesian {C}losed {F}ibration of {H}igher-{O}rder
              {R}egular {L}anguages},
 BOOKTITLE = {41st {A}nnual {S}ymposium on {L}ogic in {C}omputer {S}cience
              ({LICS} 2026)},
    SERIES = {LIPIcs. Leibniz Int. Proc. Inform.},
    VOLUME = {380},
     PAGES = {Art. No. 73, 25},
 PUBLISHER = {Schloss Dagstuhl. Leibniz-Zent. Inform},
      YEAR = {2026},
       DOI = {10.4230/lipics.lics.2026.73}
}

@article{pin:1995:varietyTheoremWithoutComplementation,
  title={A variety theorem without complementation},
  author={Jean-{\'E}ric Pin},
  journal = {Russ. Math.},
  volume = {39},
  pages = {80--90},
  year={1995}
}

@article {rutten:2000:universalCoalgebra,
    AUTHOR = {Rutten, J. J. M. M.},
     TITLE = {Universal coalgebra: a theory of systems},
   JOURNAL = {Theoret. Comput. Sci.},
  FJOURNAL = {Theoretical Computer Science},
    VOLUME = {249},
      YEAR = {2000},
    NUMBER = {1},
     PAGES = {3--80},
       DOI = {10.1016/S0304-3975(00)00056-6}
}

@article {slominski:1974:greatestCongruence,
    AUTHOR = {S{\l}omi\'nski, J\'ozef},
     TITLE = {On the greatest congruence relation contained in an
              equivalence relation and its applications to the algebraic
              theory of machines},
   JOURNAL = {Colloq. Math.},
  FJOURNAL = {Colloquium Mathematicum},
    VOLUME = {29},
      YEAR = {1974},
     PAGES = {31--43, 159},
      ISSN = {0010-1354,1730-6302},
   MRCLASS = {94A30 (08A05)},
  MRNUMBER = {356978},
MRREVIEWER = {David\ C.\ Rine},
       DOI = {10.4064/cm-29-1-31-43},
       URL = {https://doi.org/10.4064/cm-29-1-31-43},
}

@article{vidal:2020:congruenceBasedProofs,
  author       = {Juan Climent Vidal and
                  Enric Cosme{-}Ll{\'{o}}pez},
  title        = {Congruence-based proofs of the recognizability theorems for free many-sorted
                  algebras},
  journal      = {J. Log. Comput.},
  volume       = {30},
  number       = {2},
  pages        = {561--633},
  year         = {2020},
  doi          = {10.1093/LOGCOM/EXZ032}
}

@article {wilke:1993:algebraic,
    AUTHOR = {Wilke, Thomas},
     TITLE = {An algebraic theory for regular languages of finite and
              infinite words},
   JOURNAL = {Internat. J. Algebra Comput.},
  FJOURNAL = {International Journal of Algebra and Computation},
    VOLUME = {3},
      YEAR = {1993},
    NUMBER = {4},
     PAGES = {447--489},
       DOI = {10.1142/S0218196793000287}
}

@book {maclane98,
    AUTHOR = {Mac Lane, Saunders},
     TITLE = {Categories for the working mathematician},
   EDITION = {Second},
 PUBLISHER = {Springer},
      YEAR = {1998},
      DOI = {10.1007/978-1-4757-4721-8}

}

@article{Goguen72:MachinesInCategory,
author = {Goguen, J. A.},
title = {{Minimal realization of machines in closed categories}},
volume = {78},
journal = {Bulletin of the American Mathematical Society},
number = {5},
publisher = {American Mathematical Society},
pages = {777 -- 783},
year = {1972},
}

\end{document}